\documentclass[12pt,reqno]{amsart}      %use amsart package with 11pt font
%\usepackage[dvips]{changebar}
                                        %size and reqno (equation number on
                                        %the right ) options
\usepackage{amssymb,amsthm}             %get some extra symbols  and
                                        %environments defined 
%\usepackage{url}
\usepackage[dvips]{hyperref}

\usepackage{eucal}                     %make mathcal use Euler font
\numberwithin{equation}{section}        %tie equation numbering to
                                        %section counter 
%general definitions

%Marginal Note
%\newcommand{\mnote}[1]{}     %no marginal notes

%Marginal Note
%\newcommand{\mymnote}[1]{}     %no marginal notes

\renewcommand{\Re}{{\mathbb R}}         %the real numbers
       %the natural numbers
\newcommand{\la}{\langle}               %left angle bracket
\newcommand{\ra}{\rangle}               %right angle bracket
\newcommand{\half}{\frac{1}{2}}         %half
	%quarter
        %third
        %sixth
		%\sqrt{-1} 
\newcommand{\tr}{{\text{\rm tr}}}	%trace
\newcommand{\ptr}{{\text{\rm tr}_\pme}}	%trace

%
%local definitions
%
%tensors
%physical metric etc denoted with tilde, together with cov der, curvature
%etc. 
\newcommand{\pme}{{\tilde g}}     %physical spatial metric
\newcommand{\pK}{{\tilde K}}      %physical second f.f.
\newcommand{\pLapse}{\tilde N}    %physical lapse
\newcommand{\pShift}{{\tilde X}}  %physical shift

                  %physical tension field 
\newcommand{\Vol}{\text{Vol}}

%rescaled metric etc denoted without tilde
\newcommand{\rme}{g}              %rescaled spatial metric
               %rescaled second f.f.
\newcommand{\rKtr}{\Sigma}        %rescaled tracefree second f.f.
\newcommand{\rLapse}{N}           %rescaled lapse
\newcommand{\rShift}{X}           %rescaled shift
\newcommand{\rGamma}{\Gamma}      %rescaled Christoffel
\newcommand{\rnabla}{\nabla}      %rescaled cov. der. 
\newcommand{\rDelta}{\Delta}      %rescaled Laplacian
\newcommand{\rR}{R}               %rescaled curvature

\newcommand{\nLapse}{\omega}
%background metric
\newcommand{\bme}{\gamma}

\newcommand{\bGamma}{\Gamma[\bme]}
\newcommand{\bnabla}{\nabla[\bme]}
\newcommand{\bR}{R[\bme]}

\newcommand{\bDelta}{\Delta[\bme]}

\newcommand{\SO}{\text{\rm SO}}         %special orthogonal group

%general curvature tensors 

\newcommand{\Ric}{\text{Ric}}
\newcommand{\Scal}{\text{Scal}}
\newcommand{\Riem}{\text{Riem}}

%spaces
\newcommand{\BB}{\CMcal B} 	%ball
\newcommand{\VV}{\CMcal V} 	%deformation space 
\newcommand{\NN}{\CMcal N} 	%shadow manifold
\newcommand{\Slice}{\CMcal S}	%slice
\newcommand{\UU}{\CMcal U}	%neighborhood
\newcommand{\DD}{\CMcal D}	%diffeomorphism group
\newcommand{\MM}{\CMcal M}	%space of metrics
\newcommand{\CC}{\CMcal C}	%constraint set 
	%tangent space 
\newcommand{\Ein}{\CMcal E}
\newcommand{\SliceCC}{\Slice_{\CC}} %slice in constraint set

%energies
%\newcommand{\EEtot}[1]{E_{#1}}
%\newcommand{\EE}[1]{{\overset{(#1)}{\CMcal E}}}
%\newcommand{\GG}[1]{{\overset{(#1)}{\Gamma}}}
%\newcommand{\Ejunk}[1]{{\overset{(#1)}{\CMcal F}}}
%\newcommand{\JJtot}[1]{J_{#1}}
%\newcommand{\JJ}[1]{{\overset{(#1)}{\CMcal J}}}

\newcommand{\EEtot}[1]{E_{#1}}
\newcommand{\EE}[1]{{\CMcal E}_{(#1)}{}}
\newcommand{\GG}[1]{{\Gamma}_{(#1)}{}}
\newcommand{\JJtot}[1]{J_{#1}}
\newcommand{\JJ}[1]{{\CMcal J}_{(#1)}{}}

\newcommand{\Cenerg}{c_{E}}

%operators
\newcommand{\Ppara}{{\mathbb P}^{\parallel}}
\newcommand{\Pperp}{{\mathbb P}^{\perp}}
		%Hessian
\newcommand{\LL}{\CMcal L}
\newcommand{\LinEin}{\CMcal A}       %linearized Einstein eq. 

\newcommand{\Lie}{\mathcal L}		%Lie derivative

%maps
\newcommand{\Id}{\mathbf i}		%identity map
\newcommand{\Shade}{\CMcal P} 	%shadow map

%ambient spaces and fields
\newcommand{\aM}{\bar M} 
\newcommand{\ame}{\bar g}

%miscellaneous
\newcommand{\eps}{\epsilon}

\newcommand{\TTpara}{\TT \, \parallel} 
\newcommand{\TTperp}{\TT \, \perp} 
\newcommand{\cR}{\overset{\circ}{R}}
\newcommand{\genus}{\text{\rm genus}}
\newcommand{\Tens}{{\mathcal T}}
\newcommand{\Sym}{{\mathcal S}}
\newcommand{\tens}{\otimes} 
 
\renewcommand{\div}{\delta}

\newcommand{\hLapse}{\hat N}            %perturbation in N
\newcommand{\tM}{{\tilde M}}            %matrix
\newcommand{\del}{\delta}

		%Hamiltonian constraint quantity
  %reduced Hamiltonian

\newcommand{\FF}{\CMcal F}	%junk

\newcommand{\length}{\text{\em (length)}}
\newcommand{\TT}{\text{\rm TT}}

\newcommand{\lam}{\lambda} 
\newcommand{\lammin}{\lambda_{\text{min}}} 
\newcommand{\lamminmod}{\lammin'}

%some temporary definitions to make local existence section
%compile 

\newcommand{\calX}{\CMcal X}

\newcommand{\hatV}{\widehat{V}}

\newcommand{\taubar}{\overline{\tau}} 
%%%%%%%%theorems etc.%%%%%%%%%%%%%%%%%%%%%%
\theoremstyle{plain}
\newtheorem{thm}{Theorem}[section]
\newtheorem{cor}[thm]{Corollary}
\newtheorem{lemma}[thm]{Lemma}

\newtheorem{definition}[thm]{Definition}
\newtheorem{prop}[thm]{Proposition}
\theoremstyle{remark}
\newtheorem{remark}[thm]{Remark}

%%%%%%%%%%Title etc%%%%%%%%%%%%%%%%%%%%%%%%
%\title{On the global evolution problem in 3+1 gravity}
\title{Einstein spaces as attractors for the Einstein flow} 
\author[L. Andersson]{Lars Andersson$^1$}
\thanks{$^1$Supported in part by 
the NSF, with grants 
DMS-0407732 and DMS-0707306 to the University of Miami}
\address{Department of Mathematics\\
University of Miami\\
Coral Gables, FL 33124\\
USA \\
\and 
Albert Einstein Institute \\
Am M\"uhlenberg 1\\
D-14476 Golm\\
Germany
}
\email{laan\char'100aei.mpg.de}

\author[V. Moncrief]{Vincent Moncrief$^2$}
\thanks{$^2$Supported in part by the NSF, with grants PHY-0354391 and
PHY-0647331 to Yale University}
\address{Department of Physics and Department of Mathematics\\
Yale University\\
P.O. Box 208120\\
New Haven, CT 06520, USA}
\email{vincent.moncrief@yale.edu}

\hyphenation{Bian-chi Ya-ma-be}
%\setcounter{tocdepth}{1}
%\includeonly{flat2}
%\includeonly{lastpart}

%\makeindex

\begin{document}

\setcounter{tocdepth}{2}
%\date{\today \ {\em File:\jobname{.tex}}}
\date{August 5, 2009}

% allow display breaks (argument 1-4)

\allowdisplaybreaks[2]

\begin{abstract}
In this paper we prove a global existence theorem, in the direction of
cosmological expansion, for sufficiently small perturbations of a family of 
$n+1$-dimensional, 
spatially compact spacetimes which generalizes the 
$k=-1$ Friedmann--Robertson--Walker vacuum
spacetime. This work extends the result from \cite{AMF}.
The background spacetimes we consider are Lorentz cones over
negative Einstein spaces of dimension $n \geq 3$. 

We use a variant of the constant mean curvature, spatially harmonic (CMCSH) 
gauge introduced in
\cite{AML}. An important difference from the 3+1 dimensional case is that one
may have a nontrivial moduli space of negative Einstein geometries. This
makes it necessary to introduce a time-dependent background metric, which is
used to define the spatially harmonic coordinate system which goes into the
gauge. 

Instead of the Bel-Robinson energy used in \cite{AMF}, we here use an 
expression analogous to the wave equation type of energy introduced in
\cite{AML} for the Einstein equations in CMCSH gauge. 
In order to prove energy estimates, it turns out to be necessary to assume
stability of the Einstein geometry. Further, for our analysis it is necessary
to have a smooth moduli space. Fortunately, all known examples of negative
Einstein geometries satisfy these conditions. 

We give examples of families of Einstein
geometries which have non-trivial moduli spaces. A product construction
allows one to generate new families of examples. 

Our results demonstrate causal geodesic completeness
of the perturbed spacetimes, in the expanding direction, and show that the
scale-free geometry converges towards an element in the moduli space of
Einstein geometries, with a rate of decay depending on the stability
properties of the Einstein geometry. 
\end{abstract}

\maketitle

%\tableofcontents

\section{Introduction} \label{sec:intro} 
Let $M$ be a compact, connected,
orientable manifold of dimension $n \geq 2$, and assume that $M$ admits a
smooth Riemannian Einstein metric $\bme$ with negative Einstein constant. 
After a
trivial rescaling, we can suppose that 
$$
\Ric[\bme] = - \frac{n-1}{n^2} \bme, 
$$
where $\Ric[\bme]$ is the Ricci tensor of $\bme$. With this normalization,
the Lorentz cone spacetime $\aM = (0,\infty)\times M$ with metric 
$$
\ame = -dt\otimes dt + \frac{t^2}{n^2} \bme
$$
is globally hyperbolic and Ricci flat, i.e. a solution of the vacuum Einstein
equations in dimension $n+1$, and admits a homothetic Killing field $Z = t
\partial_t$, such that $\Lie_Z \ame = -2 \ame$.  Such
spacetimes are said, by virtue of this global homothety, to be continuously
self-similar. 

We showed in an earlier paper \cite{AMF}
that in the case $n=3$, Lorentz cone
spacetimes are stable to the future. The main result of \cite{AMF} is 
that for constant mean curvature (CMC) Cauchy data for
the vacuum Einstein equations close to the standard data for a Lorentz cone, the
maximal future Cauchy development is globally foliated by CMC Cauchy
surfaces. Further, the Cauchy development is causally 
geodesically complete to the
future, and the induced spatial metric on the CMC Cauchy surfaces converges,
after a suitable rescaling, to the background metric $\bme$. 
 
The result
in \cite{AMF} required that the background metric $\bme$ satisfy a
nontrivial rigidity condition, namely that it allows no nontrivial, traceless
Codazzi tensors. 
Kapovich \cite[Theorem 2]{kapovich:deform} has proved the existence
of compact hyperbolic spaces with this property, 
see also the discussion in \cite[\S 2.4]{AMF}
%Examples of such spacetimes are known, cf. \cite{AMF}. 
The rigidity condition corresponds to the assumption that the moduli space of
flat spacetimes at the Lorentz cone $(\aM, \ame)$ is trivial. The rigidity
condition was later removed by Reiris \cite{reiris-2005-}. 

The argument in
\cite{AMF} and also in \cite{reiris-2005-} relied on Bel-Robinson type energies to
control the (fully non-linear) perturbations, and
is therefore essentially restricted to the 3+1 dimensional case, see however \cite{senovilla}. In this
paper, our aim is to extend the analysis to general dimension. In order to do
this, we introduce a new family of energies for the Einstein equations, which
are not curvature-based. 

\subsection{Lorentz cone spacetimes} 
In the case $n=2$, the Einstein condition implies that 
$(M,\gamma)$ is a hyperbolic surface.
Since other, more far-reaching techniques are available in two
spatial dimensions \cite{mess:const:curv,messnotes,andersson:etal:2+1grav,moncrief:teichmuller}, 
we shall here concentrate on dimensions
$n \geq 3$. 

When $n=3$, $\bme$ is necessarily hyperbolic, i.e. has constant
negative sectional curvature, and indeed hyperbolic metrics provide the most
familiar special cases of negative Einstein metrics in all higher dimensions
as well, but when $n\geq 4$ many examples of non-hyperbolic, negative
Einstein metrics are known to exist \cite{besse:einstein}.

By Mostow rigidity, hyperbolic metrics are unique up to isometry, and trivial
homothetic rescalings, for all dimensions $n \geq 3$ but when $n=2$ there
exists, for each higher genus surface, a finite dimensional Teichm\"uller
space of non-isometric hyperbolic metrics. 
Non-trivial connected
finite dimensional manifolds of negative Einstein metrics can also occur when
$n\geq 4$ but, as we shall see below, these cannot contain a hyperbolic
member, since higher dimensional hyperbolic metrics are always isolated as
Einstein metrics. 

When $\bme$ is hyperbolic, as is necessarily true for $n=2$
and $n=3$, $\ame$ is actually flat and indeed the spacetime 
$(\aM,\ame)$ 
can be regarded as the quotient of the interior of the future light
cone of a point in $n+1$ dimensional Minkowski space by a 
%co-compact
%\mnote{is this correct terminology?} 
subgroup of the proper orthochronous Lorentz group which fixes that
point. 
When $\bme$ is only Einstein however and does not have constant curvature
the metric $\ame$ is not in general flat. 

%In case $M$ is hyperbolic, 
%then in fact $(\aM, \ame)$ is a maximal
%globally hyperbolic spacetime, i.e. it is not contained in any larger 
%(say $C^2$ regular) spacetime.  
%However, if we allow $M$ to be a general negative Einstein space, 
%this is an open question. 
%\mnote{discuss maximality of cone spacetimes} 

If we let $\pme$ and $\pK$ 
denote respectively the first and second fundamental
forms induced on a $t=$constant hypersurface of $(\aM, \ame)$, 
then 
$$
\pme = \frac{t^2}{n^2} \bme, \quad \pK = - t^{-1} \pme
$$
%$$
%g = \frac{n}{(n-1)} \frac{1}{\tau^2} \bme, \qquad K = \frac{\tau}{n} g .
%$$
Furthermore, the mean curvature $\tau = \tr_g K = g^{ij} K_{ij}$ is given by
$\tau = - n t^{-1}$ so that the
hypersurfaces of constant $t$ are in fact constant mean curvature (CMC)
slices labelled by the value of their mean curvature which could be used as a
time function. As $\tau$ ranges over $(-\infty,0)$ the spaces
evolve from a zero volume ``big bang'' to an infinite volume limit of
cosmological expansion. 

It is
possible to prove directly, cf. \cite{fischer:moncrief:n+1}, 
that any vacuum spacetime (or, by a
straightforward generalization,   non-vacuum spacetime obeying a suitable
energy condition), which admits a compact, orientable CMC Cauchy hypersurface,
and a non-trivial, proper homothetic Killing field must in fact be a Lorentz
cone spacetime of the
type described above (and so, in particular, devoid of
matter). 
In a suitable time gauge, the reduced Hamiltonian 
takes the value 
%is given, as a function on the space of CMC solutions of the constraint
%equations, 
%by 
$$
H_{\text{reduced}} = |\tau|^n \Vol(M,\pme), 
$$ 
cf. \cite[\S 4.1]{fischer:moncrief:n+1}. The Lorentz cone spacetimes can 
also be characterized uniquely as critical points for the reduced Hamiltonian
\cite[Theorem 3]{fischer:moncrief:n+1}, when the latter is re-expressed in
terms of its natural canonical variables. 

\subsection{Rescaled Einstein equations} 
A closely related charaterization of these spacetimes is that they are the
unique fixed points of what we call the rescaled Einstein equations.  
Noting that the mean curvature has
dimensions $\length^{-1}$, adopting CMC slicing with mean curvature $\tau =
\tr_g K$ as ``time'' and taking the spatial coordinates $(x^i)$ to be
dimensionless we find that the dimensions of the embedding variables of a
Cauchy surface in spacetime are given by 
\begin{align*} 
[\pme_{ij}] &= \length^2, \quad [\pK_{ij}] = \length \\
[\pLapse] &= \length^2, \quad [\pShift^i] = \length
\end{align*} 
We define rescaled, dimensionless 
variables $( \rme, \rKtr, \rLapse, \rShift)$ by
setting 
\begin{align*} 
\rme_{ij} &= \tau^2 \pme_{ij}, \quad \rKtr_{ij} = \tau(\pK_{ij} - \frac{\tau}{n}
\pme_{ij} ) \\ 
\rLapse &= \tau^2 \pLapse, \quad \rShift^i = \tau \pShift^i
\end{align*} 
and rewrite the field equations in terms of these quantities. 
For a treatment using the canonical ADM variables, see 
\cite{fischer:moncrief:n+1}.

When $\tau$ is taken to serve as time all of the conventional ADM equations
--- constraints, evolution equations and gauge fixing equations needed to
enforce the CMC slicing --- become non-autonomous since the mean curvature
$\tau$ appears in all of them. When these are expressed in terms of the
rescaled variables however all of this explicit $\tau$-dependence is scaled
away except for derivatives with respect to $\tau$ which take the
dimensionless form $\tau\partial_\tau$.
One can remove this
final explicit $\tau$-dependence by simply defining a new, 
dimensionless, time coordinate $T$
by 
$$
T = - \ln(\tau/\tau_0)
$$
and reexpressing $\tau\partial_\tau$ as $-\frac{\partial}{\partial T}$. 
Note that the
natural range of $T$ is $\Re$ whereas $\tau$ only ranged over $(-\infty,0)$. 
The
transformed field equations are given explicitly in equations
(\ref{eq:Einrescaled}, \ref{eq:Conrescaled}), 
while the elliptic equations determining the rescaled lapse
function and shift vetor field are given in 
(\ref{eq:defining-resc}). 
%All of these are
%easily seen to be autonomous when $\tdt$ is reexpressed as
%$-\frac{\partial}{\partial T}$. 
The harmonic spatial coordinate condition that
we shall impose later will not disturb the autonomous character of the field
equations but one should note that the inclusions of a cosmological constant
or non-scale-invariant matter sources would disturb this
character. 

While autonomous field equations are not strictly essential for what we wish
to do it is convenient to begin with the simplest cases and to deal with
generalizations later. 

The rescaled variables for the Lorentz cone 
spacetimes defined above are given by 
\begin{align*} 
\rme &= \bme, 
\quad \rKtr = 0 \\
\rLapse &= n, \quad \rShift = 0
\end{align*} 
The time independence of these quantities
shows directly that they are indeed fixed points of the rescaled
equations. By 
\cite[Theorem 2]{fischer:moncrief:n+1} 
they are the only fixed points
of this system and, as we have mentioned, are the only solutions admitting a
non-trivial proper and globally defined homothetic Killing field (namely $Z =
\partial_T$). 

\subsection{Linearized analysis} 
In 
\cite[\S 3]{fischer:moncrief:n+1}, the linearized equations for perturbations
about an arbitrary fixed point were studied. The
transverse-traceless ($\TT$) perturbations 
can be naturally
decomposed in terms of the $\TT$ eigentensors of the operator $\LL$
defined by 
$$
\LL h_{ab}  = - \Delta h_{ab} -  2 R_{acbd} h^{cd}
$$
where $\Delta = \bme^{cd} \nabla_c \nabla_d$, with $\nabla$ the
covariant derivative defined with respect to $\bme$, and $R_{abcd}$ the
Riemann tensor of $\bme$. The above operator is closely related to 
the one defined in  \cite[equation (3.12)]{fischer:moncrief:n+1}, and also  
to the
Lichnerowicz Laplacian $\Delta_L$, cf. equation (\ref{eq:DeltaL}). 
Note that the 
compatibility of this operator with the $\TT$ character of the eigentensor
%$u^{\TT}$ 
depends upon the fact that $\bme$ is Einstein, see section
\ref{sec:stabein} for details.

The eigenvalues $\lambda$ of $\LL$ are all real, since $\LL$ is
self-adjoint with respect to the natural $L^2$ inner product. A separation of
variables argument may be used to analyze the linearized, rescaled, Einstein
equations, cf. section \ref{sec:linearized} below, see also
\cite{fischer:moncrief:n+1}.   
The character of the corresponding solution
depends upon the value of $\lambda$ as follows. 
If 
$\lambda > \frac{(n-1)^2}{4n^2}$, then the characteristic equation has a
complex pair of roots with real part $-(n-1)/2$, and hence there is a
universal exponential rate of decay $-(n-1)/2$, in the time $T$. 
If $0 < \lambda < \frac{(n-1)^2}{4n^2}$ the
characteristic equation has a pair of negative real roots. 
In this case we have an ``anomalous'' rate of decay
depending on $\lambda$. 
In the marginal
case $\lambda = \frac{(n-1)^2}{4n^2}$, the system has a resonance. We avoid
dealing directly with the marginal case by considering a slightly decreased
$\lambda$. 
%, with all
%solutions decaying exponentially. 
If $\lambda = 0$, we would have a 
``neutral'' mode which does
not decay. Typically, this situation corresponds to the existence of a
nontrivial moduli space of Einstein metrics containing $\bme$. 

If $\lambda < 0$ were to occur, the characteristic equation would
have a root with positive real part and the corresponding solution would 
grow exponentially rather than decay. It is an open question whether any such
``unstable'' Einstein spaces exist. The above discussion
motivates calling $\bme$ stable if $\LL$ has 
non-negative spectrum. We shall review what
is known below, cf. section \ref{sec:stabein}.

\subsection{Stable Einstein spaces and moduli spaces} \label{sec:stabmod}
When $n=3$ an Einstein metric is necessarily hyperbolic (in the negative case
of interest here) and Mostow rigidity excludes the possibility of deforming
the hyperbolic (hence Einstein) structure. In this sense $n=3$ is the most
``rigid'' dimension -- a manifold either admits no Einstein structure or
precisely one. 

For $n> 3$ Mostow rigidity still applies but now the new possibility arises
of having negative Einstein spaces that are not hyperbolic. Many families of
such (negative) Einstein spaces are known to exist and whenever a
chosen background metric $\bme$ belongs to a smooth (necessarily finite
dimensional modulo gauge degrees of freedom) 
such family the linearized equations will always admit a
corresponding (finite dimensional) space of neutral modes with 
$\lambda = 0$. These represent the tangent space, at the
given background, to the space of self-similar
spacetimes. 
Such smooth families of self-similar spacetimes, determined by the
corresponding families of negative Einstein metrics, are expected to form
``center manifolds'' for the dynamical system defined by the rescaled
Einstein equations and we shall see below that this is in fact the case. 

If there are no obstructions to integrating
an infinitesimal Einstein deformation to a curve of Einstein structures, then 
the moduli space is a manifold, cf. \cite[\S 12.F]{besse:einstein}. We refer
to such moduli spaces as integrable, cf. definition \ref{def:integrable}
below. 
In particular, there are
examples of negative Einstein spaces contained in an integrable 
moduli space, such that $\LL$ has non-negative spectrum.

One such family is
given by negative K\"ahler-Einstein metrics, cf. section \ref{sec:examp-stab}.
Hyperbolic metrics in dimension $n \geq 3$ 
are rigid, in the sense that the moduli space of Einstein
metrics is trivial, and further are strictly stable, 
in the sense that the spectrum of
$\LL$ is positive, see section \ref{sec:stable}. 
If
$n=2$ however one can show that zero is always in the spectrum of $\LL$, 
since every $\TT$ tensor on a higher genus surface is
a traceless Codazzi tensor \cite{andersson:etal:2+1grav}. 
This corresponds precisely to the
presence of a full Teichm\"uller space of self-similar solutions to the
Einstein equations to which the
tangent space at any one corresponds to the space of ``neutral modes''
defined by $\TT$ tensors. This possibility arises precisely for
$n=2$ by virtue of the failure of Mostow rigidity for hyperbolic structures
to hold for surfaces.

\subsection{Stability of Lorentz cone spacetimes} 
The main result of this paper gives a nonlinear stability result for Lorentz
cone metrics, which generalizes the results of \cite{AMF,reiris-2005-} from the 3+1
dimensional to the $n+1$ dimensional case. The result requires that $\bme$ be
stable in the sense discussed in section \ref{sec:stabmod} above, and that
$\bme$ be either rigid, or contained in an 
integrable moduli space of Einstein structures. For data sufficiently close
to the standard data for the Lorentz cone over $\bme$, the rescaled
geometry tends in the expanding direction 
to a limit in the moduli space of $\bme$. It is this
fact which motivates the title of the paper. 
For the case of Ricci flow, stable Einstein spaces also play the role of
attractors, in a sense which is closely related to the one discussed above,
see \cite{dai:etal:kahler}.  

The idea that (stable) Einstein
spaces are attractors for the Einstein flow is motivated by 
the linearized analysis in
\cite{fischer:moncrief:n+1}, see section \ref{sec:linearized} below,
as well as by the fact that the reduced
Hamiltonian $H_{\text{reduced}}$ has positive semi-definite Hessian at the
Lorentz cone data exactly when $\bme$ is stable, 
together with the observation that
$H_{\text{reduced}}$ is monotone decreasing to the future, 
cf. \cite{fischer:moncrief:n+1}.

The main result in this paper is analogous to
the result of \cite{andersson:etal:2+1grav}, where it is
shown that the rescaled geometry of the level sets of the mean curvature time
function converges to a point in Teichm\"uller space, with the Einstein
moduli space playing the same role as Teichm\"uller space.

The work in \cite{AMF} relied on the analysis of the Einstein equations in
CMCSH gauge, i.e. CMC time gauge with spatial harmonic coordinates. 
In the present
work, due to the presence of nontrivial moduli spaces of Einstein background
metrics, it is necessary to use a generalization of the CMCSH gauge, which
allows for a time-dependent background metric for the spatial harmonic
coordinates. 
The time-dependent background metric is determined by a so-called
shadow metric condition, which requires that the difference between the
rescaled metric $\rme$ and the background metric used to define the spatial
harmonic gauge be $L^2$-orthogonal to the deformation space (or premoduli
space, in the terminology of \cite{besse:einstein}). 
%This modification of the
%gauge condition is essential for the stability result which is the main
%theorem of this paper. 
The presence of a nontrivial moduli space makes
necessary some rather delicate considerations in order prove the required 
energy 
estimates.

\subsection{Overview of this paper} 
Section \ref{sec:stabein} gives the necessary background material on negative
Einstein spaces, and introduces the notion of stability. In section
\ref{sec:examp-stab} we collect some known examples of stable negative
Einstein spaces with integrable moduli space. In section \ref{sec:product} we
show that Cartesian products of stable spaces are stable, and that taking
products of stable spaces with integrable moduli space, yields new spaces
with the same property. This allows us to construct large families of
examples where our results apply. 

Next, in section \ref{sec:rescaled}, we introduce the rescaled system of
Einstein equations which will be studied, together with the shadow metric
condition, that is the gauge condition generalizing the CMCSH condition of \cite{AML}
which we use in the case when the moduli space is nontrivial. Section
\ref{sec:locex} discusses the proof of local wellposedness for the rescaled
Einstein equations with shadow metric gauge, based on the work in
\cite{AML}. 

The linearized, rescaled, Einstein equations are introduced and analyzed in
section \ref{sec:linein-energ}. This section also contains an analysis of the
damped oscillator equation which arises from a separation of variables of
these same linearized equations. The behavior of the solutions of
this equation was studied in \cite{fischer:moncrief:n+1}. However, for the
present purposes, we need an
energy argument which yields the correct
decay estimates. The energy for the damped oscillator equation is analyzed in
section \ref{sec:damped}. In section \ref{sec:energy}, this analysis is used
as the basis for a definition of energies for the full rescaled 
Einstein equations. The energies we use have a lot in common with the energies
used for the local existence proof in CMCSH gauge, cf. \cite{AML}. 
Finally, section \ref{sec:global} gives the statement and proof of the main
results of the paper.  

The appendix \ref{sec:shadow:gauge} provides the proof of Lemma
\ref{lem:shadow-exist-met}. 

\subsection{Preliminaries and notation}  \label{sec:prelim}
Let $M$ be a compact manifold of dimension $n \geq 2$ and let $\MM$ denote the
space of Riemannian metrics on $M$. For $g \in \MM$, we denote by $\Riem,
\Ric, \Scal$, the Riemann and Ricci tensor, and the scalar curvature,
respectively. We shall often use index notation, with lower case latin
indices running over $1,\dots,n$ and greek indices running over $0,\dots,n$. 
The index versions of $\Riem$
and $\Ric$ are 
$R_{abcd}$, and $R_{ab} = R^c{}_{acb}$, respectively. The 
Christoffel symbols of $g$
are denoted by $\Gamma^i_{mn}$. We shall sometimes indicate that a curvature 
tensor or
Christoffel symbol is
defined with respect to a special metric, say $\bme$, by writing,
e.g., $\bR_{abcd}$, $\bGamma_{mn}^i$. 

%For $\alpha \in \Re$, $\MM_\alpha$ denotes the space of $g \in \MM$ such
%that $\Scal = n\alpha$, where $\Scal$ is the scalar curvature of $g$. 

We shall often work in the context of Sobolev regular metrics. For an integer
$s$, let $H^s$
denote the $L^2$ Sobolev spaces on $M$, defined with respect to some once and
for all given background metric. We use the notation $|| \cdot ||_{H^s}$ for
the $H^s$ norm. For $s > n/2
+1$, denote by $\MM^s$, 
%$\MM^s_{\alpha}$ and 
$\Ein^s_{\alpha}$ the spaces
of metrics, 
%constant scalar curvature metrics, 
and Einstein metrics with Einstein constant $\alpha$, of
Sobolev class $s$. 
%We write $\DD^{s+1}$ for the space of $H^{s+1}$ regular
%diffeomorphisms of $M$ to itself. 
For an Einstein metric, we may without loss of
generality, by working in harmonic coordinates, assume that it is $C^\infty$ or in fact real analytic, see
\cite{deturck:kazdan:regularity}. 
For most standard situations we
leave it to the reader to fill in the analytical details and drop the Sobolev
index from our notation.

\section{Negative Einstein spaces} \label{sec:stabein}
In this section we shall review some material on negative Einstein spaces
which is needed in the rest of the paper. We emphasize the notion of
stability for Einstein spaces, cf. section \ref{sec:stable}. 
The book \cite{besse:einstein} is a good general reference on Einstein
metrics. In particular, \cite[Chapter 12]{besse:einstein} contains a
discussion of the moduli space of Einstein structures. In
\cite{besse:einstein}, the space of Einstein geometries on a compact manifold
$M$ is studied as the space of metrics of unit volume, modulo
diffeomorphisms, which solve the equation 
$$
\Ric = \frac{1}{n} \left (\int_M \mu_g \Scal \right ) g
$$
%with negative scalar curvature. 
The scalar curvature is locally constant on the space of
Einstein geometries, cf. \cite[Corollary 12.52]{besse:einstein}.
In this paper, we are
interested only in connected components of the space of Einstein
geometries with fixed negative Einstein constant. 
Thus, we fix $\alpha < 0$ and consider without loss of generality the space 
$\Ein_\alpha$ of Einstein metrics on $M$ with
Einstein constant $\alpha$, i.e. the space of solutions to the Riemannian 
Einstein equation 
$$
\Ric = \alpha g .
$$
We assume that $\Ein_\alpha$ is non-empty.
The results from \cite{besse:einstein} specialize to the situation considered
here. 
As
in \cite{besse:einstein}, we shall work with the premoduli space $\Ein_\alpha
\cap \Slice$, where $\Slice$ is a slice for the diffeomorphism group. 
Given a metric $\gamma_0 \in \Ein_\alpha$ 
we refer to the connected component of the premoduli space of $\gamma_0$ as
the deformation space of $\gamma_0$. The moduli space of $\gamma_0$ is
the quotient of the premoduli space by the isometry group of $(M,\gamma_0)$,
which in case $\alpha < 0$ is finite, see \cite[\S 12.C]{besse:einstein} for
further discussion. 
%The moduli space of Einstein
%metrics with Einstein constant $\alpha$ is $\Ein_\alpha / \DD$ where $\DD$ is
%the diffeomorphims group of $M$. 

\subsection{The linearized Einstein equation} \label{sec:linein}
Denote by $\LinEin$ the 
%linearized Einstein 
operator, 
$$
\LinEin h = 2 D (\Ric - \alpha g) h ,
$$
i.e. twice the Frechet derivative of $g \mapsto \Ric - \alpha g$, in the
direction $h$, evaluated at $\bme \in \Ein_\alpha$. 
%In the rest of this
%subsection all expressions are evaluated at $\bme$. 
Let $\LL$ be given by 
\begin{equation}\label{eq:LLdef} 
\LL h = - \Delta h - 2 \cR h
\end{equation} 
where $\Delta = \nabla^a \nabla_a $ and 
\begin{equation}\label{eq:cRdef}
\cR h_{ab} = R_{acbd} h^{cd} 
\end{equation} 
Then
$$
\LinEin h  = \LL h - 2 \div^* \div h - \nabla d (\tr h)
$$
where 
$$
\div h_a = \nabla^b h_{ab}, \quad \div^* \xi_{ab} = - \half (\nabla_a \xi_b +
\nabla_b \xi_a) 
$$
Symmetric two-tensors with
vanishing trace and divergence, i.e. elements of $\ker \div \cap \ker \tr$  
play an important role in analyzing the 
linearized Einstein equation. Such tensors are called
transverse and traceless or $\TT$ tensors. 
The space 
$$
\ker \LinEin \cap \ker \div \cap \ker \tr
$$ 
is the space of infinitesimal Einstein deformations in our setting, cf. 
\cite[Theorem 12.30]{besse:einstein} for the analogous statement in their
setting. 

The operator $\LL$ is a self-adjoint 
elliptic operator and since by assumption $M$ is
compact, $\LL$ has discrete spectrum and finite dimensional kernel. 

We remark that $\LL$ is closely related
to the Lichnerowicz Laplacian $\Delta_L$ defined by 
\begin{equation}\label{eq:DeltaL}
\Delta_L h = \LL h + 2\Ric \circ h 
\end{equation} 
where for two symmetric tensors $u,v$, 
$$
(u \circ v)_{ab} = \half ( u_{ac} v^c{}_b + v_{ac} u^c{}_b ) .
$$
Let $\Delta_H = d \div + \div d$ be the Hodge Laplacian, where $\div$ is the
adjoint of $d$. 
For $\bme \in \Ein_\alpha$ the commutation
formulas 
$$
\div \LL u = (\Delta_H  - 2 \alpha) \div u 
$$
and 
%For the trace, a calculation shows that in the Einstein case, 
$$
\tr \LL u = (- \Delta  - 2 \alpha ) \tr u
$$
hold. 
The Hodge Laplacian acts on 
on one-forms by 
$$
\Delta_H \xi = (- \Delta + \Ric) \xi 
$$
In particular, $\Delta_H$ has non-negative spectrum. 
It follows from the commutation formulas that $\LL$ maps $\TT$ tensors to
$\TT$ tensors. Further, if $\alpha < 0$, then $\ker \LL \subset \ker \div
\cap \ker \tr$, i.e. a subspace of the space of $\TT$ tensors. 

For a symmetric 2-tensor $h$ we have the decomposition 
\begin{equation}\label{eq:TTsplit} 
h = f g + h^{\TT} + \Lie_Y g , 
\end{equation} 
valid at any metric $g$, where $f$ is a function, $h^{\TT}$ is a $\TT$ tensor
with respect to $g$, and $Y$ is a vector field. 
The equation $\Ric = \alpha g$ 
is covariant which implies that at $\bme \in \Ein_\alpha$, 
$\LinEin ( \Lie_Y \bme) = 
0$, for any vector field $Y$. 
In the rest of this subsection, we evaluate
$\LL$ and $\LinEin$ at $\bme \in \Ein_\alpha$. 
Since $\LinEin h^{\TT} = \LL h^{\TT}$, a calculation shows 
\begin{equation}\label{eq:LinEin}
\LinEin h = \LL h^{\TT} + [(-\Delta - 2\alpha) f] \bme + (2-n) \nabla df 
\end{equation}
Taking the trace, we find that if $\alpha < 0$,
% and $n \geq 2$, then 
$h \in \ker \LinEin$ only if $f=0$. This shows
\begin{equation}\label{eq:kerLinEin} 
\ker \LinEin = \ker \LL + \{ \Lie_Y \bme \ , \ Y \text{ vector field on $M$} \}
\end{equation} 
Let $\Ppara$ denote the $L^2$-orthogonal projection in the space of symmetric
2-tensors onto the
the finite dimensional kernel $\ker \LL$ and similarly let $\Pperp$ 
%= \Id - \Ppara$ 
be the $L^2$-orthogonal projection in the space of symmetric 2-tensors 
onto the $L^2$-orthogonal complement of $\ker \LL$ in
the space of $\TT$ tensors with respect to $\bme$. 

Given a symmetric 2-tensor $u$, we will
often use the notation 
$u^{\TT}$, $u^{\TTpara}$ and $u^{\TTperp}$ for the $\TT$ part of $u$, in the sense of the decomposition
(\ref{eq:TTsplit}),  and the
projections $\Ppara u$, $\Pperp u$, respectively. In particular, 
a  $\TT$ tensor  $u^{\TT}$ can be decomposed as  $u^{\TT} = u^{\TTpara} + u^{\TTperp}$. 
If it is not clear from the context we will indicate the dependence on the
metric by e.g. $\Ppara_{\bme}$. 
If $\ker \LL = \{0\}$, then 
$\Pperp$ is the projection onto the space of $\TT$ tensors with respect to
$\bme$.  

\subsection{Harmonic coordinates and the slice} \label{sec:HC} 
%It is often convenient to work in a local slice 
%for the diffeomorphism
%group. 
For $\bme \in \Ein_\alpha$, 
a slice $\Slice_{\bme}$ for the diffeomorphism group, called 
the harmonic slice through $\bme$,  can be defined as follows. 
%Let $\DD^{s+1}$ denote the space of diffeomorphisms of class $H^{s+1}$. 
%Let
%$\UU^s$ be a sufficiently small neighborhood of $\bme$ in $\MM^s$. 
Let
$\Slice_{\bme}$ be the set of $g \in \MM$ such that the identity map 
$\Id: (M,g) \to (M,\bme)$
is harmonic. 
This condition holds if and only if the tension
field $V$ vanishes, where $V$ is given by 
\begin{equation}\label{eq:Vdef}
V^i[g;\bme] = g^{mn} (\Gamma[g]^i_{mn} - \bGamma^i_{mn} ) 
\end{equation} 
For $\bme \in \Ein_\alpha$ for $\alpha < 0$ and 
$g$ sufficiently close to $\bme$, we have that 
$$
g^{mn} \bR_{imjn} X^i X^j \leq - \lambda^2 |X|^2_\bme
$$
for some $\lambda > 0$. In fact it is sufficient for $\bme$ to have
negative Ricci curvature for this to hold. 
Therefore we have following \cite[\S 5]{AML}, see in particular 
\cite[(5.7)]{AML}, that the operator $P$ defined by 
\begin{equation}\label{eq:Pdef}
PX = D V. \Lie_X \bme ,
\end{equation} 
where the Frechet derivative of $V$ is taken with respect to $g$, 
is an isomorphism at $\bme$, and the same holds, by continuity, 
for the corresponding operator defined
at $g$, for $g$ close to $\bme$. 
Consider a symmetric 2-tensor $h$ decomposed as in (\ref{eq:TTsplit}), with
respect to $g$. We
have for $g \in \Slice_\bme$, 
\begin{equation}\label{eq:DV}
(DV . h)^i = P Y^i + (1-\frac{n}{2}) \nabla^i f - h^{\TT\, mn} (\Gamma[g]^i_{mn} -
\bGamma^i_{mn}) 
\end{equation} 
%In particular, for
%$h \in T_g \Slice_\bme$, we have 
%$Y = Y[g, \bme; h^{\TT}, f]$. 
%
Based on this, is not difficult to apply the implicit function theorem
to show that  $\Slice_{\bme}$ is a 
submanifold of $\MM$ near $\bme$. It follows from the same analysis that
$\Slice_{\bme}$ defined in this way is a slice for the diffeomorphism
group, see also \cite[\S 12.C]{besse:einstein} for discussion. 

On the other hand, for $g \in \MM$, close to a negative Einstein metric $\bme$,
there is a harmonic map $\phi: (M,g) \to (M,\bme)$, with $\phi \in
\DD$ being the unique solution to the harmonic map equation 
$$
\Delta \phi^i + \bGamma^i_{mn} \partial_k \phi^m \partial_l \phi^n g^{kl} = 0
$$
in a neighborhood of the identity map $\Id$. 
%Here, $\DD^{s+1}$ denotes 
%the space of $H^{s+1}$ regular diffeomorphisms of $M$ to itself. 
This is proved along the lines of 
the above remarks, also using an implicit function theorem argument. Given
$\phi$, the pushforward 
$(\phi^{-1})^* g$ has the property that $\Id :
(M, (\phi^{-1})^* g) \to (M,\bme)$ is harmonic, and also that 
$\Id : (M,g) \to (M,
\phi^* \bme)$ is harmonic. Since $\bme \in \Ein_\alpha$, this holds for
$\phi^* \bme$ too. Thus, if $g$ is close in $\MM$ 
to  $\bme \in \Ein_\alpha$ we may,
after applying a diffeomorphism, 
assume that in fact $g \in \Slice_{\bme}$.
%, where
%$\Slice_{\bme}^s$ 
%is the slice, through $\bme$, 
%for the diffeomorphism group defined in terms of $\bme$. 
%
%For the rest of this section we drop the Sobolev index from our notation. 
%Further, i
If it is clear from the context which metric is used to define the slice, we
will simply denote the slice by $\Slice$.

\subsection{The deformation space} \label{sec:slicedeform} 
Fix $\bme_0 \in \Ein_\alpha$ and for the rest of this section, let $\Slice$
be the slice defined with respect to $\bme_0$. 
\begin{definition} Let $\bme_0 \in \Ein_\alpha$, and let 
$\VV$
be the
  connected component of $\bme_0$ in $\Ein_\alpha$.  
The space $\NN = \VV \cap \Slice$ 
is called the {\bf deformation space} of $\bme_0$. If $\NN = \{\bme_0\}$,
then $\bme_0$ is called {\bf rigid}. 
\end{definition} 
\begin{remark} 
By \cite[Corollary 12.52]{besse:einstein}, the moduli space of Einstein
structures is locally connected. 
In the terminology of \cite[\S 12]{besse:einstein}, the 
deformation space is the connected component of 
  the premoduli space, which contains $\bme_0$.
\end{remark} 
By definition, $\NN$ is the $\bme_0$ component in the space of solutions of 
\begin{equation}\label{eq:RicV} 
\Ric = \alpha g , \quad V = 0
\end{equation} 
By (\ref{eq:kerLinEin}) $h$ solves the linearization of the equation 
$\Ric = \alpha g$ at $\bme_0$ if and only if 
\begin{equation}\label{eq:kerDEin}
h = h^{\TTpara} + \Lie_Y \bme_0,
\end{equation}
with $h^{\TTpara} \in \ker \LL$, where
$\LL$ is the operator defined by (\ref{eq:LLdef}) at $\bme_0$. From the
discussion in section \ref{sec:HC}, we see that the space of solutions to the
linearization of the system (\ref{eq:RicV}) at $\bme_0$ is equivalent to 
$\ker \LL$.  By
\cite[Corollary 12.66]{besse:einstein}, $\bme_0$ is rigid
  if $\ker \LL = 0$.

Next we consider the linearization of the system (\ref{eq:RicV}) away from
$\bme_0$. Thus, let 
$\bme \in \NN$ be close to $\bme_0$. Again, the space of solutions of the
linearized Riemannian Einstein equations has the form (\ref{eq:kerDEin}). 
However, for $\bme \ne
\bme_0$, the term 
$$
h^{\TTpara\, mn} (\Gamma[\bme]_{mn}^i - \Gamma[\bme_0]_{mn}^i) 
$$ 
is nontrivial, and the space of solutions of the linearization of the
system (\ref{eq:RicV}) is of the form 
\begin{equation}
\label{eq:TNN}
h^{\TTpara} + \Lie_{Y^\parallel} \bme ,
\end{equation} 
where $h^{\TTpara} \in \ker \LL$ with $\LL$ defined with respect to $\bme$,
and $Y^\parallel$ solves the equation 
\begin{equation}\label{eq:PYi}
PY^{\parallel\, i} - h^{\TTpara\, mn} (\Gamma[\bme]_{mn}^i - \Gamma[\bme_0]_{mn}^i) = 0 .
\end{equation} 
Here $P$ is the operator given by (\ref{eq:Pdef}), defined with
respect to $\bme$ with background metric $\bme_0$. 
Thus the space of $h$ of the form (\ref{eq:TNN}) is the formal
tangent space of the deformation space $\NN$ at $\bme$. 
\begin{definition} \label{def:integrable} 
If $\NN$ is a manifold near $\bme$, with tangent space given by the formal
tangent space, i.e. 
$$
T_\bme \NN = \{h : h = h^{\TTpara} + \Lie_{Y^\parallel} \bme\},
$$ 
with $Y^\parallel$ a solution to (\ref{eq:PYi}), then
$\NN$ is called {\bf integrable} at $\bme$. 
%If $T_{\bme} \VV = \ker \LinEin$, then $\VV$ is called {\bf integrable}. 
\end{definition} 
See \cite[\S\S 12.E,F]{besse:einstein} for a discussion of integrability. 
%It is clear from the above that $\NN$ is integrable if and only if $\VV$ is
%integrable. 
%
\begin{remark}
No example of a compact, negative Einstein space with non-integrable
deformation space is
known. 
\end{remark} 
From now on, we shall consider only integrable deformation spaces.

\subsection{Stable negative Einstein spaces} \label{sec:stable} 
\begin{definition} Let $(M,\bme)$ be a negative 
Einstein space and let $\lammin$ be the
  lowest eigenvalue of $\LL$. We call $(M,\bme)$ {\bf
stable} if $\lammin \geq 0$, while if $\lammin > 0$, $(M,\bme)$ is
  called {\bf strictly stable}. 
%If $\lammin = 0$, then $(M,\bme)$ is called {\bf critical}. 
\end{definition} 
\begin{remark} 
%\begin{enumerate} 
%\item 
No example of an unstable, compact, negative Einstein space is known. 
%\item 
%If $\bme$
%is critical, then $\bme$ may have a nontrivial deformation space. 
%\end{enumerate} 
\end{remark} 

The following Lemma shows that stability can be analyzed by looking at the
restriction of $\LL$ to $\TT$ tensors. 
\begin{lemma} \label{lem:nonneg} 
Suppose $(M,\bme)$ is an Einstein space with 
Einstein constant $\alpha < 0$. Then $\LL$ has nonnegative spectrum 
as an operator on
$\TT$ tensors if and only if $(M,\bme)$ is stable. 
In particular, $\ker \LL \subset \ker
\div \cap \ker \tr$ so any element of $\ker \LL$ is a $\TT$
tensor. 
\end{lemma}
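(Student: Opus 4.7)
\medskip

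\noindent\textbf{Proof plan.}

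My plan is to reduce both parts of the lemma to the commutation formulas
$$
\div \LL u = (\Delta_H - 2\alpha)\, \div u, \qquad \tr \LL u = (-\Delta - 2\alpha)\, \tr u,
$$
which are stated earlier in this section, together with the elementary spectral observation that both of these scalar/$1$-form operators are strictly positive when $\alpha<0$.

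First I would record that on functions $-\Delta \geq 0$, and that on $1$-forms $\Delta_H\geq 0$ since $\Delta_H = d\div + \div d$ is manifestly nonnegative. Consequently, acting respectively on $C^\infty(M)$ and on $1$-forms,
$$
-\Delta - 2\alpha \;\geq\; -2\alpha \;>\; 0, \qquad \Delta_H - 2\alpha \;\geq\; -2\alpha \;>\; 0.
$$
In particular both operators are invertible, and every nonzero eigenvalue of each is at least $-2\alpha$.

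Next I would apply this to an arbitrary eigentensor $\LL u = \lambda u$ of $\LL$ on symmetric $2$-tensors (which exists in abundance by ellipticity and self-adjointness of $\LL$). The commutation formulas give
$$
(\Delta_H - 2\alpha)\,\div u = \lambda\, \div u, \qquad (-\Delta - 2\alpha)\, \tr u = \lambda\, \tr u,
$$
so $\div u$ and $\tr u$ are either zero or are eigenvectors at level $\lambda$. Since the spectra of $\Delta_H - 2\alpha$ and $-\Delta - 2\alpha$ are bounded below by $-2\alpha$, the condition $\lambda < -2\alpha$ forces $\div u = 0$ and $\tr u = 0$; that is, every eigentensor of $\LL$ with eigenvalue $\lambda < -2\alpha$ is $\TT$. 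Specializing to $\lambda = 0$ (which satisfies $0 < -2\alpha$) gives the ``in particular'' statement $\ker\LL \subset \ker\div \cap \ker\tr$.

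Finally, for the equivalence, one direction is trivial: if $\LL \geq 0$ on all symmetric $2$-tensors, then a fortiori on the $\TT$ subspace. For the converse, suppose $\LL \geq 0$ on $\TT$ tensors, and assume for contradiction that $\LL$ has some eigenvalue $\lambda < 0$ on $S^2$. Then $\lambda < 0 < -2\alpha$, so by the preceding step the associated eigentensor lies in $\TT$, contradicting nonnegativity on $\TT$. Hence $\LL \geq 0$ on all of $S^2$, i.e.\ $(M,\bme)$ is stable. There is no real obstacle here beyond being careful that the commutation identities really do allow one to transfer spectral information from $\LL$ to the strictly positive operators $\Delta_H - 2\alpha$ and $-\Delta - 2\alpha$; the assumption $\alpha<0$ is exactly what makes the separation $\lambda < -2\alpha$ automatic for all $\lambda \leq 0$.
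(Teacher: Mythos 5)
Your proposal is correct and follows essentially the same route as the paper: both arguments apply the commutation formulas for $\div$ and $\tr$ with $\LL$ to an eigentensor with nonpositive (or negative) eigenvalue and use the nonnegativity of $\Delta_H$ on one-forms and of $-\Delta$ on functions to force $\div u = 0$ and $\tr u = 0$, so that any such eigentensor is $\TT$; the equivalence and the kernel statement then follow exactly as you say. Your phrasing via the lower bound $-2\alpha$ on the spectra of $\Delta_H - 2\alpha$ and $-\Delta - 2\alpha$ is just a mild repackaging of the paper's observation that $2\alpha+\lambda<0$ cannot be an eigenvalue of a nonnegative operator.
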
 
\begin{proof} 
Let $u$ be an eigentensor of $\LL$, 
$$
\LL u = \lambda u
$$
and suppose $\lambda \leq 0$. This implies 
$$
\div \LL u = \lambda \div u 
$$
and hence 
$$
\Delta_H \div u  = (2 \alpha + \lambda) \div u 
$$
i.e. either $2\alpha + \lambda < 0$ is an eigenvalue of $\Delta_H$ or $\div
u = 0$. Since $\Delta_H$ is nonnegative, we have $\div u = 0$. Similarly,
by applying the trace to both sides, we find 
$$
- \Delta \tr u = (2\alpha + \lambda) \tr u
$$
which implies $\tr u = 0$. Hence $u$ is a $\TT$ tensor. This shows that if 
$\LL$
has nonnegative spectrum when restricted to $\TT$ tensors then $\LL$ has
nonnegative spectrum. The converse
is immediate, as is the statement about $\ker \LL$. 
\end{proof}

\begin{prop} \label{prop:uniform} 
\begin{enumerate}
\item \label{point:unif-1}
If $\bme_0 \in \Ein_\alpha$ is strictly stable
then $\NN = \{ \bme_0 \}$. 
%$\ker \LL_g = \{0\}$ and 
$\bme_0$ is isolated in $\Ein_\alpha \cap \Slice$. 
In particular, there is a neighborhood
$\UU$ of $\bme_0$ in the space of metrics such that  
$\UU \cap \Ein_\alpha \cap \Slice = \{\bme_0\}$. 
\item \label{point:unif-2} 
If $\bme_0 \in \Ein_\alpha$ 
%is stable and with 
has integrable, nontrivial deformation
space, then 
there is a neighborhood $\UU$ of $\bme_0$ in the space of metrics, such
that $\UU \cap \Ein_\alpha \cap \Slice = \UU \cap \NN$.
\end{enumerate} 
\end{prop}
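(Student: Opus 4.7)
The plan is to realize $\Ein_\alpha \cap \Slice$ locally near $\bme_0$ as the zero set of a single quasi-linear elliptic Fredholm operator of index zero, and then to apply the implicit function theorem for part (\ref{point:unif-1}) and a Lyapunov--Schmidt/Kuranishi reduction for part (\ref{point:unif-2}). Concretely, I would work with the DeTurck-type operator
$$
\tilde\Phi(g) \;=\; \Ric(g) - \alpha g + \div^* V[g;\bme_0],
$$
defined on a Sobolev neighborhood of $\bme_0$ in $\MM^s$. A short computation starting from (\ref{eq:LinEin}) and the expression for $DV$ in (\ref{eq:DV}) identifies $2\,D\tilde\Phi_{\bme_0}$ with the operator $\LL$ of (\ref{eq:LLdef}), which is self-adjoint and elliptic, and hence Fredholm of index zero. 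Since $V[g;\bme_0] = 0$ on $\Slice$, one has $\tilde\Phi|_\Slice = \Ric - \alpha g$, so $\tilde\Phi^{-1}(0) \cap \Slice = \Ein_\alpha \cap \Slice$. By Lemma \ref{lem:nonneg}, $\ker D\tilde\Phi_{\bme_0} = \ker \LL$ consists entirely of $\TT$ tensors and therefore lies inside $T_{\bme_0}\Slice$; by self-adjointness $\coker D\tilde\Phi_{\bme_0}$ has the same dimension as $\ker \LL$.

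For part (\ref{point:unif-1}), strict stability gives $\ker \LL = \{0\}$, so $D\tilde\Phi_{\bme_0}$ is an isomorphism. The implicit function theorem then provides a neighborhood $\UU$ of $\bme_0$ in $\MM$ on which $\tilde\Phi^{-1}(0) = \{\bme_0\}$, hence $\UU \cap \Ein_\alpha \cap \Slice \subseteq \tilde\Phi^{-1}(0) \cap \UU = \{\bme_0\}$. In particular the connected component $\NN$ of $\bme_0$ in $\Ein_\alpha \cap \Slice$ reduces to the single point $\{\bme_0\}$, as claimed.

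For part (\ref{point:unif-2}), set $k = \dim \ker \LL \geq 1$. The Lyapunov--Schmidt reduction applied to the Fredholm equation $\tilde\Phi = 0$ at $\bme_0$ produces a smooth obstruction map
$$
\psi : B \subset \ker \LL \longrightarrow \coker D\tilde\Phi_{\bme_0}
$$
on a neighborhood $B$ of the origin, such that $\tilde\Phi^{-1}(0)$ is locally parametrized by $\psi^{-1}(0)$. By hypothesis $\NN$ is integrable at $\bme_0$, so it is a smooth $k$-dimensional submanifold of $\Ein_\alpha \cap \Slice$ with tangent space $\ker \LL$ at $\bme_0$. Pulled back through the Kuranishi parametrization, $\NN$ corresponds to a germ of smooth $k$-dimensional submanifold of $\psi^{-1}(0)$ tangent at the origin to the whole of $\ker \LL$, which forces $\psi \equiv 0$ on a smaller neighborhood of $0$. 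Therefore $\tilde\Phi^{-1}(0)$ is itself locally a smooth $k$-dimensional submanifold, it contains the connected component $\NN$, and both have the same dimension; shrinking $\UU$ as necessary gives $\UU \cap \Ein_\alpha \cap \Slice = \UU \cap \NN$.

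The most delicate step I expect is the matching of the Kuranishi obstruction $\psi$ with the paper's geometric notion of integrability: one must verify that the existence of a smooth $k$-parameter family of solutions through $\bme_0$ whose tangents span all of $\ker \LL$ really forces $\psi$ to vanish identically near the origin, not merely to leading order. Once this formal comparison is in place, the remainder is a standard application of elliptic Fredholm theory and the implicit function theorem.
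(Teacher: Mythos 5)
Your argument is correct, but it is genuinely different from what the paper does: the paper disposes of both points by citation, invoking \cite[Corollary 12.66]{besse:einstein} for the strictly stable case and \cite[Corollary 12.52]{besse:einstein} (local arcwise connectedness of the (pre)moduli space, which rests on the real-analytic structure of the Einstein equation in the harmonic slice) for point (\ref{point:unif-2}). Your route instead gives a self-contained proof via the Bianchi-gauged operator $\tilde\Phi(g)=\Ric(g)-\alpha g+\div^* V[g;\bme_0]$: the identification $2D\tilde\Phi|_{\bme_0}=\LL$ is correct with the paper's sign convention for $\div^*$ (it is essentially the computation behind Besse's 12.66), ellipticity and self-adjointness give Fredholm index zero, the inverse function theorem settles point (\ref{point:unif-1}), and the Lyapunov--Schmidt reduction plus the integrability hypothesis settles point (\ref{point:unif-2}). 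What each approach buys: Besse's Corollary 12.52 yields point (\ref{point:unif-2}) with no integrability assumption at all (local arcwise connectedness of the premoduli space is unconditional), whereas your argument uses integrability essentially but in return produces the stronger local structural statement that $\Ein_\alpha\cap\Slice$ is, near $\bme_0$, exactly the smooth $k$-manifold $\NN$, by completely elementary Fredholm/IFT means. Note also that your logic never needs the (true, but unproved by you) fact that zeros of $\tilde\Phi$ near $\bme_0$ automatically lie in $\Slice$; you only use the inclusion $\Ein_\alpha\cap\Slice\subset\tilde\Phi^{-1}(0)$, which suffices for both points.

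The step you flag as delicate does close, and more easily than you fear: no ``vanishing to all orders'' issue arises. Since $T_{\bme_0}\NN=\ker\LL$ (the vector field part of the formal tangent space vanishes at $\bme_0$ because the right-hand side of (\ref{eq:PYi}) vanishes there and $P$ is an isomorphism), the projection $\Ppara$-type map $h\mapsto$ (component of $g-\bme_0$ in $\ker\LL$) restricted to $\NN$ has differential equal to the identity at $\bme_0$; by the inverse function theorem its image is an open neighborhood of the origin in the $k$-dimensional space $\ker\LL$, and by the uniqueness part of the Lyapunov--Schmidt reduction every point of this open set lies in $\psi^{-1}(0)$. Hence $\psi\equiv 0$ on a neighborhood of the origin, and $\tilde\Phi^{-1}(0)$ is locally the graph of the Lyapunov--Schmidt map over $\ker\LL$, which coincides with $\NN$ near $\bme_0$ by invariance of domain; the chain of inclusions $\NN\subset\Ein_\alpha\cap\Slice\subset\tilde\Phi^{-1}(0)$ then forces the claimed equality on a small enough $\UU$.
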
 
\begin{proof} 
Point \ref{point:unif-1} is a special case of \cite[Corollary
  12.66]{besse:einstein}. 
Point \ref{point:unif-2} is the fact that the moduli space and in
particular the premoduli space is locally arcwise connected,
cf. \cite[Corollary 12.52]{besse:einstein}. 
\end{proof}

\subsection{Examples} \label{sec:examp-stab}
%In this subsection we will use the results obtained so far in this section to
%construct several classes of examples of stable negative Einstein
%spaces. 
We will now discuss some general conditions which imply that a negative
Einstein space is stable or strictly stable, and give examples of such spaces
as well as of stable negative Einstein spaces with nontrivial deformation
space. 
\subsubsection{Strictly stable spaces}
A compact Einstein space of dimension $\geq 3$, 
with negative sectional
curvature is strictly stable, cf. \cite[\S 12 H]{besse:einstein}. 
%, see also
%\cite{fischer:moncrief:n+1}. 
Thus,
in particular, compact rank one symmetric spaces of noncompact type provide
examples of strictly stable negative Einstein spaces. These include, among
others, compact
hyperbolic and  complex hyperbolic spaces. 
More generally, locally symmetric spaces of non compact type, which have
no local 2-dimensional factor are rigid \cite[Prop. 12.74]{besse:einstein}
and hence also strictly stable. 
We mention also the work of Fischer and Moncrief
\cite{fischer:moncrief:n+1} which gives a condition on the Weyl tensor which
implies strict stability. 

\begin{remark} 
In certain cases, a type of uniformization theorem is valid.  
If $M$ carries a hyperbolic 
metric, then by work of Besson et al. \cite{BCG2}, an Einstein
metric on $M$ is hyperbolic. Further, in dimension 4, if $M$ carries a
complex hyperbolic metric, then any Einstein metric on $M$ is complex
hyperbolic by work of LeBrun \cite{lebrun:rigid}. 
\end{remark} 

\subsubsection{Nontrivial deformation spaces} 
In the two
dimensional case, negative curvature does not imply strict stability. 
In particular, from standard results in
Teichmuller theory, in the case $n=2$, 
a Riemann surface $M$ of 
$\genus(M) > 1$, with the hyperbolic metric is a stable negative Einstein
space with nontrivial, integrable deformation space, namely the Teichmuller
space. In this case $\dim \ker \LL = 6 \,\genus(M) -
6$, the dimension of the Teichmuller space of $M$. 

Higher dimensional examples of negative Einstein spaces with nontrivial 
integrable deformation spaces are provided by K\"ahler-Einstein metrics. By a
result of Aubin and Yau, cf. \cite[Theorem 11.17]{besse:einstein}, 
any compact complex manifold with negative first
Chern class admits a unique K\"ahler-Einstein metric $(\bme,J)$,
where $J$ is the complex structure,  
with Einstein constant $\alpha <
0$. A result by Koiso \cite[Theorem 12.88]{besse:einstein} 
shows that the deformation space of $\bme$ is integrable if all infinitesimal
complex deformations of $J$ are integrable. Examples where these conditions
hold are provided  by hypersurfaces of ${\mathbb C} P^m$, $m \geq 3$, of
degree $d \geq m+2$, cf. \cite[Example 12.89]{besse:einstein}.

By a theorem of Dai et al. \cite{dai:etal:kahler}, compact K\"ahler-Einstein
spaces are stable. The proof uses the fact that K\"ahler-Einstein spaces
carry parallel spin${}^c$ spinors. In the presence of a parallel spinor $\sigma_0$, 
it is possible to relate $\LL$ to the square of the Dirac operator. Let
$\Sigma(M)$ denote the spin${}^c$ spinor bundle of $M$. Define
the map $\Phi : S^2 T^* M \to \Sigma(M) \tens T^* M $, taking symmetric
2-tensors to spinor-valued 1-forms, by 
$$
\Phi(h) =  h^i{}_j e_i \sigma_0 \tens e^j
$$
where $(e_i)$ is an ON frame on $(M,\bme)$ 
with dual frame $(e^i)$ and $h_{ij} =
h(e_i, e_j)$ are the frame components of $h$. Here indices are raised using
$\bme^{ij}$ which is just $\delta^{ij}$. A calculation shows 
$$
D^* D \,\Phi(h) = \Phi( \LL h - h \circ F+ \Ric\circ h)
$$
where $D$ is the Dirac operator, 
$F$ is the curvature of the line bundle appearing in the spin${}^c$ 
structure, and $(h \circ F)_{ab} = h_a{}^c F_{cb}$, and 
$(\Ric \circ h)_{ab}= \Ric_{ac} h^c{}_b$. 
The terms involving $F, \Ric$ give a positive
semi-definite contribution, and one can show 
$$
(\LL h , h )_{L^2} \geq (D\Phi(h), D\Phi(h) )_{L^2} \geq 0, 
$$
for all $h$, cf. \cite[Theorem 2.4]{dai:etal:kahler}. 
 
\section{Stability of product spaces} \label{sec:product} 
\subsection{Tensors on product spaces} 
Let $M$, $N$ be compact, connected, 
stable negative Einstein manifolds with constant $\alpha < 0$, 
of dimension $m,n$ and with metrics $\bme^M$, $\bme^N$ 
respectively. 
Then $M\times N$ is an Einstein space with constant $\alpha$
and metric $\bme = (\pi^M)^*\bme^M + (\pi^N)^* \bme^N$, where 
$\pi^M, \pi^N$ denote the projections of $M\times N$ to the
factors $M$, $N$, respectively. In the following we shall lift tensors on the 
factors $M,N$ 
to tensors on the product $M\times N$ by pulling back along the projections
$\pi^M$, $\pi^N$, and in order to avoid notational complications, for the
rest of this section we drop explicit reference to the projections. Thus, for
example, we
write the metric on of $M\times N$ simply as $\bme^M + \bme^N$. 

For tensors on $M$ we use greek indices $\mu,\nu,\gamma, \dots$, and for
tensors on
$N$, we use upper case latin indices $A,B,C,\dots$. 
For tensors on the
product $M\times N$, we use lower case latin indices $a,b,c,\dots$, or, when
convenient, a mixture of the two other index types. For
computations involving tensor products, we will use both index and index-free
notation. 
We will write $\nabla^M, \nabla^N$ for
the $M,N$ covariant derivatives, respectively, and $\nabla$ for the $M\times
N$ covariant derivative. Similarly, $\Delta^M$, $\Delta^N$ denote the
Laplacians on $M,N$ and $\Delta$ the Laplacian on $M \times N$. 
%We will use the notation $\Tens^k_\ell(M)$ to denote
%tensors of covariant order $k$ and contravariant order $\ell$ on $M$, and
%similarly for other spaces. 
The space of symmetric covariant tensors of order
$k$ on $M$ will be denoted  by $\Sym^k M$, and the space of covariant $k$-tensors on $M$
will be denoted by $\Tens^k M$, and similarly for the other spaces. 
%and similarly for other spaces. 
The covariant derivatives on $M$, $N$ apply in an
unambiguous way to lifted tensors. 
For example, let $\xi, \eta$ be tensors on $M,N$. Then 
with the above notation we have 
\begin{equation}\label{eq:ansatz}
\nabla(\xi\tens \eta) = (\nabla^M \xi )\tens \eta + \xi \tens ( \nabla^N
\eta) 
\end{equation} 
as well as the obvious index version of this formula. Similarly, we have 
$$
\Delta (\xi \tens \eta) = (\Delta^M \xi) \tens \eta + \xi \tens ( \Delta^N \eta) ,
$$
in particular there are no cross terms. 
Let $\odot$ be the
symmetric tensor product, by definition 
$$
\xi \odot \eta = \xi \tens \eta + \eta \tens \xi .
$$
Then 
$$
\nabla (\xi \odot \eta) = (\nabla^M \xi )\odot \eta + \xi \odot (\nabla^N
\eta)
$$
and analogously for $\Delta$. 

We will frequently consider symmetric tensors on $M\times N$ of the form 
$$
t = u \psi + \xi \odot \eta + \phi v
$$
where $u \in \Sym^2(M)$, $\xi\in \Tens^1(M)$, $\phi \in C^\infty (M)$, 
$\psi \in C^\infty(N)$, $\eta\in \Tens^1(N)$, $v \in
\Sym^2(N)$.  Then 
\begin{multline*}
\nabla t = \nabla^M u \psi + u \tens \nabla^N \psi 
+ (\nabla^M \xi )\odot \eta \\
+ \xi \odot (\nabla^N \eta)
+ \nabla^M \phi \tens v + \phi \nabla^N v
\end{multline*} 
and similarly for $\Delta$. For a Cartesian product $M\times N$, the cross
terms in the Riemann
tensor vanish in the sense that $R(X,Y,Z,W)=0$ whenever $X,Y,Z,W$ contains a
pair of vector fields which are tangent to $M$,  $N$ respectively. 
It follows that the operators $\cR$ defined in terms of the Riemann tensor
as in (\ref{eq:cRdef}) act 
block diagonally on
tensor products formed of tensors on $M$ and $N$. In particular with
$\xi,\eta$ as above, 
$$
\cR (\xi \odot \eta ) = 0
$$
Hence, we have 
\begin{align*} 
\LL t &= - u \Delta^N \psi  - \psi \Delta^M u - \Delta^M \xi \odot \eta - \xi
\odot \Delta^N \eta \\
&\quad - v \Delta^M \phi- \phi \Delta^N v 
%\\
%&\quad 
- 2 \psi (\cR_M u)  - 2 \phi (\cR_N v) 
\end{align*} 
where $\cR_M, \cR_N$ are defined in terms the Riemann tensors  
$R_M, R_N$ of $M,N$, respectively. 
\subsection{Spectral decomposition} 
Consider the compact, negative Einstein spaces $M$, $N$ as above. 
%The covariant Laplacian
%acting on functions, one-forms and symmetric 2-tensors is self-adjoint, with
%discrete spectrum. We will use the following notation for the spectral
%decompositions. On $M$ the eigenvalues for the operators $-\Delta, -\Delta,
%\LL$ acting on tensors of order $k=0,1,2$ 
%will be denoted by $\lambda_i^{(k)}$. The
%eigentensors of order $0,1,2$ will be denoted by $\phi_i, \xi_i, u_i$,
%respectively. Similarly on $N$, the eigenvalues will be denoted
%$\mu_i^{(k)}$, $k=0,1,2$, and the eigentensors of order $0,1,2$ will be
%denoted by $\psi_i, \eta_i, v_i$. We will assume that the $\phi_i, \xi_i,
%u_i$ and $\psi_i, \eta_i, v_i$ are orthonormal in $L^2$ on $M$ and $N$
%respectively.  
%
Since we are considering compact manifolds, the 
covariant Laplacian
acting on functions and one-forms and $\LL$ acting on 
symmetric 2-tensors are self-adjoint, with
discrete spectrum. We will use the following notation for the spectral
decompositions. On $M$ the eigenvalues for the operators $-\Delta, -\Delta,
\LL$ acting on tensors of order $k=0,1,2$ 
will be denoted by $\lambda_i^{(k)}$. The $L^2$ normalized 
eigentensors of order $0,1,2$ will be denoted by $\phi_i, \xi_i, u_i$,
respectively. Similarly on $N$, the eigenvalues will be denoted
$\mu_i^{(k)}$, $k=0,1,2$, and the $L^2$ normalized 
eigentensors of order $0,1,2$ will be
denoted by $\psi_i, \eta_i, v_i$. Then from spectral theory, it follows that 
the $\{\phi_i\}, \{\xi_i\},
\{u_i\}$ and $\{\psi_i\}, \{\eta_i\}, \{v_i\}$ 
constitute $L^2$ bases for tensors of order $0,1,2$ on $M$ and $N$, 
respectively. 
\begin{lemma} \label{lem:complete} 
The tensor products $u_i \psi_j$, $\xi_i \odot \eta_j$,
$\phi_i v_j$ form a complete orthonormal system in $\Sym^2(M\times N)$. 
\end{lemma}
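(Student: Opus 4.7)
The plan is to reduce the claim to the standard Hilbert-space tensor product identity by exploiting the block structure induced by the product metric $\bme = \bme^M + \bme^N$.

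First, I would set up a pointwise orthogonal decomposition. The splitting $T^*(M\times N) = \pi_M^* T^*M \oplus \pi_N^* T^*N$ induces a decomposition of bundles over $M\times N$,
\begin{equation*}
\Sym^2 T^*(M\times N) = \Sym^2 T^*M \;\oplus\; \bigl(T^*M \otimes T^*N\bigr) \;\oplus\; \Sym^2 T^*N,
\end{equation*}
and any $t \in \Sym^2(M\times N)$ splits uniquely as $t = t^{MM} + t^{MN} + t^{NN}$. Since the product metric is block-diagonal, so is its inverse, and the three components are mutually $L^2$-orthogonal: any metric-trace contraction between tensors supported in different blocks vanishes pointwise.

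Second, I would observe that each of the three families in the lemma lives entirely in one block. The tensor $u_i\psi_j$ is supported in $\Sym^2 T^*M$; the symmetrized product $\xi_i \odot \eta_j = \xi_i\otimes \eta_j + \eta_j\otimes\xi_i$ is (by construction) the image in $\Sym^2 T^*(M\times N)$ of the pure off-diagonal tensor $\xi_i\otimes\eta_j \in T^*M\otimes T^*N$; and $\phi_i v_j$ is supported in $\Sym^2 T^*N$. Hence elements from different families are automatically $L^2$-orthogonal on $M\times N$.

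Third, within each block I would invoke Fubini, i.e. the Hilbert space isomorphism $L^2(E) \hat{\otimes} L^2(F) \cong L^2(E \boxtimes F)$, which in the present compact setting is elementary. Thus
\begin{equation*}
L^2(\Sym^2 T^*M \to M\times N) \cong L^2(N)\hat{\otimes} L^2(\Sym^2 T^*M),
\end{equation*}
so that the basic tensor-product ONB lemma applied to $\{\psi_j\}\subset L^2(N)$ and $\{u_i\}\subset L^2(\Sym^2 T^*M)$ yields that $\{u_i\psi_j\}$ is a complete orthonormal basis of this sector; the same reasoning applied to $\{\phi_i, v_j\}$ handles the $NN$-sector, and applied to $\{\xi_i, \eta_j\}$ shows that $\{\xi_i\otimes \eta_j\}$ is a complete ONB of $L^2(T^*M\otimes T^*N)$, which via the symmetrization identification corresponds to the family $\{\xi_i\odot\eta_j\}$ in the $MN$-block. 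Parseval in each block, combined with the inter-block orthogonality from Step 2, then gives completeness of the full system in $L^2\bigl(\Sym^2(M\times N)\bigr)$.

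The only real obstacle is a bookkeeping one: the conventions (\ref{eq:ansatz}) give $|\xi\odot\eta|^2_{L^2} = 2|\xi|^2_{L^2}|\eta|^2_{L^2}$, so a factor of $1/\sqrt{2}$ must be absorbed to make the $MN$-family literally orthonormal rather than merely orthogonal; the argument is otherwise routine and requires no analytic input beyond the spectral theorem for self-adjoint elliptic operators on a compact manifold and the tensor product structure of $L^2$.
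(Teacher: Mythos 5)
Your proof is correct and follows essentially the same route as the paper: decompose $\Sym^2(M\times N)$ into the $MM$, $MN$, $NN$ blocks (mutually $L^2$-orthogonal for the product metric) and use completeness of the factor bases, which the paper carries out by hand — testing a tensor orthogonal to all $u_i\psi_j$ first against $\psi_j$, then against $u_i$ — i.e., exactly the $L^2$ tensor-product ONB argument you invoke abstractly. Your remark that $\|\xi_i\odot\eta_j\|_{L^2}=\sqrt{2}$ under the paper's conventions is also correct: that family is complete and orthogonal but needs the factor $1/\sqrt{2}$ to be literally orthonormal, a normalization the paper's proof glosses over and which is immaterial for the way the lemma is used.
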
 
\begin{proof} 
It is clear that $\{u_i \psi_j\}$, $\{\xi_i \odot \eta_j\}$ and $\{\phi_i
v_j\}$ are orthonormal sets. 
Let $f \in \Sym^2(M\times N)$. We can assume without loss of generality that
$f$ is smooth. Let $x^\mu$ and $y^A$ be coordinate systems on
$M,N$ so that $(x^\mu,y^A)$ is a coordinate system on $M\times N$. 
Then we can write $f = s_{\mu\nu}dx^\mu \odot dx^\nu + 
t_{\mu A} dx^\mu \odot dy^A + u_{AB} dy^A \odot dy^B$,  where the
coefficients are smooth functions of $x,y$. 
Now suppose $f$ is $L^2$ perpendicular
to all $u_i \psi_j$. This implies for all $i,j$,
$$
\int_N \left ( \int_M \la s(x,y) , u_i(x) \ra \mu_M(x) \right ) \psi_j
\mu_N(y) = 0 .
$$
Since $\{\psi_j\}$ is a basis for $L^2(N)$,
$$
\int_M \la s(x,y ), u_i(x) \ra \mu_M(x) = 0
$$
from which follows $s = 0$. We can deal with the other factors
similarly. 
\end{proof} 

\begin{lemma} \label{lem:prod-stable} Let $M,N$ be stable Einstein spaces
  with Einstein constant $\alpha < 0$. 
Then  $M\times N$ is stable and 
$$
\ker \LL^{M\times N} = \ker \LL^M + \ker \LL^N
$$
\end{lemma}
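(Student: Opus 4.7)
The plan is to use Lemma \ref{lem:complete} to diagonalize $\LL^{M\times N}$ in the product basis and read off both the stability claim and the kernel identification directly. Let $\phi_i,\xi_i,u_i$ denote the $L^2$ eigenbases on $M$, with eigenvalues $\lambda_i^{(k)}$ for $k=0,1,2$, and let $\psi_j,\eta_j,v_j$ with eigenvalues $\mu_j^{(k)}$ be the analogous data on $N$. Any symmetric $2$-tensor on $M\times N$ then expands orthogonally into the three families $u_i\psi_j$, $\xi_i\odot\eta_j$, and $\phi_i v_j$.

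The first step is to apply the formula for $\LL$ on the ansatz (derived using (\ref{eq:ansatz}) and the vanishing of the mixed components of $\cR$ on a Cartesian product) to each basis element separately. Direct substitution yields
\begin{align*}
\LL(u_i\psi_j) &= \bigl(\lambda_i^{(2)}+\mu_j^{(0)}\bigr)u_i\psi_j,\\
\LL(\xi_i\odot\eta_j) &= \bigl(\lambda_i^{(1)}+\mu_j^{(1)}\bigr)\xi_i\odot\eta_j,\\
\LL(\phi_i v_j) &= \bigl(\lambda_i^{(0)}+\mu_j^{(2)}\bigr)\phi_i v_j,
\end{align*}
so $\LL^{M\times N}$ is diagonal in the product basis and the problem reduces to bounding the eigenvalues on each factor.

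The second step is to show each sum of eigenvalues is non-negative. Non-negativity of $-\Delta$ on functions gives $\lambda_i^{(0)},\mu_j^{(0)}\geq 0$. On each factor the Bochner--Weitzenb\"ock identity $\Delta_H=-\Delta+\Ric$ applied to $1$-forms, combined with the negative Einstein condition $\Ric=\alpha\gamma$, yields $-\Delta=\Delta_H-\alpha\,\id\geq |\alpha|\,\id$, so $\lambda_i^{(1)},\mu_j^{(1)}\geq |\alpha|>0$. The stability hypothesis on $M$ and $N$ gives $\lambda_i^{(2)},\mu_j^{(2)}\geq 0$. Summing these bounds shows that every eigenvalue of $\LL^{M\times N}$ is non-negative, proving stability.

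For the kernel identification, the $\xi_i\odot\eta_j$ family cannot contribute since its eigenvalues are bounded below by $2|\alpha|>0$. A basis element $u_i\psi_j$ lies in $\ker\LL^{M\times N}$ iff $\lambda_i^{(2)}=0$ and $\mu_j^{(0)}=0$; the latter forces $\psi_j$ to be a non-zero constant on the connected factor $N$, so this family contributes precisely the lift of $\ker\LL^M$ to $M\times N$. Symmetrically, the $\phi_i v_j$ family contributes the lift of $\ker\LL^N$. These two lifted subspaces sit in orthogonal components of the product decomposition, so their sum is direct and equals $\ker\LL^{M\times N}=\ker\LL^M+\ker\LL^N$, as claimed. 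The main technical point is the strict Bochner lower bound on $1$-forms, which simultaneously eliminates any contribution from the mixed family and guarantees strict positivity of the middle block.
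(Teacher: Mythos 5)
Your proposal is correct and follows essentially the same route as the paper: expand in the product eigenbasis of Lemma \ref{lem:complete}, read off the block-diagonal eigenvalues $\lambda_i^{(2)}+\mu_j^{(0)}$, $\lambda_i^{(1)}+\mu_j^{(1)}$, $\lambda_i^{(0)}+\mu_j^{(2)}$, use the Bochner bound $-\Delta \geq -\alpha$ on one-forms to kill the mixed block, and identify the kernel via constancy of harmonic functions on the connected factors. The only cosmetic difference is that the paper explicitly invokes Lemma \ref{lem:nonneg} to pass from nonnegativity of $\LL^M,\LL^N$ on $\TT$ tensors to nonnegativity on all symmetric $2$-tensors (which is what the bound $\lambda_i^{(2)},\mu_j^{(2)}\geq 0$ requires), a step you absorb into the definition of stability; it would be worth citing that lemma explicitly.
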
 
\begin{proof} 
%It is a well known fact that the assumptions imply $(M\times N, \bme)$ is
%Einstein with Einstein constant $\alpha$. 
We will drop the superscript on
$\LL$ and other operators when it is clear from the context which space they
act on.
By Lemma \ref{lem:complete}, 
any tensor $t \in \Sym^2(M\times N)$ can be expanded in the form 
\begin{equation} \label{eq:expansion} 
t = \sum_{i,j} a_{ij} u_i \psi_j + \sum_{i,j} b_{ij} \xi_i \odot \eta_j +
\sum_{i,j} c_{ij} \phi_i  v_j 
\end{equation} 
with constants $(a_{ij}, b_{ij}, c_{ij})$.  A calculation shows 
\begin{align*} 
\LL t &=  \sum_{i,j} a_{ij} (\lambda_i^{(2)} + \mu_j^{(0)} ) u_i \psi_j \\
&+ \sum_{i,j} b_{ij} (\lambda_i^{(1)} + \mu_j^{(1)} ) \xi_i \odot
\eta_j \\
& + \sum_{ij} c_{ij} (\lambda_i^{(0)} + \mu_j^{(2)} ) \phi_i v_j 
\end{align*} 
Thus eigentensors of the operator $\LL$ on $M\times N$, can be of order
$(2,0)$, $(1,1)$, and $(0,2)$, with eigenvalues 
$\lambda_i^{(2)} + \mu_j^{(0)}$, $\lambda_i^{(1)} + \mu_j^{(1)}$, and
$\lambda_i^{(0)} + \mu_j^{(2)}$, respectively. 
Now assume each factor $M,N$ is stable, i.e. the operators $\LL^M, \LL^N$ are
nonnegative when acting on $\TT$ tensors. By Lemma \ref{lem:nonneg}, this
implies $\LL^M, \LL^N$ are
nonnegative when acting on all tensors on $M,N$ respectively. 
Further, the operator $-\Delta$
acting on functions is
nonnegative, and since $\Delta_H = -\Delta + \Ric$ is nonnegative, it follows
that $-\Delta$ acting on 1-forms has spectrum bounded from below by $-\alpha
> 0$. 
Hence, the operator $\LL$ on
$M\times N$ is nonnegative.  

It remains to identify the kernel of $\LL$. Suppose $\LL t = 0$. Recall that
$\LL$ acts on the off diagonal terms by $-\Delta$ which has spectrum bounded
from below by $-\alpha > 0$. Hence $t$ must have coefficients $b_{ij} = 0$. 
Examining the action of $\LL$ on $t$, we see that zero in the spectrum of
$\LL$ corresponds to zero in the spectrum of $\LL^M, \LL^N$ as well as zero
in the spectrum of $\Delta^M, \Delta^N$. Since the zero eigenfunction of the
scalar Laplacian is constant, we see that $t = u + v$, with $u \in \ker
\LL^M$ and $v \in \ker \LL^N$. This completes the proof of the theorem. 
\end{proof} 

The result of Lemma \ref{lem:prod-stable} clearly applies to
an arbitrary number of factors.

\subsection{Deformation spaces on products} 

In the following we will consider products with two factors. However, as in
the case of Lemma \ref{lem:prod-stable} the results in the rest of this
section apply to products with an arbitrary number of factors. 

We assume the deformation spaces $\NN^M$, $\NN^N$,  
of the background spaces 
$(M, \bme_0^M)$, $(N,\bme_0^N)$ are integrable and stable. 
We allow the case where one or both of
$\bme_0^M,\bme_0^N$ 
are strictly stable so that the corresponding deformation space is
trivial. 

\begin{prop} \label{prop:prod-integr} 
Let $(M,\bme_0^M)$, $(N,\bme_0^N)$ 
be stable Einstein spaces with Einstein
constant $\alpha$. Assume that $M,N$ have integrable deformation spaces. 
%The product
%$(M\times N, \bme)$ with $\bme = \bme^M + \bme^N$ 
%is a stable, integrable Einstein space. 
%In particular, if $\VV^N$, $\VV^M$ are the deformation spaces of $\bme^M$,
%$\bme^N$, respectively, then 
Then, the 
deformation space $\NN$ of $\bme$ is locally diffeomorphic to $\NN^M \times
\NN^N$. 
%Similarly, the deformation space $\VV$ of $\bme$ is
%locally isomorphic to $\VV^M \times \VV^N$. 
\end{prop}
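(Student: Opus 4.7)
The plan is to construct an explicit map $\Phi \colon \NN^M \times \NN^N \to \NN$ by sending $(\bme^M, \bme^N)$ to the Riemannian product metric $\bme^M + \bme^N$ on $M \times N$ (identifying each factor metric with its pullback under the corresponding projection) and to show that $\Phi$ is a local diffeomorphism at $(\bme_0^M, \bme_0^N)$. First I would verify that $\Phi$ is well-defined. The Einstein condition $\Ric = \alpha g$ for the product metric is immediate, since the Ricci tensor of a Riemannian product splits as a sum over the factors. For the slice condition, the key observation is that for a product metric the inverse metric and Christoffel symbols are block-diagonal with no mixed components, so the tension field from (\ref{eq:Vdef}) decomposes as $V[\bme^M + \bme^N;\, \bme_0^M + \bme_0^N] = V[\bme^M; \bme_0^M] + V[\bme^N; \bme_0^N]$ under the obvious lift; each summand vanishes because $\bme^M, \bme^N$ lie in their respective slices. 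Continuity and connectedness place the image in the $\bme_0$-component $\NN$.

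Next I would compute $d\Phi$ at $(\bme_0^M, \bme_0^N)$. By the integrability hypothesis on each factor, the tangent spaces $T_{\bme_0^M}\NN^M$ and $T_{\bme_0^N}\NN^N$ agree with the formal tangent spaces, which at the slice basepoint reduce to $\ker \LL^M$ and $\ker \LL^N$ respectively, since the vector-field term in (\ref{eq:TNN}) drops out when $\bme = \bme_0$. The differential sends $(h^M, h^N)$ to $h^M + h^N$, regarded as a symmetric $2$-tensor on $M \times N$; injectivity is immediate from the orthogonal decomposition of $\Sym^2(M \times N)$ provided by Lemma \ref{lem:complete}. The image of $d\Phi$ is $\ker \LL^M \oplus \ker \LL^N$, which by Lemma \ref{lem:prod-stable} equals $\ker \LL^{M \times N}$ — that is, the formal tangent space of $\NN$ at $\bme_0$.

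Finally I would conclude that $\Phi$ is a local diffeomorphism. The map $\Phi$ is smooth and injective from a manifold of dimension $\dim \ker \LL^{M \times N}$, so its image near $\bme_0$ is an embedded submanifold of $\Slice$ of that dimension, contained in $\Ein_\alpha \cap \Slice$ and hence in $\NN$. Because $\NN$ is locally a real-analytic subvariety of $\Slice$ cut out by (\ref{eq:RicV}) whose Zariski tangent space at $\bme_0$ has dimension exactly $\dim \ker \LL^{M \times N}$, the image of $\Phi$ already attains the maximal allowable dimension and must therefore fill a full neighborhood of $\bme_0$ in $\NN$. This simultaneously shows that $\NN$ is a manifold near $\bme_0$ (so the product is integrable) and that $\Phi$ is a local diffeomorphism onto a neighborhood of $\bme_0$ in $\NN$. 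The main obstacle is this last dimension-counting step: rigorously ruling out obstructions that could shrink $\NN$ below its formal dimension. This is handled by observing that the two factors are themselves integrable, so every infinitesimal deformation in $\ker \LL^M \oplus \ker \LL^N$ integrates to an honest curve in $\NN^M \times \NN^N$ and hence, via $\Phi$, to a curve in $\NN$ — leaving no room for obstructions on the product.
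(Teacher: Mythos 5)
Your argument is correct and is essentially the paper's proof: you show that the product family $\NN^M \times \NN^N$ sits inside $\NN$ and that its tangent space at $\bme_0$ already equals the formal tangent space $\ker \LL^{M\times N} = \ker \LL^M + \ker \LL^N$ (Lemma \ref{lem:prod-stable} together with stability via Lemma \ref{lem:nonneg}), and then conclude that this family fills a neighborhood of $\bme_0$ in $\NN$. The only point to make explicit is your final dimension count: in the infinite-dimensional slice it rests on the finite-dimensional real-analytic (Kuranishi-type) structure of the premoduli space, which is exactly the ingredient the paper supplies by citing \cite[Theorem 12.49]{besse:einstein}; with that reference in place your proof and the paper's coincide.
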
 
\begin{proof} 
Clearly $\NN$ contains the space $\NN^M \times \NN^N$. Let $\Slice$ be the
slice for $M\times N$ defined with respect to the product metric 
$\bme_0 = \bme_0^M + \bme_0^N$ and let $\Slice^M$, $\Slice^N$ be the
slices defined with respect to $\bme_0^M$, $\bme_0^N$. 
By Lemma \ref{lem:nonneg} and Lemma \ref{lem:prod-stable}, we have 
$T_{\bme_0} [ ( \VV^M \times \VV^N) \cap \Slice ] = \ker \LL = \ker
\LL^M + \ker \LL^N = T_{\bme_0^M}(\VV^M \cap \Slice^M) 
+ T_{\bme_0^N}(\VV^N \cap \Slice^N)$. 
Therefore, the  tangent space of the deformation space at $\bme_0$ is equal to the
formal tangent space.
%, cf. \cite[\S 12.G]{besse:einstein}. 
It follows from \cite[Theorem 
13.49, p. 351]{besse:einstein}\footnote{Due to a typographical error in \cite{besse:einstein}, Theorem
  12.49 appears as Theorem 13.49.}
that the deformation
space is integrable near $\bme_0$. 
\end{proof} 
The following is an immediate corollary to Propositions \ref{prop:prod-integr} and \ref{prop:uniform}. 
\begin{cor} 
%Fix $\bme_0^M \in \VV^M$, $\bme_0^N \in \VV^N$ and let $\bme_0 =
%  \bme_0^M + \bme_0^N$. 
There is a neighborhood
  $\UU$ of $\bme_0$ in $\MM(M\times N)$ such that 
$\UU \cap \Ein_\alpha (M\times N) \cap \Slice = \NN^M \times \NN^N$. 
%if $\bme \in
%\Ein_\alpha(M\times N) \cap \UU$, then $\bme = \bme^M \times \bme^N$ for
%$\bme^M \in \VV^M$, $\bme^N \in \VV^N$, respectively. 
\end{cor}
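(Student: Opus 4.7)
The plan is to combine Proposition \ref{prop:prod-integr} with Proposition \ref{prop:uniform} applied directly to the product Einstein metric $\bme_0 = \bme_0^M + \bme_0^N$ on $M \times N$. First, Lemma \ref{lem:prod-stable} gives that $(M\times N, \bme_0)$ is a stable Einstein space with Einstein constant $\alpha$, and Proposition \ref{prop:prod-integr} gives that its deformation space $\NN$ is integrable at $\bme_0$ and locally diffeomorphic, near $\bme_0$, to $\NN^M \times \NN^N$ under the product embedding $(\gamma^M, \gamma^N) \mapsto \gamma^M + \gamma^N$.

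Next I would apply Proposition \ref{prop:uniform} to $\bme_0 \in \Ein_\alpha(M \times N)$. There are two cases. If both $\bme_0^M$ and $\bme_0^N$ are strictly stable, then Lemma \ref{lem:prod-stable} together with $\ker \LL^{M\times N} = \ker \LL^M + \ker \LL^N = 0$ shows that $\bme_0$ is strictly stable on $M\times N$, and point \ref{point:unif-1} of Proposition \ref{prop:uniform} yields a neighborhood $\UU$ with $\UU \cap \Ein_\alpha(M\times N) \cap \Slice = \{\bme_0\} = \NN^M \times \NN^N$. Otherwise the deformation space $\NN$ is nontrivial and integrable, and point \ref{point:unif-2} produces a neighborhood $\UU$ of $\bme_0$ in $\MM(M\times N)$ such that $\UU \cap \Ein_\alpha(M\times N) \cap \Slice = \UU \cap \NN$.

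To conclude, I would invoke the local identification from Proposition \ref{prop:prod-integr}: after shrinking $\UU$ if necessary, $\UU \cap \NN$ agrees with the image of $\NN^M \times \NN^N$ under the product embedding, giving the stated equality interpreted locally around $\bme_0$. The only mildly subtle point to verify is that the slice $\Slice$ for the product metric $\bme_0$ on $M \times N$ restricts compatibly to the product of the slices $\Slice^M$ and $\Slice^N$ at the level of tangent spaces, so that the tangent space identification $T_{\bme_0}(\VV^M\times \VV^N \cap \Slice) = T_{\bme_0^M}(\VV^M\cap \Slice^M) + T_{\bme_0^N}(\VV^N\cap \Slice^N)$ already used in the proof of Proposition \ref{prop:prod-integr} does indeed integrate to a genuine equality of submanifolds near $\bme_0$; this is immediate from integrability and the equality of tangent spaces, so no new analytic work is needed beyond what the preceding propositions supply.
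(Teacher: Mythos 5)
Your proposal is correct and follows essentially the same route as the paper, which derives the corollary immediately from Proposition \ref{prop:prod-integr} (identifying the deformation space of the product locally with $\NN^M \times \NN^N$) together with Proposition \ref{prop:uniform} (which supplies the neighborhood $\UU$ with $\UU \cap \Ein_\alpha \cap \Slice = \UU \cap \NN$, or $\{\bme_0\}$ in the strictly stable case). Your explicit case split and the remark about the compatibility of the slices are just spelled-out details of the same argument.
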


\subsection{Examples of stable product spaces} \label{sec:examp-stab-prod}
The results above show that Cartesian products of spaces which 
are strictly stable, or stable and integrable, yield spaces which are stable
and integrable. If all factors are strictly stable, then the product is
strictly stable. Thus the examples discussed in section \ref{sec:examp-stab} 
allow us to construct large
families of stable integrable negative Einstein spaces with nontrivial
deformation spaces, as well as large families of strictly stable negative
Einstein spaces. 

Among the cases of interest are products of hyperbolic manifolds with
hyperbolic surfaces, as well as products of hyperbolic manifolds with negative 
K\"ahler-Einstein spaces.

\section{The Einstein evolution equations} \label{sec:rescaled}
Let $\bme \in \Ein_{-(n-1)/n^2}$. 
Then the Lorentz cone over $(M, \bme)$, i.e. the manifold 
$(0,\infty) \times M$ with metric
\begin{equation}\label{eq:lorcone} 
-dt\otimes dt + \frac{t^2}{n^2} \bme
\end{equation}
is a Ricci flat, maximal, globally hyperbolic spacetime, which admits a
timelike homothety $t \partial_t$.

In this section we write the Einstein evolution 
equations in terms of scale invariant
variables. As we shall see, the resulting system is 
autonomous, and  data corresponding to the Lorentz cone metric
(\ref{eq:lorcone}) is an equilibrium point for this system. 

\subsection{Scale invariant variables} \label{sec:scalinvvar}
Let $(\pme,\pK,\pLapse,\pShift)$ be constant mean curvature Cauchy data for the Einstein
equations. We use the same conventions as in \cite{AMF}. 
Let $\tau = \tr_{\pme} \pK$ denote the mean
curvature. We assume $\tau < 0$.  The rescaled variables corresponding to
$(\pme,\pK,\pLapse,\pShift)$
are 
$(\rme_{ij}, \rKtr_{ij}, \rLapse,
\rShift^i)$, defined by 
\begin{align*} 
\rme_{ij} &= \tau^2 \pme_{ij}, \qquad \rLapse = \tau^2 \pLapse \qquad  \rShift^i = \tau
\pShift^i \\ 
\rme^{ij} &= \frac{1}{\tau^2} \pme^{ij}, \qquad  
\rKtr_{ij} = \tau (\pK_{ij} - \frac{\tau}{n} \pme_{ij} ) 
%\rK_{ij}  = \tau \pK_{ij} 
\end{align*} 
Since we are assuming constant mean curvature, $\rKtr$ and $\tau$ 
contain the same information as $\pK$. Thus, it is natural to
use $(\rme,\rKtr,\rLapse,\rShift)$ as the set of scale invariant Cauchy
data. 
In particular, for the line element (\ref{eq:lorcone}), we have
$$
\pme(t) =  (t^2/n^2) \bme, \quad \pK(t) = -t^{-1} \pme, 
\quad \tau(t) = - n t^{-1}
$$ and hence the rescaled data are 
$$
(\rme,\rKtr)  \equiv (\bme,0).
$$ 

Note that  {\em by construction}  $\rKtr$ has vanishing trace. Thus, we view 
$(\rme,\rKtr)$ as an element of $T^{\tr} \MM$, where $T^{\tr} \MM$
denotes the subbundle of $T\MM$ with fiber at $g$ 
consisting of symmetric tensors $h$ such that $\tr_g h = 0$. 

%Here $(\rme,\rKtr)$ satisfy the scale invariant version of the constraint
%equations, see (\ref{eq:Conrescaled}) below, and $\rLapse,\rShift$ will be
%specified by the choice of gauge. 
%
%Further, i
Introduce the scale invariant
time $T$ by 
$$
T = - \ln(\tau/\tau_0) 
$$
where $\tau_0$ is some negative constant. Then, 
$\partial_T = -\tau \partial_\tau$.  
As we shall see below, the Einstein evolution 
equations in terms of the scale invariant
variables, and the scale invariant time $T$ form an autonomous system. 

%Let $\bme_0$ be given and let $\NN = \VV \cap \Slice_{\bme_0}$ 
%be the restricted deformation space of
%$\bme_0$. In the strictly stable case, $\NN$ reduces to a point. For $\rme$
%near $\bme_0$, define the corresponding shadow metric $\bme$ as in section
%\ref{sec:shad-def}. Denote by $\Shade$ the map defined by this procedure. Thus
%we have $\bme = \Shade(g)$. 

\subsection{Constraint set and slice} \label{sec:slice} 
In this section, we review some results from \cite[\S
  2.3]{AMF} concerning the geometry of the constraint set and the 
slice for the diffeomorphism  group determined by the spatial harmonic gauge
condition. In \cite[\S 2.3]{AMF}, $M$ was assumed to have constant negative
curvature. However, the results which we shall need generalize immediately to
the case being considered here of negative Einstein spaces. 
In this paper we use the scale invariant metric and the shear
tensor $(\rme, \rKtr)$ as fundamental variables, so we shall use these in the
discussion of the constraint set and slice. 
%We will consider data of Sobolev
%class $s > n/2+1$, to be specified later. 

Let $\CC$ be the space of
$(\rme, \rKtr) \in T^{\tr} \MM$
% \in H^s \times H^{s-1}$, 
%with $\rme \in \MM$ and $\tr_{\rme} \rKtr = 0$, which
which are solutions to 
the vacuum constraint equations. Recalling that we are considering the
constant mean curvature case, the constraint equations written in terms 
of the rescaled variables
$(\rme, \rKtr)$ are of the form 
%(\ref{eq:Conrescaled}). 
\begin{align*} 
0 &= \rR + \frac{n-1}{n} - |\rKtr|^2,  \\ 
0 &= \rnabla^i \rKtr_{ij} ,
\end{align*} 
where $|\rKtr|^2 = \rKtr_{ij} \rKtr^{ij}$. 
%
%Further, let 
%$\MM$ denote the space of Riemannian metrics of class $H^s$ on $M$, 
%and let $\DD(M)$
%denote the group of diffeomorphisms of $M$ of class $H^{s+1}$. 
%In the rest of this section, we suppress mention of the regularity in
%our notation.
%
Fix a background metric $\bme_0 \in \Ein_\alpha$, let $\NN$ be the 
deformation space with respect to $\bme_0$, and 
%Fix $\bme_0$ as above and 
let $\bme \in \NN$ 
be close to $\bme_0$.  
By the discussion in section \ref{sec:HC} 
there is a harmonic 
slice $\Slice_\bme \subset \MM$ defined with respect to $\bme$. 
Corresponding to $\Slice_\bme$, we have the slice $\SliceCC{}_{,\bme}$ 
in the constraint
set\footnote{in \cite{AMF}, $\SliceCC$ was denoted by $\Sigma$}, 
consisting of solutions $(\rme,\rKtr)$ to the constraint
equations such that $\rme \in \Slice_{\bme}$. 
By \cite[Lemma 2.3]{AMF}, which generalizes immediately to the present
situation,  
$\SliceCC{}_{,\bme}$ 
is a smooth submanifold of $T^{\tr} \MM$ with 
tangent space at
$(\bme,0)$ given by the affine subspace 
\begin{multline*}
T_{(\bme,0)} \SliceCC{}_{,\bme} = (\bme,0) \\
+ \{ (u^{\TT},v^{\TT}) ,  \text{ where }
u^{\TT}, v^{\TT} \ \text{ are $\TT$-tensors w.r.t. $\bme$} \}
\end{multline*} 
of $T_{\bme}^{\tr} \MM$. 
Given $\bme$, we have the decomposition 
\begin{equation}\label{eq:L2orth} 
t = t^{\TT} + [\phi \bme + \Lie_X \bme]
\end{equation} 
for any symmetric 2-tensor $t$, 
where $t^{\TT}$ is a $\TT$ tensor on $M$ with respect to $\bme$, and
$\phi$, $X$ are a function and a vector field on $M$, respectively. It is
important to note that the terms $t^{\TT}$ and $\phi \bme + \Lie_X\bme$ are
$L^2$-orthogonal. 
We can represent
$\SliceCC{}_{,\bme}$ as a graph over its tangent space. 
In particular, we may write $(\rme,\rKtr) \in \SliceCC{}_{,\bme}$ in the form 
\begin{equation}\label{eq:zw} 
\rme = \bme + u^{\TT} + z, \quad \rKtr = v^{\TT} + w
\end{equation} 
with $z,w$  $L^2$-orthogonal to the space of $\TT$ tensors defined with
respect to $\bme$. Then 
$(z,w)$ are second order in $(u^{\TT}, v^{\TT})$, i.e. an estimate of the
form 
$$
||z||_{H^s} + ||w||_{H^{s-1}} \leq C ( ||u^{\TT}||_{H^s}^2 +
||v^{\TT}||_{H^{s-1}}^2 )  
$$
holds.

\subsection{Gauge condition and the shadow metric} \label{sec:gauge-cond} 
We introduce the following modification of the CMCSH gauge of \cite{AML}. 
Let $\NN$ be the deformation space of $\bme_0$ and assume
that $\NN$ is integrable.

Introducing a system of local coordinates $(q^\alpha)$, $\alpha = 1, \dots, m
= \dim \NN$ on $\NN$, we may write
a general element of $\NN$ as $\bme_{ij} = \bme_{ij}(q^\alpha)$. Then 
$$
\frac{\partial \bme_{ij}}{\partial q^\alpha} 
%= \frac{\partial}{\partial
%q^\alpha} \bme_{ij} 
= h_{ij}^{(\alpha)} 
$$
gives a basis for $T_{\bme} \NN$. 
From equation (\ref{eq:TNN}) we have that 
each 
$h_{ij}^{(\alpha)}$ admits a decomposition 
\begin{equation}\label{eq:halpha-first} 
h_{ij}^{(\alpha)} = 
h_{ij}^{(\alpha) \, \TTpara} + \Lie_{Y^{(\alpha)\, \parallel}} \bme_{ij} 
\end{equation} 
with $h_{ij}^{(\alpha) \, \TTpara} \in \ker \LL_\bme$, where $\LL_\bme$ is
defined by (\ref{eq:LLdef}) in terms of $\bme$. 
In general the  term 
$\Lie_{Y^{(\alpha) \, \parallel}} \bme_{ij}$,
where $Y^{(\alpha) \, \parallel}$ is determined in terms of 
$h^{(\alpha)\, \TTpara}$ by equation (\ref{eq:PYi}), 
is non-vanishing.

For $\rme \in \MM$, $\bme \in \NN$, we say that $\rme - \bme$ is
{\bf $L^2$-perpendicular} to $\NN$ at $\bme$ if the conditions
$$
0 = (\rme - \bme, h^{(\alpha)} )_{L^2;\bme} ,\quad \alpha = 1,\dots,m
$$
or explicitly, 
\begin{equation}\label{eq:shadowdef-first} 
0 = \int_M (\rme_{ij} - \bme_{ij} ) \frac{\partial \bme^{ij}}{\partial
q^\alpha} \mu_{\bme} , \quad \alpha = 1, \dots, m
\end{equation} 
hold. 
Here 
$$
\frac{\partial \bme^{ij}}{\partial q^\alpha} = - \left ( 
\frac{\partial \bme_{kl}}{\partial q^\alpha} \right ) \bme^{ki} \bme^{lj} .
$$

\begin{remark} \label{rem:shade} Equation (\ref{eq:shadowdef-first}) can be
  viewed as defining a smooth projection 
map $\Shade$ mapping a
  neighborhood of $\bme_0$ in $\MM$ to $\NN$, given by $\Shade[\rme] =
  \bme$. See section \ref{sec:shadow}  for the relevant calculation. 
We refer to the map $\Shade$ as the {\bf shadow map}. In view of the fact,
cf. \cite[Chapter 5F]{besse:einstein} that
Einstein metrics are smooth (in fact real analytic) in harmonic coordinates,
it follows that $\Shade$ is {\bf smoothing} in the sense that, for $\rme$ 
in the neighborhood
of $\bme_0$ where $\Shade$ is defined and regular, 
we have that $D\Shade \big{|}_{\rme}: H^s \to H^{s'}$ is continuous for any $s, s'$. 
\end{remark} 

%$\frac{\partial \bme^{ij}}{\partial
%q^\alpha}$ denotes $\frac{\partial \bme_{ij}}{\partial
%q^\alpha}$ with indices raised by $\bme$. 
%\begin{remark} 
%The motivation for using the term shadow metric will be come clear in section
%\ref{sec:shadow} below where we use this relation to define a curve of
%Einstein metrics shadowing a solution to the Einstein evolution equations. 
%\end{remark} 

%In particular, 
%the condition that $g - \gamma$ be $L^2$-perpendicular to $\NN$ was discussed
%in section \ref{sec:shad-def}.
\begin{definition}\label{def:shadow-gauge-new}
Fix $\bme_0 \in \Ein_{-(n-1)/n^2}$, and let $\NN$ be the 
deformation space defined with respect to $\bme_0$, see section
\ref{sec:slicedeform}.  Assume that $\NN$ is integrable.
We say that for $\bme \in \NN$, and Cauchy data $(\rme,\rKtr) \in \CC$, 
the triple  
$(\bme,\rme,\rKtr)$ satisfies the {\bf shadow metric condition} 
if 
\begin{enumerate}
\item 
%$(\rme,\rKtr)$
$\rme$ 
satisfies the {\bf CMCSH gauge condition} with respect to $\bme$, i.e. 
\begin{equation}\label{eq:rgauge} 
V^i = 0
\end{equation} 
where $V$ is the tension field defined with respect to
$\rme$ and $\bme$, cf. equation (\ref{eq:Vdef}), 
\item
$\rme - \bme$ is $L^2$-perpendicular to the tangent space $T_{\bme} \NN$ of
$\NN$ at $\bme$, in the sense of (\ref{eq:shadowdef-first}). 
\end{enumerate} 
If this holds, we call $\bme$ the {\bf shadow metric} of
$\rme$. 
\end{definition} 
The following Lemma, proved in appendix \ref{sec:shadow:gauge}, 
shows that the shadow metric condition can always be
satisfied locally. 
\begin{lemma}\label{lem:shadow-exist-met}
  Let $\NN$ be
the deformation space defined with respect to $\bme_0$. Assume $\NN$ is
integrable. Let $s > n/2+1$.  There is a $\delta > 0$ such that for 
$\rme \in \MM$ satisfying $||\rme - \bme_0||_{H^s} < \delta$,
there exists  a unique 
$\bme$ in $\NN$ satisfying $|| \rme - \bme||_{H^s} < 2 \delta$,
such that $\bme$ is the shadow metric of $\rme$. 
%$\rme \in \Slice_{\bme}$ and $\rme-\bme$ is 
%$L^2$-perpendicular to $\NN$ at $\bme$. 
\end{lemma}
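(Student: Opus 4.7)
The plan is to apply the implicit function theorem. Since $\NN$ is assumed integrable, a neighborhood of $\bme_0$ in $\NN$ is a smooth finite-dimensional manifold of dimension $m = \dim \NN$, so we may choose smooth local coordinates $(q^\alpha)_{\alpha=1}^m$ on $\NN$ with $\bme(0) = \bme_0$, and set $h^{(\alpha)} = \partial \bme / \partial q^\alpha$, which form a basis of $T_{\bme(q)} \NN$ for $q$ near $0$. By the harmonic coordinate regularity theorem cited in section \ref{sec:prelim}, $\bme(q)$ may be taken smooth in its spatial argument, so the map $q \mapsto \bme(q)$ is a smooth map from a neighborhood $V \subset \mathbb{R}^m$ of $0$ into $H^{s'}$ for any $s'$.

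First I would introduce the map
\begin{equation*}
F_\alpha(\rme, q) = \int_M \bigl(\rme_{ij} - \bme_{ij}(q)\bigr) \, \frac{\partial \bme^{ij}}{\partial q^\alpha}(q) \, \mu_{\bme(q)},
\end{equation*}
viewed as a smooth map $F: U \times V \to \mathbb{R}^m$, where $U$ is a neighborhood of $\bme_0$ in $\MM^s$. Clearly $F(\bme_0, 0) = 0$. The key computation is the partial derivative in $q$ at $(\bme_0,0)$: since the factor $(\rme - \bme(q))$ vanishes there, the derivatives hitting $\mu_{\bme(q)}$ and $\partial \bme^{ij}/\partial q^\alpha$ contribute nothing, and one is left with
\begin{equation*}
\frac{\partial F_\alpha}{\partial q^\beta}\bigg|_{(\bme_0, 0)} = -\int_M h^{(\beta)}_{ij} \frac{\partial \bme^{ij}}{\partial q^\alpha} \, \mu_{\bme_0} = \bigl(h^{(\alpha)}, h^{(\beta)}\bigr)_{L^2; \bme_0},
\end{equation*}
after using $\partial \bme^{ij}/\partial q^\alpha = -\bme^{ik}\bme^{jl} h^{(\alpha)}_{kl}$. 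This is the Gram matrix of a basis of a finite-dimensional subspace of $L^2$, so it is symmetric and strictly positive definite, hence invertible.

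Next I would apply the implicit function theorem in the Banach space $H^s \times \mathbb{R}^m$ to obtain a $\delta_0 > 0$ and a smooth map $\rme \mapsto q(\rme)$, defined for $\|\rme - \bme_0\|_{H^s} < \delta_0$, such that $F(\rme, q(\rme)) = 0$ and such that $q(\rme)$ is the unique solution of this equation with $q \in V$ (shrinking $V$ if necessary). Setting $\bme := \bme(q(\rme))$ gives a metric in $\NN$ for which $\rme - \bme$ satisfies the orthogonality condition (\ref{eq:shadowdef-first}), i.e.\ $\bme$ is a shadow metric of $\rme$. Combining this with the CMCSH condition (which is not at issue in this lemma, only the orthogonality condition is), produces the desired $\bme$.

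For the uniqueness statement as phrased in the lemma, I would argue as follows. Choose $\delta > 0$ so that (i) $\delta < \delta_0/3$ and (ii) every $\bme' \in \NN$ with $\|\bme' - \bme_0\|_{H^s} < 3\delta$ lies in the coordinate chart $\bme(V)$ — this is possible because $\NN$ is a smooth manifold with the $H^s$ topology locally agreeing with the $\mathbb{R}^m$ topology of $V$. Then any $\bme' \in \NN$ satisfying $\|\rme - \bme'\|_{H^s} < 2\delta$ also satisfies $\|\bme' - \bme_0\|_{H^s} < 3\delta$, hence corresponds to some $q' \in V$, and if $\bme'$ is a shadow metric then $F(\rme, q') = 0$, so by the uniqueness clause of the implicit function theorem $q' = q(\rme)$ and $\bme' = \bme$. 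The only mild obstacle is bookkeeping the $H^s$-regularity and making sure the Fréchet derivative computation is done in the correct function space, but since $F$ is linear in the $\rme$-argument and the $q$-derivative at the base point reduces to a finite-dimensional Gram matrix on smooth tensors, no genuine analytic difficulty arises.
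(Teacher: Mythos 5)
Your implicit function theorem computation for the orthogonality condition (\ref{eq:shadowdef-first}) is correct and matches what the paper itself does in section \ref{sec:shadow} (invertibility of the Gram matrix $M_{\alpha\beta}$, cf.\ remark \ref{rem:shade}): it shows that the shadow \emph{map} $\Shade$ is well defined, smooth, and locally unique. But there is a genuine gap in your reading of the statement. By Definition \ref{def:shadow-gauge-new}, ``$\bme$ is the shadow metric of $\rme$'' requires \emph{both} the $L^2$-perpendicularity (\ref{eq:shadowdef-first}) \emph{and} the CMCSH condition (\ref{eq:rgauge}), i.e.\ the vanishing of the tension field $V^i[\rme;\bme]$. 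Your parenthetical claim that the CMCSH condition ``is not at issue in this lemma'' is exactly the point at which your argument stops short: the harmonic gauge condition is an equation for a vector field (infinitely many conditions), and it cannot be arranged by adjusting the finitely many parameters $q^\alpha$ on $\NN$; for a generic $\rme$ with $\|\rme-\bme_0\|_{H^s}<\delta$ there is \emph{no} $\bme\in\NN$ with $V[\rme;\bme]=0$, so your construction does not produce a shadow metric in the sense of the definition, and the lemma is not yet proved.

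This is precisely why the paper's proof, in appendix \ref{sec:shadow:gauge}, is structured quite differently: it composes the tension field with the shadow map to form $\hatV(\rme,\rKtr)=V_{\rme,\Shade(\rme)}$, computes $D\hatV$ at $(\bme_0,0)$ (Lemma \ref{lem:DhatV}), uses that the operator $P_{\bme_0,\bme_0}$ is an isomorphism to conclude that the joint gauge set $\Slice_{\CC,\NN}^\perp$ is a submanifold, and then invokes the diffeomorphism group: Proposition \ref{prop:A4} shows that $L(\phi,\rme,\rKtr)=(\phi^*\rme,\phi^*\rKtr)$ is a local diffeomorphism from $\DD\times\Slice_{\CC,\NN}^\perp$ onto a neighborhood in $\CC$, so that both conditions can be satisfied simultaneously, uniquely, after pulling back by a unique $\phi\in\DD$. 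In other words, the real analytic content of the lemma is the interaction between the finite-dimensional projection onto $\NN$ and the infinite-dimensional harmonic-gauge/diffeomorphism problem; your argument supplies only the finite-dimensional half. To repair your proposal you would need to add the step handling $V[\rme;\bme]=0$ — either by restricting a priori to metrics in harmonic gauge and checking compatibility with the moving $\bme$, or, as the paper does, by solving for the diffeomorphism via the isomorphism $P_{\bme_0,\bme_0}$ and the implicit function theorem on $\DD\times\Slice_{\CC,\NN}^\perp$.
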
 

Let $\Shade$ be the shadow map introduced in remark \ref{rem:shade}. For
$\rme \in \MM$ close to $\bme_0$, 
such that $\Shade$ is well defined near $\rme$, 
define the operator $Q_{\rme,\bme}$ acting on tensors $z$ by 
\begin{equation}\label{eq:Qdef}
(Q_{\rme,\bme} z)^i = \rme^{mn} D \Gamma_{mn}^i \big{|}_\bme . D \Shade \big{|}_\rme
. z
\end{equation} 
The following Lemma will be needed in the proof of local existence, see section
\ref{sec:locex}, as well as in the proof of Lemma \ref{lem:quad-NX-est}. 
\begin{lemma} \label{lem:Qlemma} Let $s > n/2 + 1$. 
Let $\bme_0 \in \Ein_\alpha$ be given, assume $\bme_0$ has 
stable, integrable deformation space $\NN$, and for $\rme \in \MM$
with shadow metric $\bme \in \NN$, let 
$Q_{\rme,\bme}$ be defined by (\ref{eq:Qdef}). 
There is a $\delta > 0$ such that 
if $||\rme - \bme_0||_{H^s} < \delta$,  
the inequality
$$
||Q_{\rme,\bme} z ||_{H^s} \leq C(||\rme - \bme||_{H^s} + ||\bme -
\bme_0||_{H^s} ) ||z||_{H^{s-1}}
$$
holds. 
\end{lemma}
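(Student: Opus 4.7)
The plan is to exploit the fact that $h := D\Shade|_\rme . z$ lies in $T_\bme\NN$ and that along any curve in $\NN$ through $\bme$ the tension field $V[\,\cdot\,;\bme_0]$ vanishes identically; differentiating converts the apparent leading term of $Q_{\rme,\bme} z$ into an expression that is manifestly of order $\|\bme-\bme_0\|_{H^s}$. Concretely, since $\Shade$ takes values in $\NN\subset\Slice_{\bme_0}$, every element of $\NN$ satisfies $V[\,\cdot\,;\bme_0]=0$, and so $DV[g;\bme_0]\big|_\bme . h=0$ for any $h\in T_\bme\NN$. A direct computation from (\ref{eq:Vdef}) gives
\begin{equation*}
0 = DV[g;\bme_0]\big|_\bme . h = \bme^{mn}\, D\Gamma^i_{mn}\big|_\bme . h \;-\; h^{mn}\bigl(\Gamma[\bme]^i_{mn}-\Gamma[\bme_0]^i_{mn}\bigr),
\end{equation*}
where the first term comes from varying $\Gamma[g]$ and the second from varying $g^{mn}$; the latter does not drop out because the background in $V[\,\cdot\,;\bme_0]$ is $\bme_0$ rather than $\bme$. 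This yields the key identity
\begin{equation*}
\bme^{mn}\, D\Gamma^i_{mn}\big|_\bme . h = h^{mn}\bigl(\Gamma[\bme]^i_{mn}-\Gamma[\bme_0]^i_{mn}\bigr).
\end{equation*}

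Using this I would decompose
\begin{equation*}
(Q_{\rme,\bme} z)^i = (\rme^{mn}-\bme^{mn})\, D\Gamma^i_{mn}\big|_\bme . h + h^{mn}\bigl(\Gamma[\bme]^i_{mn}-\Gamma[\bme_0]^i_{mn}\bigr)
\end{equation*}
and estimate each piece in $H^s$. Since $H^s$ is a Banach algebra for $s>n/2$ and $D\Gamma|_\bme . h$ involves a single $\bme$-covariant derivative of $h$, Sobolev multiplication bounds the first term by $C\|\rme-\bme\|_{H^s}\|h\|_{H^{s+1}}$ and the second by $C\|h\|_{H^s}\|\Gamma[\bme]-\Gamma[\bme_0]\|_{H^s}$, with constants depending only on $\bme_0$ provided $\rme$ and $\bme$ lie in a small $H^s$ neighborhood of $\bme_0$.

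To finish I would invoke two smoothing-type ingredients. First, the smoothing property of $\Shade$ from Remark~\ref{rem:shade}, together with finite dimensionality of $T_\bme\NN$ and smoothness of a basis of its generators, implies $\|D\Shade|_\rme . z\|_{H^{s'}}\leq C\|z\|_{H^{s-1}}$ for every $s'$; in particular both $\|h\|_{H^{s+1}}$ and $\|h\|_{H^s}$ are controlled by $\|z\|_{H^{s-1}}$. Second, since $\NN$ is a finite-dimensional smooth manifold of (smooth) Einstein metrics, any two Sobolev norms restricted to $\NN$ are equivalent near $\bme_0$, so $\|\Gamma[\bme]-\Gamma[\bme_0]\|_{H^s}\leq C\|\bme-\bme_0\|_{H^s}$. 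Combining gives the asserted estimate. The main obstacle is recognizing the identity in the first paragraph: at face value $\bme^{mn}\, D\Gamma^i_{mn}|_\bme . h$ is a full first order operator in $h$ with no small prefactor, so one would not get the gain from $\|\rme-\bme\|_{H^s}$ or $\|\bme-\bme_0\|_{H^s}$ on the right-hand side. It is the slice condition defining $\NN$ that forces this apparent leading term to be lower order, with an explicit coefficient $\Gamma[\bme]-\Gamma[\bme_0]$ that vanishes when $\bme=\bme_0$.
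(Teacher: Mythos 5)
Your proposal is correct, and it proves the same two-term decomposition and first-term estimate as the paper, but it handles the crucial second term $\bme^{mn}D\Gamma^i_{mn}\big|_\bme . D\Shade\big|_\rme . z$ by a different mechanism. The paper writes $D\Shade\big{|}_\rme . z = h^{\TTpara} + \Lie_{Y^\parallel}\bme$, kills the $\TT$ part using (\ref{eq:DGammauTT}), and controls the remaining piece by noting that $Y^\parallel$ is obtained from (\ref{eq:PYi}) by inverting the elliptic isomorphism $P$, so that its size carries the factor $\Gamma[\bme]-\Gamma[\bme_0]$. You instead differentiate the slice condition $V[\,\cdot\,;\bme_0]=0$, which holds identically on $\NN\subset\Slice_{\bme_0}$, along the tangent direction $h=D\Shade\big{|}_\rme . z\in T_\bme\NN$ (legitimate by the integrability hypothesis, which identifies $T_\bme\NN$ with the formal tangent space), obtaining the pointwise identity $\bme^{mn}D\Gamma^i_{mn}\big|_\bme . h = h^{mn}(\Gamma[\bme]^i_{mn}-\Gamma[\bme_0]^i_{mn})$ directly from (\ref{eq:Vdef}). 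This single identity packages both of the paper's ingredients — it is exactly the content of (\ref{eq:PYi}) combined with the annihilation of the $\TTpara$ part, reorganized as in (\ref{eq:DV}) — and it buys you a more elementary argument: no decomposition of $h$, no appeal to $P^{-1}$, and an explicit small prefactor $\Gamma[\bme]-\Gamma[\bme_0]$ on which to run the product estimate. The remaining ingredients you invoke (the smoothing property of $\Shade$ from Remark \ref{rem:shade} applied to $z\in H^{s-1}$, and equivalence of Sobolev norms along the finite-dimensional manifold $\NN$ of smooth Einstein metrics to bound $\|\Gamma[\bme]-\Gamma[\bme_0]\|_{H^s}$ by $\|\bme-\bme_0\|_{H^s}$) are the same kind of facts the paper uses, so the estimate closes as claimed.
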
 
\begin{proof} 
We have 
\begin{align*} 
(Q_{\rme,\bme} z)^i &= 
(\rme^{mn} - \bme^{mn}) D \Gamma_{mn}^i \big{|}_\bme . D \Shade \big{|}_\rme
. z \\
&\quad 
+ \bme^{mn} D \Gamma_{mn}^i \big{|}_\bme . D\Shade \big{|}_\rme . z
\end{align*} 
By the discussion in section \ref{sec:slicedeform}, $D\Shade \big{|}_\rme
. z$ is of the form 
$$
h^{\TT\parallel} + \Lie_{Y^\parallel} \bme
$$
%From (\ref{eq:DV}) we have that 
A calculation shows that 
\begin{equation}\label{eq:DGammauTT} 
\bme^{mn} D \Gamma_{mn}^i \big{|}_\bme . u^{\TT} = 0.
\end{equation} 
for any tensor $u^{\TT}$ which is $\TT$ with respect to $\bme$. In particular 
$\bme^{mn} D \Gamma_{mn}^i \big{|}_\bme . h^{\TTpara} = 0$. 
Further, $Y^\parallel$ solves (\ref{eq:PYi}). As discussed in section
\ref{sec:HC}, the operator $P$ occurring in that equation is an
isomorphism, in fact as an operator $H^{s'} \to H^{s'-2}$ for any $s'$. 

This allows us to estimate 
$$
%\left \Vert
|| 
\bme^{mn} D \Gamma_{mn}^i \big{|}_\bme . D\Shade \big{|}_\rme . z 
||_{H^s}
%\right\Vert_{s}
 \leq C ||\bme - \bme_0||_{H^s} ||z||_{H^{s-1}}
$$
Here we have made use of the fact that the shadow map 
$\Shade$, in view of its definition,
is smoothing, see remark \ref{rem:shade}. 
Together with the above discussion this completes the proof.
\end{proof}

\subsection{The scale invariant evolution equations} \label{sec:scaleinv:evol}
Define 
$$
\del_{ij} = \half ( \rnabla_i V_j + \rnabla_j V_i) ,
$$
where $V$ is the tension field defined with respect to $\rme$, $\bme$. 
Following the work in \cite{AML} we will
consider the modified Einstein evolution equations 
obtained by replacing $\rR_{ij}$ by the
quasilinear elliptic system 
$$
\rR_{ij} - \del_{ij} .
$$
We remark that both $\rR_{ij}$ and $\del_{ij}$ are scale invariant quantities. 
The modified Einstein evolution
equations that we will consider are, 
in terms of the scale invariant variables, 
\begin{subequations}\label{eq:Einrescaled} 
\begin{align} 
\rme_{ij,T} &= 2\rLapse \rKtr_{ij} + 2(\frac{\rLapse}{n} - 1) \rme_{ij} -
\Lie_{\rShift} \rme_{ij} \label{eq:evolution-tg} \\
\rKtr_{ij,T} &= -(n-1) \rKtr_{ij} 
 - \rLapse ( \rR_{ij} - \del_{ij} + \frac{n-1}{n^2} \rme_{ij} ) \nonumber \\
&\quad + \rnabla_i \rnabla_j \rLapse + 2 \rLapse \rKtr_{im} \rKtr^m_{\ j}
 \nonumber \\
&\quad  - \frac{1}{n} (\frac{\rLapse}{n} - 1) \rme_{ij} 
- (n-2) (\frac{\rLapse}{n} - 1) \rKtr_{ij} \nonumber \\ 
&\quad 
- \Lie_{\rShift} \rKtr_{ij}  \label{eq:evolution-tKtr} 
\end{align} 
\end{subequations}
and the constraint equations take the form
\begin{subequations}\label{eq:Conrescaled}
\begin{align} 
0 &= \rR + \frac{n-1}{n} - |\rKtr|^2 \label{eq:Conrescaled-ham} \\ 
0 &= \rnabla^i \rKtr_{ij} \label{eq:Conrescaled-mom}  
\end{align} 
\end{subequations} 
%It should be noted that the Ricci tensor in covariant form is scale
%invariant, as is $\del_{ij}$. 
The defining equations 
%(\ref{eq:pdefineNX}) 
for the rescaled lapse and shift are
\begin{subequations}\label{eq:defining-resc} 
\begin{align} 
-\rDelta \rLapse + (|\rKtr|^2 + \frac{1}{n} ) \rLapse &= 1 \label{eq:Lapresc} \\
\rDelta \rShift^i + \rR^i_{\ j} \rShift^j - \Lie_{\rShift} V^i &= 
(-2\rLapse \rKtr^{mn} + 2 \rnabla^m \rShift^n) (\Gamma[\rme]_{mn}^i 
- \bGamma_{mn}^i) \nonumber \\
&\quad 
- 2( \frac{\rLapse}{n}-1) V^i \nonumber \\
&\quad + 2 (\rnabla^m \rLapse) \rKtr^i_{\ m} 
+ (2 - n) \rnabla^i
( \frac{\rLapse}{n}-1) 
\nonumber \\ 
&\quad - \rme^{mn} \partial_T \bGamma_{mn}^i \label{eq:shiftresc} 
\end{align} 
\end{subequations} 
where the last term will be present only in case $\bme_0$ has a nontrivial
deformation space $\NN$. 

We remark that the evolution equations (\ref{eq:Einrescaled}) 
do not automatically leave invariant the
conditions $\tr \rKtr =0$ and $V = 0$,  
but these hold as a consequence of imposing the
defining equations (\ref{eq:defining-resc}), cf. the discussion in
section \ref{sec:locex}. 

\subsection{Evolution of the shadow metric} 
\label{sec:shadow} 
Recall from section \ref{sec:gauge-cond} that the 
shadow metric is defined by the relation 
\begin{equation}\label{eq:shadowdef} 
0 = \int_M (\rme_{ij} - \bme_{ij} ) \frac{\partial \bme^{ij}}{\partial
q^\alpha} \mu_{\bme} , \quad \alpha = 1, \dots, m
\end{equation} 
We now time differentiate (\ref{eq:shadowdef}). 
Note that we can write 
$$
\partial_T \bme_{ij} = \frac{\partial \bme_{ij}}{\partial q^\beta} \dot
q^\beta
$$
Thus time differentiating (\ref{eq:shadowdef}) gives 
\begin{align*} 
0 &= \int_M (\rme_{ij} - \bme_{ij}) 
\frac{\partial}{\partial q^{\beta}} \left ( \frac{\partial \bme^{ij}}{\partial
q^\alpha} \right ) {\dot q}^\beta \mu_{\bme} \\ 
&\quad + \int_M (\rme_{ij} - \bme_{ij}) \frac{\partial \bme^{ij}}{\partial
q^\alpha} \half \bme^{mn} \frac{\partial \bme_{mn}}{\partial
q^\beta} {\dot q}^\beta \mu_{\bme} \\
&\quad + \int_M \partial_T \rme_{ij} \frac{\partial \bme^{ij}}{\partial
q^\alpha} \mu_{\bme} \\
&\quad + \int_M {\dot q}^\beta \left ( \frac{\partial
\bme_{ij}}{\partial q^\beta} \bme^{im} \bme^{jn} \frac{\partial
\bme_{mn}}{\partial q^\alpha} \right ) \mu_{\bme} 
\end{align*} 

The matrix 
$$
M_{\alpha\beta} := \int_M \frac{\partial \bme_{ij}}{\partial
q^\beta} \bme^{im} \bme^{jn} \frac{\bme_{mn}}{\partial q^\alpha}
\, \mu_{\bme}
$$ 
is invertible since the $\frac{\partial \bme_{ij}}{\partial q^\alpha}$ form
a basis for the tangent space $T_{\bme} \NN$. 

Thus for $(\rme_{ij} - \bme_{ij})$ sufficiently small, 
\begin{align*} 
\tM_{\alpha\beta} &:= \int_M \left ( \frac{\partial \bme_{ij}}{\partial
q^\beta} \bme^{im} \bme^{jn} \frac{\partial \bme_{mn}}{\partial q^\alpha}
\right ) \, \mu_{\bme} \\
&\quad + \int_M (\rme_{ij} - \bme_{ij} ) \frac{\partial}{\partial q^\beta}
\left ( \frac{\partial
\bme^{ij}}{\partial q^\alpha} \right ) \, \mu_{\bme} \\
&\quad + \int_M (\rme_{ij} - \bme_{ij} ) \frac{\partial \bme^{ij}}{\partial
q^\alpha} \half  \bme^{mn} \frac{\partial \bme_{mn}}{\partial
q^\beta} \, \mu_{\bme} 
\end{align*} 
will also be invertible and positive definite 
so we can solve for $\dot q^\beta$ in terms of the
inverse matrix applied to $\int_M \partial_T \rme_{ij} \frac{\partial
\bme^{ij}}{\partial q^\alpha} d \mu_{\bme}$. The defining equation for
$\partial_T \bme_{ij}$ can now be expressed in terms of an equation for
${\dot q}^\beta$ of the form 
$$
0 =  \tM_{\alpha\beta} \dot q^\beta + \int_M \partial_T \rme_{ij} 
\frac{\partial \bme^{ij}}{\partial q^\alpha} \, \mu_{\bme}
$$
In applying this setup, 
%$\rme_{ij}$ will be given by the solution to the rescaled
%Einstein equations, and we will thus have 
%$$
%0 = - \tM_{\alpha\beta} \dot q^\alpha + \int_M \partial_T \rme_{ij} 
%\frac{\partial \bme^{ij}}{\partial q^\alpha} d\mu_{\bme}
%$$
%with 
$\partial_T \rme_{ij}$ will be given by (\ref{eq:evolution-tg}). 
%Thus, ${\dot
%q}^\alpha$ will be of first order in small quantities like $\rKtr_{ij}$. 

\begin{remark} \label{rem:shade-der} 
In terms of the shadow map $\Shade: \MM \to \NN$, cf. remark \ref{rem:shade}, we have 
$$
\partial_T \bme = D \Shade \big{|}_{\rme} \partial_T \rme .
$$
\end{remark} 

\section{Local existence} \label{sec:locex}
We shall prove local wellposedness for the system
(\ref{eq:Einrescaled}, \ref{eq:Conrescaled}, \ref{eq:defining-resc}) for the
rescaled variables $(\rme,\rKtr,\rLapse,\rShift)$ by making use of the results
of \cite{AML} applied to the corresponding system of modified Einstein
evolution equations, constraint equations and defining equations 
for the original variables
$(\pme, \pK,\pLapse,\pShift)$. Note that in \cite{AML} these fields are
denoted without the tilde. 

If the deformation space is trivial, i.e. if $\bme_0$ is strictly stable, 
%local existence 
%for the gauge fixed system
%consisting of the modified evolution equations for the rescaled variables, 
%(\ref{eq:Einrescaled}), and the defining equations (\ref{eq:defining-resc})
we can apply the same argument as in the standard case considered in
\cite{AML}. 
We will now consider the modifications necessary for the case when the
deformation space is nontrivial. Let $t$ be the time variable in the system
considered in \cite{AML}. For the solution to this system, it will be the
case that $t$ is the CMC time, $t = \tr K$. 
However, it should be noted that 
in the course of the iteration procedure, and the proof of well posedness, as
presented in \cite{AML}, this cannot be assumed. 
We remark that it follows from
our assumption on initial data, and continuity, that 
$\pme^{ij} \pK_{ij}/t$ is close to $1$ for times $t$ close to the initial
time, 
so that in the construction of solutions we may assume that we 
are in an almost CMC situation. 

For the application in this paper, we are
interested only in the small data situation where 
$t^2 \pme$ is close to a background metric $\bme_0$, and 
$t\pK_{ij} - (t^2/n) \pme_{ij}$ 
is small. 

Define $\taubar$ to be the average mean curvature, 
$$
\taubar = (\int_M \pme^{ij} \pK_{ij} \mu_{\pme} ) / \int_M \mu_\pme
$$

The system of equations which will be
considered is the same as in 
\cite{AML}, with the difference that the 
shadow metric $\bme$ is time dependent, $\bme = \bme(t)$, and that the
spatially harmonic gauge is now defined with respect to the time dependent
shadow metric 
$\bme$. 
We define the shadow metric using the shadow map $\Shade$, see 
remark \ref{rem:shade}, by letting 
the shadow metric $\bme(t)$ be given by 
$$
\bme(t) = \Shade[\taubar^2(t) \pme(t)] .
$$
As mentioned above, we are considering a small data situation and in
particular, $\taubar^2 \pme$ is in a neighborhood of the shadow manifold $\NN$
where the map $\Shade$ is well defined 
and smooth. 

Using the notation of \cite{AML}, 
the system of evolution equations can be written in the form 
$$
L[\pme,\pLapse,\pShift] \UU = \FF
$$
with $\UU = (u,v)$. 
Let $\bnabla$ be the covariant derivative defined with respect to the metric
$\bme$. 
In order to prove local wellposedness for the resulting system, we use a
wave-equation type energy analogous to the one used in
\cite{AML}, i.e. an energy of the form 
$$
E = \int_M (|u|^2 + |\bnabla u|_\pme^2 + |v|^2)\mu_{\pme}
$$
where for a 2-tensor $u$, $|u|$ is defined in terms of $\bme$ by 
$|u|^2 = u_{ij} u_{kl} \bme^{ik} \bme^{jl}$, and $|\bnabla u|_\pme$ is defined
by $|\bnabla u|_\pme^2 = \bnabla_i u_{jk} \bnabla_l u_{mn} \pme^{il} \bme^{jm}
\bme^{kn}$.

Given $\pme,\pShift$, let $\rho$ be defined by
\begin{equation}\label{eq:rhodef}
\rho = -\half (\partial_t \pme - \Lie_\pShift \pme ) ,
\end{equation}
Taking into account the time dependence of $\bme$, the energy estimate
\cite[Lemma 2.4]{AML} is replaced by 
$$
\partial_t E \leq C (E^{1/2} ||\FF||_{H^1\times L^2} + (1 + ||\rho||_{L^\infty}
+ ||\partial_t \bme||_{L^\infty} ) E
$$
Using this estimate it is straightforward to prove the higher order energy
estimates needed for local wellposedness, following the same argument as in
\cite{AML}. 
Once the energy estimate is obtained, a solution to the field equations is
constructed using an iteration as in \cite[\S
2.2]{AML}. 

The argument from \cite{AML} carries over nearly without
modification. 
However, the fact that 
we now allow for a moving
shadow metric $\bme$ 
gives rise to an extra term in the defining equation for
the shift vector, of the form 
$$
\pme^{mn} \partial_t \Gamma[\bme]^i_{mn}.
$$
Since $\partial_t \bme = D\Shade\big{|}_{\taubar^2 \pme} \partial_t
(\taubar^2 \pme)$, and $\partial_t \pme =
- 2\pLapse \pK + \Lie_\pShift \pme$, 
this adds a non-local operator acting on $\pShift$ to the
defining equation for the shift vector. 

However, in view of the estimate given in
Lemma \ref{lem:Qlemma}, the effect of this term is a small
perturbation for the small data situation we are considering here, and hence
the modified shift equation satisfies the same estimates as the one which was
considered in \cite{AML}. 
%We have $\partial_t g_{ij} = - 2N K_{ij} + \Lie_X
%g_{ij}$. 
In order to analyze the modified shift 
operator, 
it suffices to
consider the expression 
$$
\pme^{mn} D \Gamma^i_{mn} \big{|}_\bme . D\Shade \big{|}_{\taubar^2 \pme}
(\Lie_{\taubar \pShift}
 \taubar^2 \pme) .
$$
Here and below 
we have made use of the average mean curvature to introduce in appropriate places scale invariant
fields along the lines of section 
\ref{sec:scalinvvar}.
We shall apply Lemma \ref{lem:Qlemma} to estimate 
$$
Q_{(\taubar^2 \pme, \bme)}^i . (\Lie_{\taubar \pShift} \taubar^2 \pme) = 
\taubar^{-2} \pme^{mn} D \Gamma^i_{mn} \big{|}_\bme . D\Shade \big{|}_{\taubar^2 \pme} 
(\Lie_{\taubar\pShift} \taubar^2 \pme) .
$$
%Note $\Lie_{t^{-1} X} t^2 g = t \Lie_X g$. 
We have, in the small data situation we are considering,  
\begin{align} 
||Q_{(\taubar^2 \pme, \bme)} .  \Lie_{\taubar \pShift} \taubar^2 \pme ||_{H^s} 
&\leq C ( ||\taubar^2 \pme - \bme||_{H^s} +
||\bme - \bme_0||_{H^s} ) ||\Lie_{\taubar \pShift} \taubar^2 \pme||_{H^{s-1}} \\
&\leq C( ( ||\taubar^2 \pme - \bme||_{H^s} +
||\bme - \bme_0||_{H^s} ) ||\taubar \pShift||_{H^{s+1}} \, , 
\label{eq:Qappl}
\end{align}
where in the final line we stated the inequality in terms of 
$||\taubar \pShift||_{H^{s+1}}$ since that is
the norm which is relevant for the application we have in mind. 
Note we are considering only the
case where $t < t_0 < 0$ for some $t_0$, and consequently due to the small
data assumption, $\taubar < \taubar_* < 0$ for some $\taubar_*$.
In view of this estimate, the extra
term in the shift equation due to the time dependence of $\bme$ can be
considered as a small perturbation which does not affect the existence and 
uniqueness results for this equation proved in \cite{AML}. In particular, the
estimates for $\pShift$ needed for the iteration argument in \cite{AML} are valid
also for the modified system. 

With this remark, the rest of the argument goes
through unchanged. 
This proves local wellposedness in $H^s$, $s > n/2+1$. 
%Since we
%will be working in terms of the energies $E_s$ defined in section
%\ref{sec:energy}, we will apply the result only in case $s$ is an integer. 

It remains to consider the propagation of gauges and constraints. 
Introduce, as in \cite{AML}, the gauge and constraint quantities 
\begin{subequations} \label{eq:defConstr}
\begin{align}
\tilde A &= \ptr \pK - t , \\
%H &= R + (\tr K)^2 - |K|^2 , \label{eq:Hdef} \\
\tilde V^k &= \pme^{ij} ( \Gamma[\pme]_{ij}^k - \Gamma[\bme]_{ij}^k ) \, ,\label{eq:Vdef-late}\\
%V^k &= g^{ij} e^k( \nabla_i e_j - \hnabla_i e_j ) 
%\label{eq:Vdef-late}\\
%\del_{ij} &= \half ( \nabla_i V_j + \nabla_j V_i) ,
%\label{eq:delijdef}\\
%\del &= \nabla_i V^i ,\\
\tilde F &= R[\pme] + (\ptr \pK)^2 - |\pK|^2_\pme
- \nabla[\pme]_i \tilde V^i ,\\
\tilde D_i &= \nabla[\pme]_i \ptr \pK - 2\nabla[\pme]^j \pK_{ji} ,
\end{align}
\end{subequations}
Here we raise and lower indices using $\pme^{ij}$ and $\pme_{ij}$ and use the
notation $\ptr \pK = \pme^{mn} \pK_{mn}$ and $|\pK|_\pme^2 = \pK_{mn}
\pK^{mn}$. 
We consider the energy expression 
$\widetilde{\CMcal E}$ defined by 
\begin{align*} 
\widetilde{\CMcal E} &= \half \int_M (|\tilde A|^2 + |\nabla[\pme] \tilde
A|_\pme^2 + |\tilde F|^2 )\mu_\pme 
+ \half \int_M (|\tilde V|_\pme^2 + |\nabla[\pme]\tilde V|_\pme^2 + |\tilde D|_\pme^2 )\mu_\pme
\end{align*} 
Recall that the defining equations for $\pLapse$ and $\pShift$ are derived by
time differentiating the gauge conditions using the evolution equations. In
particular, the shift equation may be chosen such that the expression for
$\partial_t \tilde V$ is the same as in the rigid case. 
One sees
from considering the gauge and constraint quantities that the only 
potential 
difference
from the case when the deformation space is trivial 
is via the evolution equation for $\tilde V$. 
Thus, with the aforementioned choice, the
evolution equations for the gauge and constraint quantities are identical
in 
the case when the deformation space is nontrivial, to the evolution equations 
valid in the case when the
background metric $\bme_0$ is rigid. 
Therefore we are able to conclude
by the  same argument as in \cite{AML}, that if the constraint and gauge
conditions are
satisfied initially, they are also satisfied throughout the course of the
evolution. 

The system discussed here is the modified Einstein evolution equations
without rescaling. 
%We have shown this system is locally wellposed in $H^s$,
%$s > n/2 + 1$ in the
%critical case. 
We now return to the situation considered in the rest of the paper
and state the results we have proved for the 
rescaled variables $(\rme,\rKtr,\rLapse,\rShift)$ 
introduced in section
\ref{sec:rescaled}.  It follows from the above discussion that 
the system of  
equations for the rescaled variables
(\ref{eq:Einrescaled}, \ref{eq:Conrescaled}, \ref{eq:defining-resc}) 
is also well
posed
in the shadow metric gauge.
As shown in \cite{AML}, the result that we have proved
can be formulated as a continuation principle, which will be used for the
global existence theorem. 

In formulating the continuation principle we will deal with rescaled data. 
Fix a background metric 
$\bme_0 \in \NN$, and for $\delta > 0$, $s > n/2 + 1$, let 
$\BB_{s, \delta}(\bme_0, 0)$ 
be the ball of radius $\delta$ 
in $\CC^s$, centered on $(\bme_0, 0)$. Here $\CC^s$ denotes the space of
$(\rme, \rKtr) \in \CC$ such that $(\rme, \rKtr) \in H^s \times H^{s-1}$. 
The following Lemma follows immediately from Lemma 
%\ref{lem:shadow-exist} 
\ref{lem:shadow-exist-met} 
and
the construction of the shadow map $\Shade$.
\begin{lemma} \label{lem:DShade} 
Assume $\NN$ is integrable near $\bme_0$. 
Then there is a $\delta > 0$ such 
that 
\begin{enumerate} 
\item for $(\rme, \rKtr) \in \BB_{s, \delta}(\bme_0, 0)$, 
there is a unique $\bme \in
\NN$ such that the triple $(\bme, \rme, \rKtr)$ satisfies the shadow metric
condition, 
\item there is a constant $C$ such that  
the Frechet derivative $D\Shade$ satisfies 
$$
||D\Shade||_{Op,s,\infty} \leq C
$$
in $\BB_{s, \delta}(\bme_0, 0)$. 
Here the left hand side is the operator norm of
the Frechet derivative from $H^s$ to $L^\infty$.
\end{enumerate} 
\end{lemma}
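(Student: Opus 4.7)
The plan is to reduce both parts to Lemma~\ref{lem:shadow-exist-met} combined with the explicit computation of the shadow map's derivative already sketched in Section~\ref{sec:shadow} (the ${\dot q}^\beta$ calculation), using crucially that the target $\NN$ is finite dimensional and consists of smooth Einstein metrics.

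First, I would dispose of the existence and uniqueness assertion. The shadow metric condition (\ref{eq:shadowdef-first}) only involves $\rme$, not $\rKtr$. So, for $(\rme,\rKtr) \in \BB_{s,\delta}(\bme_0,0)$, the bound $\| \rme - \bme_0 \|_{H^s} < \delta$ holds, and (after shrinking $\delta$ if necessary to match the constant in Lemma~\ref{lem:shadow-exist-met}) we get a unique $\bme \in \NN$ satisfying the shadow condition for $\rme$. This defines the shadow map $\Shade$ on $\BB_{s,\delta}(\bme_0,0)$ via projection through the first factor.

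For the operator norm bound, I would differentiate (\ref{eq:shadowdef-first}) with respect to $\rme$ exactly as in Section~\ref{sec:shadow}. A variation $\dot\rme$ produces $D\Shade \big|_\rme . \dot\rme = \dot q^\alpha \, \partial \bme / \partial q^\alpha$ where $\dot q^\beta$ solves the finite-dimensional linear system
\begin{equation*}
\tM_{\alpha\beta} \dot q^\beta = -\int_M \dot\rme_{ij} \, \frac{\partial \bme^{ij}}{\partial q^\alpha} \, \mu_{\bme},
\end{equation*}
with $\tM_{\alpha\beta}$ the matrix introduced in Section~\ref{sec:shadow}, which is invertible (in fact uniformly invertible over $\BB_{s,\delta}(\bme_0,0)$, shrinking $\delta$ once more) since it is a small perturbation of the positive definite matrix $M_{\alpha\beta}$.

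The desired $H^s \to L^\infty$ bound then follows from two observations. First, since $\NN$ is a finite-dimensional smooth manifold and each $\bme \in \NN$ is a smooth (real analytic, by the remark following Proposition~\ref{prop:uniform}) Einstein metric depending smoothly on the coordinates $q^\alpha$, the basis tensors $\partial \bme / \partial q^\alpha$ and $\partial \bme^{ij}/\partial q^\alpha$ have $C^k$ norms uniformly bounded on the precompact piece of $\NN$ visited as $\rme$ varies in $\BB_{s,\delta}(\bme_0,0)$. Second, the right-hand side of the linear system is controlled by Cauchy--Schwarz as $|\dot q| \le C \|\dot\rme\|_{L^2} \le C \|\dot\rme\|_{H^s}$. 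Combining,
\begin{equation*}
\| D\Shade\big|_\rme . \dot\rme \|_{L^\infty} \le |\dot q^\alpha| \, \Big\| \frac{\partial \bme}{\partial q^\alpha} \Big\|_{L^\infty} \le C \|\dot\rme\|_{H^s}.
\end{equation*}

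The only mild subtlety is ensuring all bounds are uniform in $\BB_{s,\delta}(\bme_0,0)$; this is handled by invoking Lemma~\ref{lem:shadow-exist-met} to keep $\bme$ in a precompact subset of $\NN$ and by continuity of $\bme \mapsto \tM^{-1}$ and $\bme \mapsto \partial \bme / \partial q^\alpha$ on that subset. There is no real analytic obstacle beyond bookkeeping because the finite dimensionality of $\NN$ does the heavy lifting: it converts any reasonable bound on $\dot q$ into an $L^\infty$ bound on $\dot\bme$ automatically.
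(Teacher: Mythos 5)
Your proposal is correct and follows essentially the same route as the paper, whose proof consists of the remark that the lemma ``follows immediately from Lemma \ref{lem:shadow-exist-met} and the construction of the shadow map $\Shade$'': part (1) is exactly your reduction to Lemma \ref{lem:shadow-exist-met} (the condition involves only $\rme$, not $\rKtr$), and part (2) is exactly the $\tM_{\alpha\beta}$ linear system of section \ref{sec:shadow} combined with the finite dimensionality and smoothness of $\NN$ that underlie the smoothing property in remark \ref{rem:shade}. The only cosmetic imprecision is that you identify the shadow metric condition with (\ref{eq:shadowdef-first}) alone, whereas definition \ref{def:shadow-gauge-new} also includes the CMCSH condition $V^i=0$; since that condition likewise involves only $\rme$ and is covered by the cited lemma, nothing in your argument is affected.
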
 
Based on  the above discussion of the proof of local wellposedness and the
continuation principle \cite[Theorem 5.1]{AML}, we can state the
following continuation principle which is appropriate for the situation
considered in this paper. Note that in the small data situation considered
here, the assumption of negative sectional curvature for the target metric
made in \cite{AML} is not needed. In particular, if $\bme$ is negative
Einstein and $\rme$ is close to $\bme$, then the operator $X^i \mapsto 
\rme^{mn} R[\bme]^i{}_{mjn} X^j$ is strictly negative, in which case the
proof of \cite[Lemma 5.2]{AML} goes through unchanged.  
\begin{thm} \label{thm:continuation} Fix $s > n/2 + 1$. 
Let $\bme_0 \in \NN$ be given and assume $\NN$ is integrable 
near $\bme_0$. Then there is a $\delta_0 > 0$ such that the conclusion of 
Lemma \ref{lem:DShade} holds, and such that 
the following continuation principle holds for the system
(\ref{eq:Einrescaled}, \ref{eq:Conrescaled}, \ref{eq:defining-resc}):

Let 
$(\rme_0, \rKtr_0)$ be rescaled data given at an initial rescaled time $T_0$,
such that the triple 
$(\bme_0, \rme_0, \rKtr_0)$ satisfies the shadow metric condition. Assume
that 
\begin{enumerate} 
\item 
$(\rme_0, \rKtr_0) \in
\BB_{s,\delta_0}(\bme_0,0)$, 
\item 
$[T_0, T_+)$ is the maximal existence
interval in $H^s$ for the system 
(\ref{eq:Einrescaled}, \ref{eq:Conrescaled}, \ref{eq:defining-resc}) with
initial data $(\bme_0, \rme_0, \rKtr_0)$. 
\end{enumerate} 
Then, either 
\begin{enumerate} 
\item $T_+ = \infty$ or 
\item
the solution curve $T \mapsto (\rme_{ij}(T),
\rKtr_{ij}(T))$ leaves $\BB_{s,\delta_0}(\bme_0,0)$ at some finite time.
\end{enumerate} 
\end{thm}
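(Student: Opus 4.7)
The plan is to combine the local wellposedness result, which was sketched earlier in this section by adapting the argument of \cite{AML}, with a standard maximality/bootstrap argument, while using Lemma \ref{lem:DShade} to control the auxiliary shadow metric uniformly on balls in $\CC^s$. First, I would fix $\delta > 0$ small enough that the conclusion of Lemma \ref{lem:DShade} applies, so that for any $(\rme,\rKtr) \in \BB_{s,\delta}(\bme_0,0)$ a unique shadow metric $\bme \in \NN$ is determined and $D\Shade$ is uniformly bounded as an operator. This is what makes the evolution system genuinely autonomous in the enlarged variable $(\rme,\rKtr,\bme)$, since $\bme$ is recovered at each instant from $\rme$ via $\Shade$.

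Next I would invoke the local wellposedness result obtained in this section. The essential point is the wave-equation type energy estimate
\[
\partial_t E \leq C\bigl(E^{1/2}\|\FF\|_{H^1 \times L^2} + (1+\|\rho\|_{L^\infty}+\|\partial_t \bme\|_{L^\infty})E\bigr),
\]
together with the modified shift equation whose extra non-local term is controlled by Lemma \ref{lem:Qlemma} via the estimate \eqref{eq:Qappl}. In the small-data situation determined by $\BB_{s,\delta_0}(\bme_0,0)$, the coefficients in the elliptic problems for $\rLapse$ and $\rShift$ are uniformly bounded, and the iteration scheme of \cite[\S 2.2]{AML} produces a solution $(\rme,\rKtr,\rLapse,\rShift)$ on some time interval $[T_0, T_0 + \eta]$, where $\eta > 0$ depends only on $\delta_0$ and the $H^s$-norm of the data. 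Propagation of the CMC gauge, the spatial harmonic condition, and the constraint equations is handled by the energy $\widetilde{\CMcal E}$ for $(\tilde A, \tilde V, \tilde F, \tilde D)$ exactly as in \cite{AML}, since the shift equation has been chosen so that the evolution equation for $\tilde V$ matches the rigid case.

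Given these ingredients the continuation principle follows by the standard argument. Shrink $\delta_0 \leq \delta$ so that any data in $\BB_{s,\delta_0}(\bme_0,0)$ lie well inside the ball where uniform local existence with time step $\eta(\delta_0)$ holds. Suppose $[T_0, T_+)$ is the maximal $H^s$-existence interval with $T_+ < \infty$, and suppose the solution curve $T \mapsto (\rme(T), \rKtr(T))$ does \emph{not} leave $\BB_{s,\delta_0}(\bme_0,0)$. Then for every $T < T_+$ sufficiently close to $T_+$, the data $(\rme(T), \rKtr(T))$ satisfy the shadow metric condition with some $\bme(T) \in \NN$ (again by Lemma \ref{lem:DShade}) and lie in $\BB_{s,\delta_0}(\bme_0,0)$. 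Restarting from such a $T$ with $T_+ - T < \eta$, the local wellposedness result extends the solution past $T_+$, contradicting maximality. Hence either $T_+ = \infty$ or the curve leaves the ball at a finite time.

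The step I expect to be most delicate is verifying that the local existence time $\eta$ really depends only on the ball $\BB_{s,\delta_0}(\bme_0,0)$ and not on further features of the particular solution, since the shadow metric $\bme(T)$ is itself evolving and enters both the spatially harmonic gauge and the shift equation. This is precisely where the smoothing property of $\Shade$ noted in Remark \ref{rem:shade} and the estimate of Lemma \ref{lem:Qlemma} are used to control $\partial_T \bme$ and the new term $\rme^{mn}\partial_T \bGamma^i_{mn}$ in \eqref{eq:shiftresc} by $\|\rme - \bme\|_{H^s} + \|\bme - \bme_0\|_{H^s}$, keeping them as a small perturbation of the rigid case of \cite{AML}.
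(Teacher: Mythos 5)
Your proposal is correct and follows essentially the same route as the paper: local wellposedness is obtained by adapting the \cite{AML} iteration and energy estimates, with Lemma \ref{lem:DShade} giving uniform control of the shadow metric, Lemma \ref{lem:Qlemma} (via \eqref{eq:Qappl}) showing the extra shift term $\rme^{mn}\partial_T\bGamma^i_{mn}$ is a small perturbation, and gauge/constraint propagation handled as in the rigid case; the continuation statement then follows from the uniform local existence time on the ball, exactly as the paper does by invoking the continuation principle of \cite[Theorem 5.1]{AML}. The only detail the paper adds that you omit is the observation that the negative sectional curvature hypothesis of \cite{AML} can be dropped in the small-data setting because $X^i \mapsto \rme^{mn}R[\bme]^i{}_{mjn}X^j$ remains strictly negative for $\rme$ near a negative Einstein metric, which is a minor point of applicability rather than a gap.
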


\section{Definition of energies}
\label{sec:linein-energ}  
In this section, assume that a background $\bme_0$ is given,
  with integrable deformation space $\NN$, and let $\delta_0 > 0$ be a
  sufficiently small constant such that for $(\rme,\rKtr) \in
B_{s,\delta_0}(\bme_0,0)$, the
  conclusion of Theorem \ref{thm:continuation} holds. For the estimates
  proved below in this section, we suppose that $\delta_0$ is decreased as
  necessary.

\subsection{Small quantities} \label{sec:small} 
The data corresponding to the Lorentz cone metric 
  (\ref{eq:lorcone}) is 
$$
(\rme,\rKtr,\rLapse,\rShift) = (\bme,0,n,0)
$$
Thus, 
$$
\rme-\bme, \rKtr, \frac{\rLapse}{n} -1, \rShift
$$
should be considered as small quantities in the small data situation we are
considering. 

In this section we shall consider a solution to the rescaled Einstein
equations presented above, satisfying the CMCSH gauge conditions
(\ref{eq:rgauge}) with respect to a curve of background metrics $\bme$
defined by the shadow metric condition. In case $\bme_0$ is strictly stable,
$\bme \equiv \bme_0$ and hence $\partial_T \bme = 0$. We will show that in this
case the
quantities $\rLapse/n-1, \rShift$ satisfy quadratic estimates in terms of 
$\rme-\bme,\rKtr$. 

If $\bme_0$ has a nontrivial integrable
deformation space, then 
\begin{equation}\label{eq:dTbme-form} 
\partial_T \bme = h^{\TTpara} + \Lie_{Y^\parallel} \bme
\end{equation}
is in general non-vanishing,
and in particular $Y^{\parallel} \ne 0$. In this case,  
$\rLapse/n -1$
%$\hLapse$
satisfies a quadratic estimate as in the strictly stable case, while
$\rShift, Y^\parallel$ separately cannot be expected to satisfy such an 
estimate. However, as we shall prove, 
the sum $\rShift + Y^\parallel$ does. This is
precisely what is needed for the energy estimates to go through. 

\begin{lemma}\label{lem:quad-NX-est} Let $s > n/2+1$. 
For $(\rme,\rKtr) \in
B_{s,\delta_0}(\bme_0,0)$,
there is a constant $C > 0$ such that the  inequalities 
\begin{equation}\label{eq:hLapse-est}
||\frac{\rLapse}{n}-1||_{H^{s+1}} \leq C ||\rKtr||_{H^{s-1}}^2 ,
\end{equation}
\begin{equation}\label{eq:rS-est}
||\rShift||_{H^{s+1}} \leq C ( ||\rKtr||_{H^{s-1}} + ||\rme - \bme||_{H^s} ) 
\end{equation}
and 
\begin{equation}\label{eq:rSY-est}
||\rShift + Y^{\parallel}||_{H^{s+1}} \leq C ( ||\rKtr||_{H^{s-1}}^2  + 
||\rme-\bme||_{H^s}^2 ) 
\end{equation} 
hold. 
\end{lemma}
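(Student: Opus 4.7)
The plan is to derive each of the three estimates from the defining equations (\ref{eq:defining-resc}) via elliptic regularity, while tracking carefully how powers of the small data enter. The lapse estimate is a direct application; the first shift estimate requires controlling the new non-local term coming from the moving shadow metric; and the refined estimate on $\rShift + Y^\parallel$ exploits a cancellation between $\rShift$ and the $\Lie_{Y^\parallel}\bme$ component of $\partial_T \bme$.

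For (\ref{eq:hLapse-est}), substituting $\omega = \rLapse/n - 1$ into (\ref{eq:Lapresc}) gives
\[
(-\rDelta + 1/n + |\rKtr|^2)\omega = -|\rKtr|^2 .
\]
The operator on the left is uniformly elliptic with coercive zero-order part $1/n$, so standard elliptic regularity together with the Sobolev multiplication estimate (valid since $s - 1 > n/2$) yields $\|\omega\|_{H^{s+1}} \leq C \| |\rKtr|^2 \|_{H^{s-1}} \leq C \|\rKtr\|_{H^{s-1}}^2$.

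For (\ref{eq:rS-est}), note first that $V = 0$ by the CMCSH gauge, so the terms in (\ref{eq:shiftresc}) multiplied by $V^i$ drop out. The leading operator on the left is then $\rDelta + \rR^i{}_j$ acting on $\rShift$, and the RHS terms $-2\rLapse\rKtr^{mn}(\Gamma[\rme]-\bGamma)$, $2(\rnabla^m\rLapse)\rKtr^i{}_m$, $(2-n)\rnabla^i(\rLapse/n-1)$ are at worst linear in $(\rKtr, \rme-\bme)$ (the latter two are in fact quadratic by (\ref{eq:hLapse-est})). The first-order-in-$\rShift$ term $2\rnabla^m\rShift^n(\Gamma[\rme]-\bGamma)$ can be absorbed into the LHS since its coefficient is small. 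The new term $-\rme^{mn}\partial_T\bGamma^i_{mn}$ equals $-Q_{\rme,\bme}(\partial_T\rme)$ in the notation of (\ref{eq:Qdef}) after using $\partial_T\bme = D\Shade|_\rme\cdot\partial_T\rme$, and by Lemma \ref{lem:Qlemma} together with (\ref{eq:evolution-tg}) it contributes at most $C(\|\rme-\bme\|+\|\bme-\bme_0\|)(\|\rKtr\| + \|\rShift\|)$, which is a controlled perturbation in the small-data regime. Elliptic inversion then closes on $\|\rShift\|_{H^{s+1}} \leq C(\|\rKtr\|_{H^{s-1}} + \|\rme-\bme\|_{H^s})$.

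For (\ref{eq:rSY-est}), reorganize the same computation using the decomposition $\partial_T\bme = h^{\TTpara} + \Lie_{Y^\parallel}\bme$ from (\ref{eq:TNN}). By linearity of $D\Gamma|_\bme$,
\[
\rme^{mn}\partial_T\bGamma^i_{mn} = \rme^{mn} D\Gamma^i_{mn}|_\bme . h^{\TTpara} + \rme^{mn} D\Gamma^i_{mn}|_\bme . \Lie_{Y^\parallel}\bme .
\]
The first piece is quadratic: writing $\rme^{mn} = \bme^{mn} + (\rme^{mn}-\bme^{mn})$, the $\bme$-part vanishes by (\ref{eq:DGammauTT}) since $h^{\TTpara}$ is $\TT$ with respect to $\bme$, and the remainder is bounded by $C\|\rme-\bme\|\cdot\|h^{\TTpara}\|$, where $\|h^{\TTpara}\|$ is linearly small by (\ref{eq:rS-est}) and (\ref{eq:hLapse-est}). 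For the second piece, $\bme^{mn} D\Gamma^i_{mn}|_\bme . \Lie_{Y^\parallel}\bme = PY^\parallel$ with $P$ the operator of (\ref{eq:Pdef}), and the error $(\rme^{mn}-\bme^{mn})D\Gamma^i_{mn}|_\bme . \Lie_{Y^\parallel}\bme$ is likewise quadratically small. Hence the shift equation rearranges to
\[
L_\rme\rShift + PY^\parallel = (\text{quadratic small}),
\]
where $L_\rme$ denotes the LHS of (\ref{eq:shiftresc}). Writing $L_\rme\rShift + PY^\parallel = L_\rme(\rShift + Y^\parallel) - (L_\rme - P)Y^\parallel$ and observing that $L_\rme - P$ is a difference of connection and Ricci operators of size $O(\|\rme-\bme\|)$, the term $(L_\rme-P)Y^\parallel$ is also quadratic once $\|Y^\parallel\|$ is bounded via (\ref{eq:rS-est}). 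Elliptic invertibility of $L_\rme$ on vector fields, inherited from the invertibility of $P$ at $\bme$ discussed in section \ref{sec:HC}, then yields (\ref{eq:rSY-est}). The hard part of the argument is precisely this last step: identifying the cancellation hidden in $\partial_T\bGamma$, extracting the quadratic gain via (\ref{eq:DGammauTT}) and Lemma \ref{lem:Qlemma}, and controlling the operator-theoretic error $L_\rme - P$ uniformly on the small-data ball.
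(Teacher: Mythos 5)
Your proposal is correct and follows essentially the same route as the paper: the lapse bound from the elliptic equation for $\rLapse/n-1$, the shift bound from the elliptic shift equation with the $\rme^{mn}\partial_T\bGamma^i_{mn}$ term controlled as in Lemma \ref{lem:Qlemma} and absorbed in the small-data regime, and the refined bound from the cancellation (\ref{eq:DGammauTT}) together with the identification of $\bme^{mn}\partial_T\bGamma^i_{mn}$ with the operator $P$ applied to $Y^\parallel$. The only (cosmetic) difference is that you invert the full operator $L_\rme=P_{\rme,\bme}$ on $\rShift+Y^\parallel$ and put $(L_\rme-P)Y^\parallel$ into the quadratic error, whereas the paper inverts $P_{\bme,\bme}$ and places the $(\rme^{mn}-\bme^{mn})$ difference terms on the right-hand side; the two bookkeepings are equivalent.
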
 
\begin{proof} 
Recall that the scale invariant lapse and shift $\rLapse,\rShift^i$ solve the
  defining equations (\ref{eq:defining-resc}). 
It follows from the lapse equation (\ref{eq:Lapresc}) that 
$\hLapse= \rLapse/n - 1$ satisfies the equation 
$$
- \rDelta \hLapse + (|\rKtr|^2 + \frac{1}{n} ) \hLapse =  - |\rKtr|^2 
$$
From this, one finds
that if the gauge conditions are satisfied,
$\hLapse$ can be estimated in
terms of square norm of 
%$\rme - \bme$ and 
$\rKtr$, which proves (\ref{eq:hLapse-est}).

In case $\bme_0$ is rigid, then $\partial_T \bme = 0$ and 
$\rShift$ can be
estimated in terms of square norms of $\rme - \bme$ and $\rKtr$, 
cf. \cite[\S 3]{AML}. However, in case
$\bme_0$ has a nontrivial deformation space $\VV$, $\partial_T \bme$
will in general be nonzero. 
%By construction it will be of the form 
%\begin{equation}\label{eq:dTbme-form}
%\partial_T \bme = h^{\TTpara} + \Lie_{Y^\parallel} \bme
%\end{equation} 
%where $h^{\TTpara} \in \ker \LL$ at $\bme$. 
%Consider the last term in equation (\ref{eq:shiftresc}). We have 
%\begin{equation} \label{eq:dTbGam-form}
%\rme^{mn} \partial_T \bGamma^i_{mn} = (\rme^{mn} - \bme^{mn}) \partial_T
%\bGamma^i_{mn} + \bme^{mn} \partial_T \bGamma^i_{mn} 
%\end{equation} 
By (\ref{eq:evolution-tg}) and (\ref{eq:hLapse-est}), we have an estimate of the form 
\begin{equation}\label{eq:dtrme-est}
||\partial_T \rme||_{H^{s-1}} \leq C ( ||\rKtr||_{H^{s-1}}  + ||\rShift||_{H^{s}} )
\end{equation} 
As discussed in section \ref{sec:stabein}, we may without loss of generality
assume that $\bme_0$, and all metrics in $\NN$ are smooth. 
Hence, it follows from the
definition of the shadow metric that $\partial_T \bme$ is smooth and that
an estimate of the form 
$$
||\partial_T \bme ||_{H^{s'}} \leq C ||\partial_T \rme ||_{L^2}
$$
holds for any $s' \geq 0$. Consider the last term in equation
(\ref{eq:shiftresc}). Following the proof of Lemma \ref{lem:Qlemma} it
is straightforward in view of the above discussion to show that an 
inequality of the form 
\begin{multline}\label{eq:lastterm-est}
||\rme^{mn} \partial_T \bGamma_{mn}^i ||_{H^s} \leq \\
C (||\rme - \bme||_{H^s}
+ ||\bme - \bme_0||_{H^s} ) (||\rKtr||_{H^{s-1}} + ||\rShift||_{H^{s+1}})
\end{multline} 
holds. Here, as in (\ref{eq:Qappl}),  we stated the inequality in terms of $||\rShift||_{H^{s+1}}$
since that is the norm which is relevant for the application at hand. 

We now recall some facts from \cite{AML}. Let $V_{\rme,\bme}$ be the tension
field defined with respect to the metrics $\rme,\bme$, and denote by 
$\rnabla[g],\rDelta[g],\rR[g]$ 
the covariant derivative, Laplacian and curvature defined
with respect to $\rme$, and by $\bnabla, \bDelta,\bR$ the corresponding objects
defined with respect to $\bme$.  

Let 
$\Delta_{\rme,\bme}$ be the operator defined on symmetric 2-tensors by 
\begin{equation}\label{eq:Del:rme:bme}
\Delta_{\rme,\bme} h_{ij} = \frac{1}{\mu_{\rme}} \bnabla_m ( \rme^{mn} \mu_{\rme}
\bnabla_n h_{ij})
\end{equation}
see \cite[(1.8)]{AML}. 
Define the operator $P_{\rme,\bme}$ by
$$
P_{\rme,\bme} X^i = \rDelta[g] X^i + R[\rme]^i_j X^j - \Lie_X V^i_{\rme,\bme} -
2(\nabla[\rme]^m X^n) (\Gamma[\rme]^i_{mn} - \Gamma[\bme]^i_{mn})
$$
The defining equation for $X$ is given in terms of $P_{\rme,\bme}$ in equation
(\ref{eq:shiftresc2}) below. 
Note that we are interested here in the case where $V = 0$, but we include it
in the above formula since it makes the calculations below more transparent. 
Similarly, let $P_{\bme,\bme}$ be the corresponding operator with $g$
replaced by $\bme$ 
$$
P_{\bme,\bme} X^i = \bDelta X^i + \bR^i_j X^j - \Lie_X V^i_{\bme,\bme} -
2(\bnabla^m X^n) (\bGamma^i_{mn} - \bGamma^i_{mn})
$$
where the last two terms vanish identically. 
By the results of \cite[\S 5]{AML}, these operators take the form 
\begin{align} 
P_{\rme,\bme} X^i &= \rme^{mn} \bnabla_m \bnabla_n X^i + \rme^{mn}
\bR^i{}_{mjn}X^j  \label{eq:Prmebme} \\
P_{\bme,\bme} X^i &= \bme^{mn} \bnabla_m \bnabla_n X^i + \bme^{mn}
\bR^i{}_{mjn} X^j \label{eq:Pbmebme}
\end{align} 
where the index on $\bR$ is raised with $\bme$. By the discussion in section 
\ref{sec:HC}, $P_{\rme,\bme}: H^{s+1} \to H^{s-1}$ is an isomorphism, as is
$P_{\bme,\bme}$. From this it is straightforward to show
 that the inequality 
%\begin{equation}\label{eq:rS-est}
$$
||\rShift||_{H^{s+1}} \leq C ( ||\rKtr||_{H^{s-1}} + ||\rme - \bme||_{H^s} ) 
$$
%\end{equation} 
holds, i.e. we have proved (\ref{eq:rS-est}). Here we have made use of (\ref{eq:lastterm-est}) 
%the inequality in Lemma \ref{lem:Qlemma} 
and absorbed terms which can be estimated by 
$$
(||\rme-\bme||_{H^s} + ||\bme - \bme_0||_{H^s} ) ||\rShift||_{H^{s+1}}
$$ 
after a small change in the constant.

It remains to prove (\ref{eq:rSY-est}). 
We  write the defining equation for $\rShift$ as 
\begin{align} 
P_{\rme,\bme} \rShift^i &= 
-2\rLapse \rKtr^{mn}  (\rGamma[g]_{mn}^i 
- \bGamma_{mn}^i) 
- 2( \frac{\rLapse}{n}-1) V^i \nonumber \\
&\quad + 2 (\rnabla^m \rLapse )\rKtr^i_{\ m} 
+ (2 - n) \rnabla^i
( \frac{\rLapse}{n}-1) \nonumber 
\\ 
&\quad - \rme^{mn} \partial_T \bGamma_{mn}^i \label{eq:shiftresc2} 
\end{align} 
where by the gauge conditions we may set $V=0$. 
All terms in the right hand
side of (\ref{eq:shiftresc2}) are quadratic in small quantities, 
except the last. 

Recall that $\partial_T\bme$ is of the form 
(\ref{eq:dTbme-form}). 
%\begin{equation}\label{eq:dTbme-form}
%$$
%\partial_T \bme = h^{\TTpara} + \Lie_{Y^\parallel} \bme
%$$
%\end{equation} 
A calculation shows 
$$
\bme^{mn} D\bGamma^i_{mn} . h^{\TTpara} = 0
$$
due to the fact that $h^{\TTpara}$ is transverse traceless 
with respect to $\bme$. 
Therefore we have 
$$
\bme^{mn} \partial_T \bGamma^i_{mn} = \bme^{mn} D\bGamma^i_{mn}
.(\Lie_{Y^\parallel} \bme)
$$
and a direct calculation gives 
$$
\bme^{mn} \partial_T \bGamma^i_{mn} = \bDelta Y^{\parallel\, i} + \bR^i_f
Y^{\parallel\, f} 
$$
Comparing with (\ref{eq:Pbmebme}) we have 
$$
\bme^{mn} \partial_T \bGamma^i_{mn} = P_{\bme,\bme} Y^{\parallel\, i} 
$$
and hence we may write the defining equation for $\rShift$ in the form 
\begin{align*} 
P_{\bme,\bme} (\rShift^i + Y^{\parallel\, i}) &= 
-2\rLapse \rKtr^{mn}  (\rGamma[g]_{mn}^i 
- \bGamma_{mn}^i) \nonumber \\
%- 2( \frac{\rLapse}{n}-1) V^i \nonumber \\
&\quad + 2 (\rnabla^m \rLapse) \rKtr^i_{\ m} 
+ (2 - n) \rnabla^i
( \frac{\rLapse}{n}-1) \nonumber \\
&\quad 
- (\rme^{mn} - \bme^{mn} ) (\bnabla_m \bnabla_n \rShift^i + \bR^i{}_{mjn}
\rShift^j) \nonumber \\ 
&\quad - (\rme^{mn} - \bme^{mn}) \partial_T \bGamma_{mn}^i 
\label{eq:shiftresc3} 
\end{align*} 
where we have set $V =0$. Estimating each term gives, after making use 
(\ref{eq:hLapse-est}) and (\ref{eq:rS-est}), as well as of the elementary 
inequality $ab \leq \half (a^2 + b^2)$ the estimate (\ref{eq:rSY-est}). 
%\begin{equation}\label{eq:rSY-est}
%||\rShift + Y^{\parallel}||_{H^{s+1}} \leq C ( ||\rKtr||_{H^{s-1}}^2  + 
%||\rme-\bme||_{H^s}^2 ) 
%\end{equation} 
%Thus, in this sense $\rShift + Y^\parallel$ satisfies a second order 
%estimate in terms of $\rKtr, \rme - \bme$. 
\end{proof} 

\subsection{Splitting the Einstein equations} \label{sec:splitEin}
In this section, we will in preparation to proving the energy estimates
needed for the proof of our main result, rewrite the scale invariant Einstein
along the lines of \cite{AML}. We shall need the following estimate for the
curvature term in (\ref{eq:evolution-tKtr}).
Let
$\LL_{\rme,\bme}$ be the operator on symmetric 2-tensors defined by 
\begin{equation}\label{eq:LLrme-def}
\LL_{\rme,\bme} h
  = - \Delta_{\rme,\bme} h -2 \cR_{\bme} h
\end{equation} 
Here $\cR_{\bme} h$ is given by (\ref{eq:cRdef}) with the curvature tensor
$\bR$. In particular, 
if $\rme=\bme$ then $\LL_{\rme,\bme}$ coincides with the operator $\LL$
defined by (\ref{eq:LLdef}). 
The following Lemma follows from the form of $R_{ij}$ derived in the proof of
\cite[Theorem 3.1]{AML}.
\begin{lemma} \label{lem:Rij-form} 
$$
R_{ij} - \del_{ij} + \frac{(n-1)}{n^2} \rme_{ij} = \half \LL_{\rme,\bme} (\rme-\bme)_{ij} +
J_{ij}
$$
where 
$$
||J||_{H^{s-1}} \leq C ||\rme-\bme||_{H^s}^2 
$$
\end{lemma}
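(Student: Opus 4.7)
The plan is to derive the identity via a Taylor expansion around $\rme = \bme$. Consider $f(g)_{ij} := R_{ij}[g] - \del_{ij}[g,\bme] + \tfrac{n-1}{n^2} g_{ij}$, with $\bme$ held fixed. First, $f(\bme) = 0$: since $V[\bme,\bme] = 0$ by construction, $\del_{ij}[\bme,\bme] = 0$; and $\Ric[\bme] = -\tfrac{n-1}{n^2}\bme$ by the Einstein normalization.

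Next, I would compute the first variation. Using the formula $\LinEin h = \LL h - 2\div^* \div h - \nabla d(\tr h)$ together with $\LinEin = 2 D(\Ric - \alpha g)$ with $\alpha = -\tfrac{n-1}{n^2}$, one obtains
$$D\Ric|_\bme . h = \tfrac{1}{2} \LL h + \alpha h - \div^* \div h - \tfrac{1}{2} \nabla d(\tr h).$$
A direct linearization of $V^k[g,\bme] = g^{mn}(\Gamma[g]^k_{mn} - \bGamma^k_{mn})$ at $\bme$ gives (after lowering the index with $\bme$) $DV_j|_\bme . h = (\div h)_j - \tfrac{1}{2}\bnabla_j \tr h$, and since $V$ already vanishes at $\bme$ the variation of the connection in $\nabla[g] V$ contributes nothing, so $D\del_{ij}|_\bme . h = -\div^*\div h - \tfrac{1}{2}\nabla d(\tr h)$. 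Summing the three contributions to $Df|_\bme . h$, the $\div^*\div h$ and $\nabla d \tr h$ terms cancel between $D\Ric$ and $-D\del$, and the $\alpha h$ term cancels against the $+\tfrac{n-1}{n^2} h$ from the final summand of $f$. This last cancellation is precisely what the Einstein normalization buys us, and leaves
$$Df|_\bme . h = \tfrac{1}{2} \LL h.$$

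Taylor's theorem with integral remainder then gives $f(\rme) = \tfrac{1}{2} \LL(\rme - \bme) + \tilde J$, where $\tilde J$ is a smooth expression in $\rme$, $\rme^{-1}$, $\bnabla \rme$, $\bnabla^2 \rme$, $\bme$ vanishing to second order at $\rme = \bme$. Since $s > n/2 + 1$, $H^s$ is a Banach algebra and $H^{s-1}$ is a module over $H^s$; standard Moser estimates therefore give $\|\tilde J\|_{H^{s-1}} \leq C \|\rme - \bme\|_{H^s}^2$, the worst contribution being of the form $(\rme-\bme) \cdot \bnabla^2(\rme - \bme)$.

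To convert $\LL$ into $\LL_{\rme,\bme}$, observe that $\LL_{\rme,\bme} - \LL$ has principal part $-(\rme^{mn} - \bme^{mn}) \bnabla_m \bnabla_n$ plus lower-order terms with coefficients vanishing at $\rme = \bme$ (the $\cR_\bme$ pieces of the two operators agree). Applied to $h = \rme - \bme$, the same tame estimates give $\|(\LL_{\rme,\bme} - \LL)(\rme - \bme)\|_{H^{s-1}} \leq C\|\rme - \bme\|_{H^s}^2$. Setting $J := \tilde J + \tfrac{1}{2}(\LL - \LL_{\rme,\bme})(\rme - \bme)$ then yields the claim. The main obstacle is purely bookkeeping: verifying that every term surviving the first-order cancellations is genuinely quadratic in $\rme - \bme$ in the prescribed norm. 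Once the Einstein normalization has produced the clean first variation, what remains is a routine application of Moser estimates, and the identity can in fact be read off directly from the form of $R_{ij}$ in harmonic gauge established in the proof of \cite[Theorem 3.1]{AML}.
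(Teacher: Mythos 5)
Your linearization at $\bme$ is correct ($Df|_\bme.h=\half\LL h$, with the $\div^*\div$, $\nabla d\,\tr$ and $\alpha h$ cancellations exactly as you describe), and your final formula for $J$ does satisfy the claimed bound; the paper itself simply reads the identity off from the quasilinear form of $R_{ij}$ in the proof of \cite[Theorem 3.1]{AML}, as your last sentence anticipates. However, the two intermediate estimates you rely on are false as stated, and this is a genuine gap in the argument as written. The Taylor remainder $\tilde J$ contains the term $-\half(\rme^{mn}-\bme^{mn})\bnabla_m\bnabla_n(\rme-\bme)_{ij}$ (the quasilinear principal part has coefficient $\rme^{mn}$, while your linear term $\half\LL(\rme-\bme)$ freezes it at $\bme^{mn}$). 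Under the hypothesis $\rme-\bme\in H^s$ this term lies only in $H^{s-2}$: the module property you invoke gives $H^s\cdot H^{s-2}\subset H^{s-2}$, not $H^{s-1}$, since $\bnabla^2(\rme-\bme)\in H^{s-2}$ and nothing better. So $\|\tilde J\|_{H^{s-1}}\leq C\|\rme-\bme\|_{H^s}^2$ does not hold, and for exactly the same reason neither does your separate claim $\|(\LL_{\rme,\bme}-\LL)(\rme-\bme)\|_{H^{s-1}}\leq C\|\rme-\bme\|_{H^s}^2$, whose principal part is $+\half(\rme^{mn}-\bme^{mn})\bnabla_m\bnabla_n(\rme-\bme)_{ij}$ (up to the harmless first-order term coming from the measure factor in $\Delta_{\rme,\bme}$).

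The rescue is that these two bad terms cancel identically in the sum $J=\tilde J+\half(\LL-\LL_{\rme,\bme})(\rme-\bme)$, which is precisely why the lemma is stated with the variable-coefficient operator $\LL_{\rme,\bme}$ rather than $\LL$. After the cancellation, $J$ contains no second derivatives of $\rme-\bme$ at all: it is a sum of terms quadratic in $(\rme-\bme,\bnabla(\rme-\bme))$ with coefficients smooth in $\rme,\rme^{-1},\bme$ and the curvature of $\bme$, and these are controlled in $H^{s-1}$ by $C\|\rme-\bme\|_{H^s}^2$ since $H^{s-1}$ is an algebra for $s-1>n/2$. So either make this cancellation explicit, or avoid introducing the frozen-coefficient $\LL$ altogether and expand directly around the identity
$R_{ij}-\del_{ij}=-\half\Delta_{\rme,\bme}\rme_{ij}+(\text{terms of order}\leq 1\text{ in }\rme)$,
i.e.\ the form of $R_{ij}$ from \cite[Theorem 3.1]{AML}, noting $\Delta_{\rme,\bme}\rme=\Delta_{\rme,\bme}(\rme-\bme)$ because $\bnabla\bme=0$; that is the paper's (one-line) route.
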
 
We will write the Einstein evolution equations in terms of variables $(u,v)$
defined by 
\begin{subequations}\label{eq:uvalpha-def}
\begin{align}
u &= \rme-\bme, \quad v = 2n\rKtr \\
\intertext{and the normalized lapse}
\nLapse &= \frac{\rLapse}{n} 
\end{align}
\end{subequations} 
We use the identity 
\begin{equation}\label{eq:Lie-exp}
\Lie_X u_{ij} = X^m \bnabla_m u_{ij} + u_{im}\bnabla_j X^m + u_{mj} \bnabla_i
X^m
\end{equation} 
to expand the Lie derivative. With these definitions, we have 
\begin{lemma} \label{lem:dTuv}
The Einstein evolution equations (\ref{eq:Einrescaled}) are equivalent to the
system 
\begin{subequations} \label{eq:dTuv} 
\begin{align} 
\partial_T u &= \nLapse v - h^{\TTpara} - \rShift^i \bnabla_i u + \FF_u
\label{eq:dTu} \\
\partial_T v &= -(n-1) v - n^2\nLapse \LL_{\rme,\bme} u - \rShift^i \bnabla_i v + \FF_v
\label{eq:dTv} 
\end{align} 
\end{subequations} 
where 
\begin{subequations} \label{eq:FF-est}
\begin{align} 
||\FF_u||_{H^s} &\leq C (||u||_{H^s}^2 + ||v||_{H^{s-1}}^2 )  \label{eq:FFu-est}\\
||\FF_v||_{H^{s-1}} &\leq C (||u||_{H^s}^2 + ||v||_{H^{s-1}}^2 ) \label{eq:FFv-est} 
\end{align} 
\end{subequations} 
\end{lemma}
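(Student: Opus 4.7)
My plan is to substitute the new variables $(u,v,\nLapse) = (\rme-\bme,\, 2n\rKtr,\, \rLapse/n)$ directly into the rescaled Einstein evolution equations (\ref{eq:Einrescaled}), isolate the principal linear terms that appear in (\ref{eq:dTuv}), and then show that every remaining term is at least quadratic in the small quantities, using Lemma \ref{lem:quad-NX-est} and Lemma \ref{lem:Rij-form} together with the fact that $H^s$ is an algebra for $s>n/2+1$.

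For equation (\ref{eq:dTu}), I would compute $\partial_T u = \partial_T\rme - \partial_T\bme$, substitute (\ref{eq:evolution-tg}), and then insert the shadow-metric relation $\partial_T\bme = h^{\TTpara} + \Lie_{Y^\parallel}\bme$ from (\ref{eq:dTbme-form}). The term $2\rLapse\rKtr$ rewrites cleanly as $\nLapse v$, and expanding $\Lie_\rShift\rme = \rShift^i\bnabla_i u + (\text{algebraic in }u,\bnabla\rShift) + \Lie_\rShift\bme$ via (\ref{eq:Lie-exp}) pulls out the transport term. The remaining contributions collect as
\[
\FF_u = 2(\nLapse-1)\rme - (\Lie_\rShift u - \rShift^i\bnabla_i u) - \Lie_{\rShift+Y^\parallel}\bme .
\]
The crucial observation is that the two $\bme$-Lie derivatives combine into a single $\Lie_{\rShift+Y^\parallel}\bme$; this is the reason one defines the shadow metric so that (\ref{eq:rSY-est}) holds.

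For equation (\ref{eq:dTv}), I would multiply (\ref{eq:evolution-tKtr}) by $2n$ and apply Lemma \ref{lem:Rij-form} to rewrite $-\rLapse(\rR-\del+\frac{n-1}{n^2}\rme)$ as $-\tfrac{n}{2}\nLapse\LL_{\rme,\bme}u - \rLapse J$. The first piece contributes the $-n^2\nLapse\LL_{\rme,\bme}u$ principal term and the transport piece $-\rShift^i\bnabla_i v$ comes from the leading part of $\Lie_\rShift\rKtr$; everything else---the Hessian $\rnabla_i\rnabla_j\rLapse$, the $\rKtr\rKtr$ term, the $(\nLapse-1)\rme$ and $(\nLapse-1)\rKtr$ terms, the remaining algebraic pieces of $\Lie_\rShift v$, and $-2n\rLapse J$---is dumped into $\FF_v$.

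The remaining work is the quadratic estimates (\ref{eq:FF-est}). Each term is attacked by a product estimate in $H^s$: the $(\nLapse-1)\rme$ and Hessian-of-$\nLapse$ terms are quadratic in $v$ by (\ref{eq:hLapse-est}); the $\rKtr^2$ term is manifestly quadratic in $v$; the algebraic shift-coupling pieces $u\cdot\bnabla\rShift$ and $v\cdot\bnabla\rShift$ become quadratic after applying (\ref{eq:rS-est}) and the elementary inequality $ab\le \tfrac12(a^2+b^2)$; and the curvature remainder $\rLapse J$ is quadratic by Lemma \ref{lem:Rij-form}. The one genuinely delicate estimate, and what I expect to be the main obstacle, is controlling $\Lie_{\rShift+Y^\parallel}\bme$: the individual bound (\ref{eq:rS-est}) on $\rShift$ is only linear, so without the cancellation encoded in (\ref{eq:rSY-est}) this term would spoil the quadratic estimate for $\FF_u$. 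Invoking (\ref{eq:rSY-est}) together with the smoothness of $\bme\in\NN$ (which makes $\|\Lie_Z\bme\|_{H^s}\lesssim \|Z\|_{H^{s+1}}$) closes the argument.
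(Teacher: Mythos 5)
Your proposal is correct and follows essentially the same route as the paper's proof: the identical splitting $\Lie_{\rShift}\rme+\Lie_{Y^\parallel}\bme=\Lie_{\rShift}u+\Lie_{\rShift+Y^\parallel}\bme$ with the cancellation controlled by (\ref{eq:rSY-est}), expansion of $\Lie_{\rShift}u$ via (\ref{eq:Lie-exp}), and Lemma \ref{lem:Rij-form} for the curvature term in the $v$ equation, with all remainders estimated through Lemma \ref{lem:quad-NX-est}. Your write-up in fact makes explicit the quadratic bookkeeping that the paper leaves to the reader.
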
 
\begin{proof} 
A direct calculation gives 
\begin{align*} 
\partial_T u &= \nLapse v - h^{\TTpara} + 2(\nLapse - 1) \rme -
\Lie_{\rShift} \rme - \Lie_{Y^\parallel} \bme \\
&= \nLapse v - h^{\TTpara} - \Lie_{\rShift} u \\
&\quad  + 2(\nLapse - 1) \rme - \Lie_{\rShift + Y^{\parallel}} \bme
\end{align*} 
We now expand the term $\Lie_{\rShift} u$ using (\ref{eq:Lie-exp}) and put
$\partial_T u$ in the form (\ref{eq:dTu}). The term $\FF_u$ defined in this
manner can be shown, using the inequalities of 
Lemma \ref{lem:quad-NX-est}, to satisfy the estimate (\ref{eq:FFu-est}).  
 
For $v$ we proceed in a similar manner, making use of Lemma
\ref{lem:Rij-form} to rewrite the curvature term in equation
(\ref{eq:evolution-tKtr}). 
\end{proof}

\subsection{The linearized Einstein equations} \label{sec:linearized} 
It is straightforward to linearize the system defined by equations
(\ref{eq:rgauge}), (\ref{eq:Conrescaled})--(\ref{eq:shadowdef}) and
(\ref{eq:dTuv}) about the exact solution $(\rme, \rKtr, \rLapse, \rShift) =
(\bme, 0 , n, 0)$ where $\bme$ lies in the deformation space $\NN$ of a fixed
Einstein metric $\bme_0$. Linearization of the Hamiltonian constraint 
(\ref{eq:Conrescaled-ham}) and the gauge condition (\ref{eq:rgauge}),
together with the Einstein condition satisfied by $\bme$, immediately imply
that the first variation $\delta u$ of $u = \rme - \bme$ is transverse
traceless with respect to the background metric $\bme$. Linearization of the
momentum constraint (\ref{eq:Conrescaled-mom}) together with the condition
$\rme^{ij} \rKtr_{ij} = 0$ implies that $\delta v_{ij} = 2n \delta
\rKtr_{ij}$ is also transverse traceless with respect to $\bme$. Variation of
equations (\ref{eq:defining-resc}) leads to $\delta \rLapse = 0$ and $\delta
\rShift^i + \delta Y^{\parallel\, i} = 0$. 
From equation (\ref{eq:PYi}) one finds that $\delta Y^\parallel$ is
determined from $\delta h^{\TTpara}$ which latter is also transverse
traceless with respect to $\bme$ and satisfies $\Ppara \delta h^{\TTpara} =
\delta h^{\TTpara}$. 
Variation of the shadow metric condition (\ref{eq:shadowdef}) shows further
that $\Ppara (\delta u) = 0$, i.e., that the transverse traceless tensor
$\delta u$ satisfies $\delta u = \delta u^{\perp}$. 

It is now straightforward to linearize the evolution equations
(\ref{eq:dTuv}) and decompose them into $\parallel$ and $\perp$
projections. This leads immediately to $\delta v^{\parallel} = \delta
h^{\TTpara}$ and to 
\begin{align*} 
\partial_T \delta u^\perp &= \delta v^\perp \\ 
\partial_T \delta v^{\parallel} &= -(n-1) \delta v^{\parallel} \\ 
\partial_T \delta v^{\perp} &= - (n-1) \delta v^{\perp} - n^2 \LL \delta
u^{\perp}
\end{align*} 
where $\LL$ is the operator given by (\ref{eq:LLrme-def}) with
$\rme=\bme$. These combine to give the second order euqation 
\begin{equation}\label{eq:uperp-evol}
\delta u^\perp_{,TT} + (n-1) \delta u^\perp_{,T} + n^2 \LL \delta u^\perp = 0
\end{equation}
for $\delta u^\perp$ and to give immediately that 
$$
\delta v^{\parallel} = \delta v^{\parallel} \big{|}_{T = T_0}
e^{-(n-1)(T-T_0)} \, .
$$
It follows that $\delta h^{\TTpara} = \delta v^\parallel$ and therefore also
$\delta Y^\parallel$, all decay at the same universal exponential rate (at
least in this linearized approximation). 

While equation (\ref{eq:uperp-evol}) may be solved explicitly by separation
of variables (as shown in \cite{fischer:moncrief:n+1} and recalled below) we
shall need to prove energy estimates for this system in order to have a tool
adequate for generalization to the nonlinear problem. In any case, recalling
that we can write 
$$
\delta \bme = \delta q^{(\alpha)} (h^{(\alpha) \, \TTpara} +
\Lie_{Y^{(\alpha)\, \parallel}} \bme ) 
$$
where 
$h^{(\alpha) \, \TTpara}$ and $Y^{(\alpha)\, \parallel}$ are background
quantities and $\delta q^{(\alpha)} = \delta q^{(\alpha)}(T)$ are function of
$T$ only, we can decompose $\delta u = \delta \rme - \delta \bme$ into its
constituents and show that 
$$
\delta q^{(\alpha)}(T) = - \left ( 
\partial_T (\delta q^{(\alpha)} ) \big{|}_{T=T_0} \right ) 
\frac{1}{n-1} 
e^{-(n-1)(T-T_0)} + \delta q^{(\alpha)} (\infty) 
$$
where $\delta q^{(\alpha)}(\infty)$ is a constant of integration that
yields the asymptotic value of $\delta \bme$. The perturbed metric $\delta
\rme$ is thus given by $\delta \rme = \delta u^\perp + \delta \bme$, with
$\delta u^\perp$ and $\delta \bme$ determined as above. 

In order to understand how to prove energy estimates for the system
%(\ref{eq:hTT}) 
(\ref{eq:uperp-evol}) 
we 
%freeze $\LL$ and 
perform a separation of variables. 
Let $\lambda$ be a nonzero eigenvalue $\LL$ and let $X$ be an eigentensor
corresponding to $\lambda$. 
Equation (\ref{eq:uperp-evol}) gives the model system 
\begin{equation}\label{eq:model}
\ddot X + (n-1) \dot X + n^2 \lambda X = 0,
\end{equation}
which we recognize as a damped oscillator equation. 
This has characteristic roots
$$
\frac{-(n-1) \pm \sqrt{(n-1)^2- 4n^2\lambda}}{2}
$$
%We  will only consider the case $\lambda > 0$. 
If 
$\lambda > \frac{(n-1)^2}{4n^2}$, then the characteristic equation has a
complex pair of roots with real part $-(n-1)/2$, and hence there is a
universal exponential rate of decay $-(n-1)/2$. 
If $0 < \lambda < \frac{(n-1)^2}{4n^2}$ the
characteristic equation has a pair of negative real roots. 
In this case we have an ``anomalous'' rate of decay
depending on $\lambda$. 
%$\lambda \geq \lammin$. 
If $\lambda = \frac{(n-1)^2}{4n^2}$ we have a 
critically damped oscillator. We avoid dealing directly with this case, by 
decreasing $\lambda$ slightly.

\subsection{Energies for the damped oscillator} \label{sec:damped} 
The equation (\ref{eq:model}) is an ODE with constant coefficients 
and can therefore be analyzed by
elementary means. However, since the 
analysis of this system plays a central role in this paper we present a
complete derivation of an energy estimate, which will be
used later on to prove that energy for the rescaled Einstein
equations has the decay property needed for the main result of this paper. 

In
this section we consider the situation that $\lambda \geq \lambda_0$ for
some $\lambda_0> 0$,  
$\lambda_0 \ne \frac{(n-1)^2}{4n^2}$. Let $-\alpha_+$
denote the real part of 
$$
\frac{-(n-1) + \sqrt{(n-1)^2- 4n^2\lambda_0}}{2} ,
$$
i.e. $\alpha_+ = \alpha_+(n,\lambda_0)$ is given by 
$$
\alpha_+ = \left\{ \begin{array}{cc} \frac{n-1}{2} , & \lambda_0 >
\frac{(n-1)^2}{4n^2} \\
\frac{(n-1) - \sqrt{(n-1)^2- 4n^2\lambda_0}}{2},  & 0 < \lambda_0 <
\frac{(n-1)^2}{4n^2}  
\end{array} \right. 
$$
Define the constant $\Cenerg = \Cenerg(n,\lambda_0)$ by 
$$
\Cenerg = \left\{ \begin{array}{cc} \frac{n-1}{2}, & \lambda_0 >
\frac{(n-1)^2}{4n^2} \\
\frac{2n^2\lambda_0}{(n-1)}, & 0 < \lambda_0 < \frac{(n-1)^2}{4n^2} 
\end{array} \right. 
$$
Define the energy $E = E(X,\dot X;n,\lambda,\lambda_0)$ by 
$$
E = \half \dot X^2 + \frac{n^2 \lambda}{2} X^2 + \Cenerg X \dot X
$$
and let $X$ be a solution to the damped oscillator equation (\ref{eq:model})
for some $\lambda \geq \lambda_0$. 

\begin{lemma} \label{lem:Edefinite} 
The energy $E$ is positive definite for $\lambda_0 > 0$. Assume $\lambda \geq
\lambda_0 >
0$ and $\lambda_0 \ne \frac{(n-1)^2}{4n^2}$. 
Then $E$ satisfies 
%In the universal case, $\lambda_0 > \frac{4n^2}{(n-1)^2}$, $E$ satisfies 
%$\dot E \leq -(n-1) E$, while in the anomalous case 
%$ \frac{4n^2}{(n-1)^2} > \lambda_0 > 0$,  
$E$ satisfies $\dot E \leq -2\alpha_+ E$. 
\end{lemma}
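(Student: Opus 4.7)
The plan is to treat the two estimates in the lemma separately, and within the decay estimate to split on the two cases in the definition of $\Cenerg$ and $\alpha_+$.

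First, for positive definiteness, the energy $E$ is a quadratic form in $(\dot X, X)$ whose matrix is $\frac12\begin{pmatrix} 1 & \Cenerg \\ \Cenerg & n^2\lambda\end{pmatrix}$. Positivity is therefore equivalent to the inequality $\Cenerg^2 < n^2\lambda$. In the case $\lambda_0 > (n-1)^2/(4n^2)$, we have $\Cenerg = (n-1)/2$ and $\lambda \geq \lambda_0 > (n-1)^2/(4n^2)$, which gives $\Cenerg^2 = (n-1)^2/4 < n^2\lambda$. In the case $0 < \lambda_0 < (n-1)^2/(4n^2)$, the definition $\Cenerg = 2n^2\lambda_0/(n-1)$ gives $\Cenerg^2 = 4n^4\lambda_0^2/(n-1)^2$, and the hypothesis $4n^2\lambda_0 < (n-1)^2$ rearranges to $\Cenerg^2 < n^2\lambda_0 \leq n^2\lambda$.

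Next, for the decay, I would differentiate $E$ along any solution of (\ref{eq:model}) and substitute $\ddot X = -(n-1)\dot X - n^2\lambda X$. A direct computation gives
\[
\dot E = -(n-1-\Cenerg)\dot X^2 - \Cenerg(n-1) X \dot X - \Cenerg n^2\lambda X^2,
\]
so that $\dot E + 2\alpha_+ E$ is again a quadratic form in $(\dot X, X)$ with coefficients
\[
\bigl[\alpha_+ - (n-1) + \Cenerg\bigr]\dot X^2 + \Cenerg\bigl[2\alpha_+ - (n-1)\bigr] X\dot X + n^2\lambda\bigl[\alpha_+ - \Cenerg\bigr] X^2.
\]
The main point is to show that this quadratic form in $(\dot X, X)$ is non-positive for every $\lambda \geq \lambda_0$.

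In the overdamped case $\lambda_0 > (n-1)^2/(4n^2)$, both $\Cenerg$ and $\alpha_+$ equal $(n-1)/2$, and a direct check shows that all three coefficients above vanish, so in fact $\dot E + 2\alpha_+ E = 0$ identically along solutions. In the underdamped case $0 < \lambda_0 < (n-1)^2/(4n^2)$, I would introduce $\beta := \sqrt{(n-1)^2 - 4n^2\lambda_0}$ so that $\alpha_+ = (n-1-\beta)/2$, and use the algebraic identity $2n^2\lambda_0/(n-1) = (n-1)/2 - \beta^2/(2(n-1))$ to factor
\[
\dot E + 2\alpha_+ E = -\frac{\beta}{2(n-1)} \left[(n-1+\beta)\dot X^2 + 4n^2\lambda_0 \, X\dot X + n^2\lambda(n-1-\beta) X^2\right].
\]
The matrix of the bracketed form has positive trace (since $0 < \beta < n-1$), and its determinant equals $[(n-1)^2-\beta^2]n^2\lambda - 4n^4\lambda_0^2 = 4n^4\lambda_0(\lambda-\lambda_0) \geq 0$, which uses precisely $\lambda \geq \lambda_0$. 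Hence the bracket is a positive semi-definite quadratic form in $(\dot X, X)$, and the prefactor $-\beta/(2(n-1))$ is negative, so $\dot E + 2\alpha_+ E \leq 0$.

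The main obstacle is the underdamped case: the choices of $\Cenerg$ and $\alpha_+$ must be reverse-engineered so that the cross term $\Cenerg[2\alpha_+ - (n-1)] X\dot X$ matches the off-diagonal entry of a positive semi-definite matrix whose determinant is exactly the non-negative multiple $4n^4\lambda_0(\lambda - \lambda_0)$ of $\lambda - \lambda_0$. Once this algebraic matching is recognized, the estimate becomes purely a discriminant computation; the overdamped case then collapses to the identity $\dot E = -2\alpha_+ E$ because $\Cenerg$ coincides with $\alpha_+$.
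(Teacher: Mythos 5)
Your proposal is correct and follows essentially the same route as the paper: both compute $\dot E + 2\alpha_+ E$ as a quadratic form in $(X,\dot X)$ and verify its sign via the trace and determinant of the associated matrix, splitting on the two cases in the definitions of $\Cenerg$ and $\alpha_+$ (your $\beta=\sqrt{(n-1)^2-4n^2\lambda_0}$ is just a reparametrization of the paper's $Y=4n^2\lambda_0/(n-1)^2$, and your determinant $4n^4\lambda_0(\lambda-\lambda_0)$ agrees with the paper's $(n^2\lambda-n^2\lambda_0)Y(1-Y)$ up to the scaling of $J$). The only differences are cosmetic: you work out explicitly the case $\lambda_0>\frac{(n-1)^2}{4n^2}$, which the paper leaves to the reader, and you correctly state semi-definiteness (needed when $\lambda=\lambda_0$), though note your ``overdamped/underdamped'' labels are interchanged relative to the standard convention.
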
 
\begin{proof} 
We will consider the case $0 < \lambda_0 <
\frac{(n-1)^2}{4n^2}$ where anomalous decay holds. The 
case $\lambda_0 > \frac{(n-1)^2}{4n^2}$ with a universal rate of decay
is straightforward and will be left
to the reader. The energy $E$ corresponds to the quadratic form 
%$$
%\half \begin{pmatrix} 1 & \frac{2n^2\lambda_0}{n-1} \\
%\frac{2n^2\lambda_0}{n-1} & n^2\lambda_0 \end{pmatrix} 
%$$
$$
E = \half \begin{pmatrix} 1 & \Cenerg \\
\Cenerg & n^2\lam \end{pmatrix} 
$$
Let $A = 2E$. 
In the anomalous case, $\Cenerg = 2n^2\lambda_0/(n-1)$, and setting 
$Y = 4n^2\lambda_0/(n-1)^2 $, we have $0 < Y < 1$. Then 
\begin{align*} 
\tr A &= 1 + n^2\lam \\
\det A & \geq n^2\lambda_0 (1 - Y)
\end{align*} 
Therefore we have $\tr A > 0, \det A > 0$ and 
%\mnote{we also have $(\tr A)^2 - 4 \det A > 0$, i.e. no double eigenvalues} 
%\begin{align*} 
%(\tr A)^2 - 4 \det A &= (1 + n^2 \lam)^2 - 4n^2\lambda_0( 1 - Y)   \\ 
%&= (1 - n^2\lam)^2 + 4n^2(\lam - \lambda_0) + 4 n^2 \lambda_0 Y > 0 
%\end{align*} 
it follows that $E$ is positive definite. 

A calculation shows 
$$
\dot E = - 2\alpha_+ E + J
$$
with 
\begin{multline}\label{eq:Jeq}
J = (\Cenerg - (n-1) + \alpha_+) \dot X^2 
\\ - (\Cenerg - \alpha_+ )n^2 \lambda X^2 
-\Cenerg ((n-1)-2\alpha_+)  X \dot X .
\end{multline}
which corresponds to the quadratic form 
$$
\begin{pmatrix} \Cenerg -(n-1)+\alpha_+ & -\Cenerg ((n-1)-2\alpha_+)/2  \\
-\Cenerg ((n-1)-2\alpha_+)/2 & -(\Cenerg -\alpha_+)n^2\lambda \end{pmatrix} 
$$
$J$ is in the anomalous case of the form 
$$
J = \frac{n-1}{2} \begin{pmatrix} Y - 1 - \sqrt{1-Y} & 
- \frac{2n^2 \lambda_0}{n-1} \sqrt{1-Y} \\
- \frac{2n^2 \lambda_0}{n-1} \sqrt{1-Y} & -n^2\lambda [Y-1 +\sqrt{1-Y}] 
\end{pmatrix}
$$
Setting $B = \frac{2}{n-1} J$, the determinant and trace of $B$ are given by 
%\mnote{we also have $(\tr B)^2 - 4 \det B > 0$, i.e. no double eigenvalues}
\begin{align*} 
\det B &= (n^2\lambda - n^2 \lambda_0)Y(1-Y) \\
\tr B &=  (Y-1-\sqrt{1-Y}) - n^2 \lambda \sqrt{1-Y}(1-\sqrt{1-Y})
\end{align*} 
From this we see that $\tr B < 0$, $\det B > 0$.
% and $(\tr B)^2 - 4\det B >
%0$. 
Thus, 
the quadratic form $B$ and hence also $J$ is
negative definite. It follows that 
$$
\dot E \leq -2\alpha_+ E
$$ 
as claimed. 
\end{proof}

\section{Energy estimate} 
%\subsection{Definition of energies for the full system} 
\label{sec:energy} 
Taking the analysis in section \ref{sec:damped} as a guide, we will now
define energies for the full Einstein equations. Let the operator
$\LL_{\rme,\bme}$ be given by (\ref{eq:LLrme-def}). Recall that 
$\Delta_{\rme,\bme}$ 
as defined in (\ref{eq:Del:rme:bme})
is the rough 
Laplacian on a certain vector bundle $Q$ over $(M,\rme)$, see 
\cite[\S 2]{AML} for discussion, which can be identified with the bundle of
symmetric covariant 2-tensors on $(M,g)$ with covariant derivative
$\bnabla$ and fiber metric defined in terms of $\bme$ by 
$$
\la u, v \ra = u_{ij} v_{kl} \bme^{ik} \bme^{jl} . 
$$
The corresponding norm is $|u| = (\la u , u \ra)^{1/2}$. It follows from the
definition that 
the covariant derivative $\bnabla$ is metric with respect to $\la \cdot,
\cdot \ra$. 
The inner product on derivatives is 
$$
\la \bnabla u , \bnabla v \ra = \la \bnabla_m u , \bnabla_n v \ra \rme^{mn} .
$$
The rough Laplacian
$\Delta_{\rme,\bme}$ is formally self-adjoint with respect to the natural $L^2$
inner product 
$$
\int_M \la u , \Delta_{\rme,\bme} v \ra \mu_{\rme} = \int_M \la \Delta_{\rme,\bme} u , v
\ra \mu_{\rme} .
$$
It follows that the operator $\LL_{\rme,\bme}$ is self-adjoint with respect to this inner
product. We are now able to define the energies which will be used for the
Einstein equations. The energy for the damped oscillator consists of a
standard oscillator energy and a correction term. Analogously, 
the energies we are about to define for the rescaled Einstein equations 
will consist of a wave equation type energy and a
correction term. To connect with the damped oscillator energy we use the
correspondence $X \leftrightarrow u$ and 
$\dot X \leftrightarrow v$, where $u=\rme-\bme,v=2n\rKtr$,
cf. (\ref{eq:uvalpha-def}). 

Throughout the rest of the paper, we fix $s > n/2 + 1$, assume
that the triple $(\bme,\rme,\rKtr)$ satisfies the shadow metric condition,
and that $(\rme, \rKtr) \in B_{\delta,s}(\bme,0)$ for some $\delta > 0$
suffiently small. 
 
The first order energy and correction term is 
\begin{align*} 
\EE{1} &= \half  \int_M |v|^2 \mu_{\rme}  
+ \half n^2 \int_M \la u, \LL_{\rme,\bme} u \ra \mu_{\rme} \\
\GG{1} &= \int_M \la v , u \ra \mu_{\rme} 
\end{align*} 
Explicitly, substituting in $\rme-\bme,\rKtr$ using the relation 
$u=\rme-\bme$, $v=2n\rKtr$ this gives
\begin{align*} 
\EE{1} &= \half (2n)^2 \int_M \rKtr_{ij} \rKtr_{kl} \bme^{ik} \bme^{jl}
\mu_{\rme} \\
&\quad 
+ \half n^2 \int_M \left\{ \bnabla_k (\rme_{ij} - \bme_{ij} )
\bnabla_l (\rme_{mn} - \bme_{mn} ) \rme^{kl} \bme^{im} \bme^{jn} \right. \\
%&\quad \left. - \frac{2}{n^2} (\rme_{ij} - \bme_{ij}) (\rme_{mn} - \bme_{mn}) 
%\bme^{im} \bme^{jn} \right. \\
&\quad \left. 
%-2 \bWeyl_{i\ j}^{\ k\ l} (\rme_{kl} - \bme_{kl} ) 
-2 \bR_{i\ j}^{\ k\ l} (\rme_{kl} - \bme_{kl} ) 
(\rme_{mn} - \bme_{mn})\bme^{im}\bme^{jn} \right\} \mu_{\rme} 
\end{align*} 
The correction term
$\GG{1}$ can be expanded in a similar manner. We now define higher order
energies by inserting suitable powers of $\LL_{\rme,\bme}$, 
giving for integers $m \geq 1$,
\begin{align*} 
\EE{m} &= \half  \int_M \la v , \LL_{\rme,\bme}^{m-1} v \ra \mu_{\rme} 
+ \half n^2 \int_M \la u , \LL_{\rme,\bme}^{m} u \ra \mu_{\rme} 
\\
\GG{m} &=  \int_M \la v , \LL_{\rme,\bme}^{m-1} u \ra \mu_{\rme} 
\end{align*} 
Due to the shadow metric condition, $\bme$ may be viewed as a function of
$\rme$, and hence the energies $\EE{s}$ depend only on $(\rme, \rKtr)$. 
In case the lowest eigenvalue of $\LL_{\bme_0,\bme_0}$ 
is zero at the initial
background metric, let $\lamminmod > 0$ be the smallest nonzero eigenvalue of
$\LL_{\bme_0,\bme_0}$ 
and let 
\begin{equation}\label{eq:lambda0def}
\lambda_0 =\lamminmod - \eps
\end{equation}
for some $\eps > 0$. 
We require $\lambda_0 > 0$. The reason for choosing
$\lambda_0$ smaller than $\lamminmod$ is that the spectrum of $\LL_{\bme(T)}$
depends on $T$ and it is necessary that the energy estimates we shall prove
hold uniformly during the course of the evolution. 

Let now $\Cenerg, \alpha_+$ be defined as in section \ref{sec:damped} 
in terms of the $\lambda_0$ chosen
above. 
For integers $m \geq 1$, let 
$$
\EEtot{(m)} = \EE{m} + \Cenerg \,\GG{m} 
\,.
$$
Based on the work in section \ref{sec:damped}, one expects that a
corrected energy of the form 
\begin{equation}\label{eq:EEtotdef}
\EEtot{s} = \sum_{1 \leq m \leq s} \EEtot{(m)}
\end{equation} 
will have the property that 
$$
\partial_T \EEtot{s} \leq -2\alpha_+ \EEtot{s} +
\text{higher order terms} 
$$
This is indeed the case, as will be shown below.

\begin{remark} The value of $\lambda_0$ determines $\alpha_+$ and hence the
  decay rate that is proved by the present argument. 
A more detailed
  analysis, along the lines of \cite{ChB:moncrief:U(1)}, can be used to prove
  a sharp decay estimate.
\end{remark} 
\subsection{Positive definiteness of the energy} \label{sec:hessian} 
Recall that $\Pperp_{\bme}$ which was introduced in section \ref{sec:linein}
is the $L^2$-orthogonal projection onto the orthogonal complement of $\ker
\LL$ in the space of $\TT$ tensors with respect to $\bme$. 
It is clear from the construction
that for $s > n/2+1$, $\EEtot{s}$ is a smooth function on $\CC^s$ and
further that $\Pperp_{\bme}$ depends smoothly on the Einstein metric
$\bme$. 
\begin{lemma} \label{lem:D2E}
Let $\bme$ be an
Einstein metric on $M$ with Einstein constant
$-(n-1)/n^2$ and let $\EEtot{s}$ be the total energy defined in section
\ref{sec:energy} with $s > n/2 +1$. 
Then there is a $\delta > 0$
and a constant $C > 0$, such that for $(\rme, \rKtr) \in
\BB_{s,\delta}(\bme,0)$, the inequality 
\begin{equation} \label{eq:EEtot-ineq}
||\Pperp_{\bme}(\rme - \bme)||_{H^{s}}^2 + ||\rKtr||_{H^{s-1}}^2 \leq C \EEtot{s}
\end{equation} 
holds. 
\end{lemma}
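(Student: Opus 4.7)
The plan is to establish
$$\EEtot{s} \gtrsim \|\Pperp_{\bme}(\rme-\bme)\|_{H^s}^2 + \|\rKtr\|_{H^{s-1}}^2$$
by combining three ingredients: spectral coercivity of $\LL$ on the transverse-traceless orthogonal complement of its kernel, quadratic smallness of the non-$\TT$ parts of $(\rme-\bme,\rKtr)$ afforded by the slice and constraint conditions, and positive-definiteness of the damped-oscillator form from Lemma \ref{lem:Edefinite}.

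The first step is to replace $\LL_{\rme,\bme}$ and $\mu_\rme$ by $\LL$ and $\mu_\bme$. Since $\|\rme-\bme\|_{H^s}<\delta$ and $s>n/2+1$, Sobolev embedding makes $\rme-\bme$ uniformly small in $C^1$; hence $\mu_\rme=(1+O(\delta))\mu_\bme$ and $\Delta_{\rme,\bme}$ differs from $\bDelta$ by a differential perturbation of size $O(\delta)$. Each occurrence of $\LL_{\rme,\bme}^k$ in $\EEtot{s}$ is therefore replaced by $\LL^k$ modulo multiplicative errors $1+O(\delta)$, absorbed by shrinking $\delta$. Next, since $(\rme,\rKtr)\in\SliceCC{}_{,\bme}$, section \ref{sec:slice} provides a decomposition $\rme-\bme = u^{\TT}+z_u$, $\rKtr=\rKtr^{\TT}+z_v$, with $u^{\TT},\rKtr^{\TT}$ transverse traceless w.r.t.\ $\bme$, with $z_u,z_v$ being $L^2(\bme)$-orthogonal to the $\TT$ tensors, and with $\|z_u\|_{H^s}+\|z_v\|_{H^{s-1}} \leq C(\|u^{\TT}\|_{H^s}^2+\|\rKtr^{\TT}\|_{H^{s-1}}^2)$. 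Since $\Pperp_\bme$ annihilates tensors orthogonal to $\TT$, $\Pperp_\bme(\rme-\bme)=\Pperp u^{\TT}$; the quadratic estimates then let us substitute $u^{\TT}, \rKtr^{\TT}$ for $\rme-\bme,\rKtr$ inside $\EEtot{s}$ at the cost of an $O(\delta)$ error.

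The bulk of the argument is then spectral. By section \ref{sec:linein} and the stability assumption, $\LL$ is self-adjoint elliptic of order two, preserves $\TT$, has finite-dimensional kernel contained in $\TT$, and has spectral gap $\lamminmod>0$ above $\ker\LL$. Standard elliptic estimates give $\int \la h,\LL^m h\ra\,\mu_\bme \sim \|h\|_{H^m}^2$ for $h \in \Pperp\TT$. Thus $\sum_{m=1}^s \EE{m}$ controls $\|\Pperp u^{\TT}\|_{H^s}^2$ and $\|\Pperp\rKtr^{\TT}\|_{H^{s-1}}^2$, while the $m=1$ piece already controls $\|\rKtr\|_{L^2}^2$, and hence $\|\Ppara\rKtr^{\TT}\|_{H^{s-1}}^2$ by finite-dimensionality of $\ker\LL$ (all norms being equivalent on it).

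It remains to incorporate the correction $\sum_m \Cenerg\,\GG{m}$; here Lemma \ref{lem:Edefinite} enters mode by mode. Decomposing $u,v$ on the eigenbasis of $\LL$ restricted to $\Pperp\TT$, every eigenmode with eigenvalue $\lambda \geq \lamminmod > \lambda_0$ contributes to $\EEtot{(m)}$ precisely the damped-oscillator form $\tfrac12 \dot X^2 + \tfrac{n^2\lambda}{2}X^2 + \Cenerg X\dot X$, which is positive-definite with a uniform lower bound depending only on $n$ and $\lambda_0$. Summing over eigenmodes and over $m=1,\dots,s$, and adding the $\Ppara\rKtr^{\TT}$ and non-$\TT$ contributions handled above, gives the claimed inequality once $\delta$ is small enough to absorb the quadratic errors. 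The main technical obstacle is the bookkeeping in the first step: $\LL_{\rme,\bme}-\LL$ fails to commute with the $\bme$-defined projections $\Pperp,\Ppara$, so one has to verify that the commutators $[\LL_{\rme,\bme}-\LL,\Pperp]$ and $[\LL_{\rme,\bme}-\LL,\Ppara]$ contribute only $O(\delta)$; the smoothing character of $\Ppara$ (its range being the finite-dimensional $\ker\LL$) together with the explicit form of the perturbation (algebraic plus $\bnabla$-derivatives with $O(\delta)$ coefficients) renders this calculation routine though not instant.
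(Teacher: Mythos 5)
Your strategy (perturb $\LL_{\rme,\bme}$ to $\LL$, use the slice decomposition, then mode-by-mode positivity via Lemma \ref{lem:Edefinite}) has a genuine gap: you never control the component of $u=\rme-\bme$ along $\ker\LL$, and the corrected energy is not coercive against it. The correction term $\Cenerg\GG{1}=\Cenerg\int_M\la v,u\ra\,\mu_\rme$ contains the kernel--kernel pairing $\Cenerg\int_M\la v^{\TTpara},u^{\TTpara}\ra$, which your eigenmode analysis (carried out only on the $\Pperp$ part of the $\TT$ tensors) never sees. This term is indefinite and cannot be absorbed, because no term in $\EEtot{s}$ carries a positive multiple of $\|u^{\TTpara}\|^2$: $\LL$ annihilates $u^{\TTpara}$, so the terms $\int\la u,\LL_{\rme,\bme}^m u\ra$ contribute nothing there, and bounding $\|u^{\TTpara}\|$ merely by $\delta$ produces an additive, not relative, error. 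In fact, when $\ker\LL\neq\{0\}$ the inequality is false without further input: using the graph structure of $\SliceCC{}_{,\bme}$ from section \ref{sec:slice}, take data whose $\TT$ parts are $u^{\TT}=\eps h^{\TTpara}$, $v^{\TT}=-t\,\eps h^{\TTpara}$ with $h^{\TTpara}\in\ker\LL$ and $0<t<2\Cenerg$ fixed; then $\EEtot{s}\approx t(\tfrac{t}{2}-\Cenerg)\eps^2\|h^{\TTpara}\|_{L^2}^2<0$ while the left-hand side of (\ref{eq:EEtot-ineq}) is positive. What rescues the lemma --- and what your proposal never invokes --- is the shadow metric condition, a standing hypothesis in this section: it forces $\Ppara_{\bme}(\rme-\bme)$ to vanish to first order and indeed to be quadratically small in $\|\Pperp_{\bme}(\rme-\bme)\|_{H^s}$ and $\|\rKtr\|_{H^{s-1}}$ (this is exactly the content of Lemma \ref{lem:projperp}, proved from (\ref{eq:shadowdef})), so the offending cross term becomes third order and absorbable for small $\delta$. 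In the strictly stable case $\ker\LL=\{0\}$ your argument goes through, but that misses the main case of the paper.

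For comparison, the paper sidesteps this bookkeeping by using the shadow condition structurally: since $\bme=\Shade[\rme]$, the energy $\EEtot{s}$ is a function of $(\rme,\rKtr)$ alone with critical point $(\bme,0)$; its Hessian there involves only $\LL_{\bme,\bme}$, and because the first variation of $u$ under the shadow condition has no $\Ppara$ component, the spectral/damped-oscillator argument of Lemma \ref{lem:Edefinite} gives coercivity of the Hessian in $(\Pperp h,k)$, after which Taylor's theorem yields (\ref{eq:EEtot-ineq}). Your perturbative replacement of $\LL_{\rme,\bme}$, $\mu_\rme$ by $\LL$, $\mu_\bme$, and your treatment of the non-$\TT$ pieces and of $\Ppara\rKtr$, are fine; the proof can be repaired by establishing the shadow-condition estimate on $\Ppara_{\bme}(\rme-\bme)$ before estimating $\GG{1}$, and only then running your mode-by-mode argument.
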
 
\begin{proof} 
Note that $(\bme,0)\in \CC^{s}$ is a critical point of
$\EEtot{s}$. Therefore it suffices to consider the second derivative 
of the energy at
$(\bme, 0)$. Let $m$ be an integer such that $1 \leq m \leq s$. 
The Hessian of 
%$\EE{m} + \Cenerg \,\GG{m}$ 
$\EEtot{(m)}$ is of the form 
\begin{multline*}
D^2 \EEtot{(m)}((h,k),(h,k)) = \\
\int_M \la k , \LL_{\bme,\bme}^{m-1} k \ra  \, \mu_\bme
+ n^2 \int_M \la h, \LL_{\bme,\bme}^m h\ra \, \mu_\bme + 
2 \Cenerg \int_M \la k, \LL_{\bme,\bme}^{m-1} h \ra \, \mu_\bme \,.
\end{multline*}
An analysis using the spectral decomposition of
$\LL_{\bme,\bme}$ shows, using the arguments in the proof of 
Lemma \ref{lem:Edefinite} that $D^2 \EEtot{(m)}$
satisfies 
$$
D^2 \EEtot{(m)}.((h,k),(h,k)) \geq 0
$$ 
with equality if and only if 
$(h,k) = (h^{\TT\,\parallel},0)$ 
with $h^{\TTpara} \in \ker \LL_{\bme,\bme}$. From this follows
that   
$$
|| \Pperp_{\bme} h ||_{H^{s}}^2 + ||k||_{H^{s-1}}^2 \leq C D^2
\EEtot{s}.((h,k),(h,k)) 
$$
for some constant $C = C(\lambda_0,\bme)> 0$, where $\lambda_0$ is defined in
(\ref{eq:lambda0def}). It follows from the above and 
Taylor's theorem that there is
a $\delta > 0$ such that the inequality (\ref{eq:EEtot-ineq}) holds in 
$\BB_{s,\delta}(\bme,0)$ 
for suitable $\delta > 0$, $C > 0$. 
\end{proof} 

\begin{lemma} \label{lem:projperp}
%Suppose that $(\rme,\rKtr) \in \SliceCC{}^s_{,\bme}$, $s > n/2+1$, $k$ integer,  satisfy the shadow metric condition
%(\ref{eq:shadowdef}). 
%Suppose $(\bme, \rme, \rKtr)$ satisfy the shadow metric condition. 
Let $(\bme,\rme,\rKtr)$ be as in Lemma \ref{lem:D2E}. 
There is a $\delta > 0$
sufficiently small, and a constant $C > 0$ so that if $||\rme -
\bme||_{H^{s}} \leq \delta$, 
$$
||\Ppara_{\bme} (\rme - \bme) ||_{H^s} 
\leq C ( ||\Pperp_{\bme}(\rme - \bme)||_{H^s}^2 + || \rKtr ||_{H^{s-1}}^2 )
\, .
$$
\end{lemma}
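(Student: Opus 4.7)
The plan is to extract the estimate from the shadow metric condition combined with the parametrization of $\SliceCC{}_{,\bme}$ recalled in section \ref{sec:slice}. First I would invoke \cite[Lemma 2.3]{AMF}, applied in the present setting, to write
$$
\rme = \bme + u^{\TT} + z, \qquad \rKtr = v^{\TT} + w,
$$
where $u^{\TT}, v^{\TT}$ are $\TT$ with respect to $\bme$, the tensors $z, w$ are $L^2;\bme$-orthogonal to the space of $\TT$-tensors, and the quadratic estimate
$$
\|z\|_{H^{s}} + \|w\|_{H^{s-1}} \leq C \bigl(\|u^{\TT}\|_{H^{s}}^{2} + \|v^{\TT}\|_{H^{s-1}}^{2}\bigr)
$$
holds. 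Note that $\Ppara_{\bme}(\rme-\bme) = u^{\TTpara}$ and $\Pperp_{\bme}(\rme-\bme) = u^{\TTperp}$, so it suffices to bound $\|u^{\TTpara}\|_{H^{s}}$.

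Next I would feed this decomposition into the shadow metric condition (\ref{eq:shadowdef-first}). Let $\{h^{(\alpha)}\}_{\alpha=1,\dots,m}$ be a basis of $T_{\bme}\NN$, each decomposed as in (\ref{eq:halpha-first}) into
$h^{(\alpha)} = h^{(\alpha)\,\TTpara} + \Lie_{Y^{(\alpha)\,\parallel}}\bme$. The shadow metric condition states $(\rme-\bme, h^{(\alpha)})_{L^{2};\bme} = 0$. Since $u^{\TT}$ is $L^2;\bme$-orthogonal to $\Lie_{Y^{(\alpha)\,\parallel}}\bme$ (the latter lying in the non-$\TT$ part of the splitting (\ref{eq:TTsplit})), and $z$ is $L^2;\bme$-orthogonal to $h^{(\alpha)\,\TTpara}$, the condition reduces to
$$
(u^{\TT},\, h^{(\alpha)\,\TTpara})_{L^{2};\bme} = -(z,\, \Lie_{Y^{(\alpha)\,\parallel}}\bme)_{L^{2};\bme}.
$$
Moreover $u^{\TTperp}$ is $L^2;\bme$-orthogonal to $\ker\LL_{\bme}$ and each $h^{(\alpha)\,\TTpara}\in\ker\LL_{\bme}$, so the left side equals $(u^{\TTpara}, h^{(\alpha)\,\TTpara})_{L^{2};\bme}$.

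Now I would invert the resulting finite-dimensional linear system. By the integrability hypothesis and the discussion in section \ref{sec:slicedeform}, $\{h^{(\alpha)\,\TTpara}\}$ is a basis of $\ker\LL_{\bme}$, so the Gram matrix $G_{\alpha\beta} = (h^{(\alpha)\,\TTpara}, h^{(\beta)\,\TTpara})_{L^{2};\bme}$ is symmetric and positive definite. Expanding $u^{\TTpara} = \sum c^{\alpha} h^{(\alpha)\,\TTpara}$, inverting $G$, and using that $\ker\LL_{\bme}$ is finite dimensional (so all Sobolev norms are equivalent on it), one obtains
$$
\|u^{\TTpara}\|_{H^{s}} \leq C\,\|z\|_{L^{2}} \leq C \bigl(\|u^{\TT}\|_{H^{s}}^{2} + \|v^{\TT}\|_{H^{s-1}}^{2}\bigr) \leq C\bigl(\|u^{\TTpara}\|_{H^{s}}^{2} + \|u^{\TTperp}\|_{H^{s}}^{2} + \|v^{\TT}\|_{H^{s-1}}^{2}\bigr).
$$

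Finally, I would absorb the quadratic term $C\|u^{\TTpara}\|_{H^{s}}^{2} \leq C\delta\|u^{\TTpara}\|_{H^{s}}$ into the left hand side, possible after shrinking $\delta$. To trade $\|v^{\TT}\|_{H^{s-1}}$ for $\|\rKtr\|_{H^{s-1}}$, use $\|v^{\TT}\|_{H^{s-1}} \leq \|\rKtr\|_{H^{s-1}} + \|w\|_{H^{s-1}}$ together with the quadratic bound on $w$, and again absorb the small quadratic remainders. This yields the claimed inequality. The main conceptual step—and the only place where integrability of $\NN$ is essential—is the identification of $\{h^{(\alpha)\,\TTpara}\}$ as a genuine basis for $\ker\LL_{\bme}$, giving invertibility of $G$; the remaining work is bookkeeping of norms and smallness absorption.
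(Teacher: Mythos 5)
Your proposal is correct and follows essentially the same route as the paper: the graph decomposition $\rme-\bme = u^{\TT}+z$, $\rKtr = v^{\TT}+w$ from section \ref{sec:slice}, insertion into the shadow condition (\ref{eq:shadowdef-first}) with the $L^2$-orthogonality facts reducing it to $(u^{\TTpara},h^{(\alpha)\,\TTpara})_{L^2} = -(z,\Lie_{Y^{(\alpha)\parallel}}\bme)_{L^2}$, and the quadratic bound on $z,w$. You merely make explicit what the paper leaves implicit in ``the result follows,'' namely the invertibility of the Gram matrix of $\{h^{(\alpha)\,\TTpara}\}$ (where integrability enters) and the final absorption of small quadratic terms.
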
 
\begin{proof} 
By the analysis in section \ref{sec:slice}, we may write 
\begin{equation}\label{eq:uvgraph} 
\rme - \bme = u^{\TT}  + z, \quad \rKtr = v^{\TT} + w 
\end{equation}
where $u^{\TT}, v^{\TT}$ are $\TT$ tensors with respect to
$\bme$ and $z,w$ are $L^2$ perpendicular to the space of $\TT$ tensors and
satisfy 
$$
||z||_{H^s} + ||w||_{H^{s-1}} \leq C ( ||u^{\TT}||_{H^s}^2 +
||v^{\TT} ||_{H^{s-1}}^2 )
$$ 
for $\rme$ sufficiently close to $\bme$. 
Recall that for any $\TT$ tensor with respect to $\bme$, 
$$
(u^{\TT}, \Lie_Y \bme)_{L^2} = 0 .
$$ 
Using $\Pperp_{\bme}$ we may split 
$u^{\TT}$ $L^2$-orthogonally as $u^{\TTpara} + u^{\TTperp}$. 
Taking equation (\ref{eq:halpha-first}), 
and the just mentioned facts into account 
one finds that equation (\ref{eq:shadowdef}) is equivalent to the
set of $m$ conditions 
\begin{align*} 
0 &= (u^{\TT} + z, h^{(\alpha) \, \TTpara} + \Lie_{Y^{(\alpha)\parallel}}
\bme)_{L^2} \\
&= (u^{\TTpara} , h^{(\alpha) \, \TTpara} )_{L^2} + 
(z, \Lie_{Y^{(\alpha)\parallel}}\bme)_{L^2}, \quad \alpha = 1, \dots , m. 
\end{align*} 
It follows that this relation defines $u^{\TTpara}$ as a smooth
function of $z$, vanishing at $z=0$, and hence in view of  
(\ref{eq:uvgraph}) $z$ is seen to be a function of $u^{\TTperp}, v^{\TT}$. 
Since
$z$ is of at least second order in $u^{\TT}, v^{\TT}$ 
and if the shadow relation
holds, of $u^{\TTperp}$, we have 
$$
||z||_{H^s} + ||w||_{H^{s-1}} \leq C ( ||u^{\TTperp}||_{H^s}^2 + ||v^{\TT}||_{H^{s-1}}^2 ) .
$$
The result follows. 
\end{proof} 

The following result is a direct consequence of Lemmas \ref{lem:D2E} and
\ref{lem:projperp}, and their proofs. 
\begin{thm}\label{thm:Ebound} 
Suppose that $(\bme, \rme, \rKtr)$ satisfy the shadow metric condition, and
let 
$\EEtot{s}$ be the total energy defined in section
\ref{sec:energy} for $s > n/2+1$.
Then there is a $\delta > 0$
and a constant $C > 0$, such that for $(\rme, \rKtr) \in
\BB_{s,\delta}(\bme,0)$, 
%with $\bme$ satisfying the shadow condition
%(\ref{eq:shadowdef}), 
the inequality 
$$
||\rme - \bme||_{H^{s}}^2 + ||\rKtr||_{H^{s-1}}^2 \leq C \EEtot{s}
$$
holds. 
\end{thm}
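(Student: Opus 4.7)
The plan is to combine Lemma \ref{lem:D2E} and Lemma \ref{lem:projperp} in a straightforward way, using the $L^2$-orthogonal decomposition of $\rme-\bme$ from section \ref{sec:slice}. The bound on $\|\rKtr\|_{H^{s-1}}^2$ comes for free from Lemma \ref{lem:D2E}, so the real content is to bootstrap from the $\Pperp_{\bme}$-controlled piece to control the full difference $\rme-\bme$ in $H^s$.

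First, I would write $\rme - \bme = u^{\TT} + z$ as in (\ref{eq:uvgraph}), where $u^{\TT}$ is the $\TT$-part relative to $\bme$ and $z$ is $L^2$-perpendicular to $\TT$-tensors, and recall from section \ref{sec:slice} that $\|z\|_{H^s} \leq C(\|u^{\TT}\|_{H^s}^2 + \|\rKtr\|_{H^{s-1}}^2)$. Next I would split $u^{\TT} = u^{\TTpara} + u^{\TTperp}$ $L^2$-orthogonally, so that $\Pperp_{\bme}(\rme-\bme) = u^{\TTperp}$ and $\Ppara_{\bme}(\rme-\bme) = u^{\TTpara}$ (since $\Ppara_{\bme}, \Pperp_{\bme}$ annihilate the non-$\TT$ piece $z$ when paired properly via the shadow condition). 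By Lemma \ref{lem:projperp} the parallel piece is quadratically small,
\begin{equation*}
\|u^{\TTpara}\|_{H^s} \leq C\bigl(\|u^{\TTperp}\|_{H^s}^2 + \|\rKtr\|_{H^{s-1}}^2\bigr),
\end{equation*}
so for $(\rme,\rKtr) \in \BB_{s,\delta}(\bme,0)$ with $\delta$ small we have $\|u^{\TTpara}\|_{H^s}^2 \leq C\delta^2 \bigl(\|u^{\TTperp}\|_{H^s}^2 + \|\rKtr\|_{H^{s-1}}^2\bigr)$.

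Combining, one obtains
\begin{equation*}
\|\rme-\bme\|_{H^s}^2 \leq 2\|u^{\TTperp}\|_{H^s}^2 + 2\|u^{\TTpara}\|_{H^s}^2 + 2\|z\|_{H^s}^2 \leq C\bigl(\|u^{\TTperp}\|_{H^s}^2 + \|\rKtr\|_{H^{s-1}}^2\bigr),
\end{equation*}
after possibly shrinking $\delta$ so that the quadratic terms on the right can be absorbed into the linear ones. Finally, invoking Lemma \ref{lem:D2E} to estimate $\|u^{\TTperp}\|_{H^s}^2 + \|\rKtr\|_{H^{s-1}}^2 = \|\Pperp_{\bme}(\rme-\bme)\|_{H^s}^2 + \|\rKtr\|_{H^{s-1}}^2 \leq C\,\EEtot{s}$ yields the desired estimate.

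The only subtle point, and the one I would pay most attention to, is making sure $\delta$ is chosen uniformly so that the absorption of the quadratic $\|u^{\TTpara}\|_{H^s}^2$ and $\|z\|_{H^s}^2$ contributions into the linear bound actually works; this amounts to checking that the constants $C$ in Lemma \ref{lem:projperp} and in the graph estimate from section \ref{sec:slice} depend only on $\bme$ (not on the solution), which is clear from the implicit function theorem arguments used to construct the slice and the shadow map. No genuine obstacle arises: the theorem is really a bookkeeping corollary of the two preceding lemmas once the decomposition of $\rme-\bme$ imposed by the shadow metric condition is laid out explicitly.
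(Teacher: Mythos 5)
Your proposal is correct and follows essentially the same route as the paper, which states Theorem \ref{thm:Ebound} as a direct consequence of Lemmas \ref{lem:D2E} and \ref{lem:projperp}; you have simply made explicit the bookkeeping (the decomposition $\rme-\bme = u^{\TTpara}+u^{\TTperp}+z$ from section \ref{sec:slice} and the absorption of the quadratically small pieces for small $\delta$) that the paper leaves implicit.
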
 
%This fact will allows us to control the higher order terms occurring in the
%time derivative of the energy. 

We are now able to state the following version of
the continuation principle.
\begin{cor} \label{cor:continuation} 
Let $(\bme_0, \rme_0, \rKtr_0)$ be an initial data set as in Theorem
\ref{thm:Ebound}, at an initial time $T_0$. 
%, and such that the conclusion of Theorem
%\ref{thm:continuation} holds. 
Let $[T_0, T_+)$ be the maximal existence interval in $H^s$, $s > n/2 + 1$, 
for the rescaled Einstein
  equations   with the shadow metric condition imposed, 
 with initial data $(\bme_0, \rme_0, \rKtr_0)$. 
%Let $E_s$ be the corrected 
%energy
%defined in (\ref{eq:EEtotdef}). 

Then there are numbers $\delta_0 > 0$, 
$\delta > 0$ so that if
$(\rme(T_0),\rKtr(T_0)) \in \BB_{s,\delta_0}(\bme_0,0)$, 
satisfies
$E_s(\rme(T_0),\rKtr(T_0)) < \delta$, then either
$T_+ = \infty$ or there is a finite time $T < \infty$ such that 
$$
E_s(T) \geq \delta .
$$
\end{cor}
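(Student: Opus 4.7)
The plan is to argue by contradiction: suppose $T_+ < \infty$ and $E_s(T) < \delta$ for every $T \in [T_0, T_+)$, and derive a contradiction by extending the solution past $T_+$ via a re-anchored application of Theorem \ref{thm:continuation}.

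First, select $\delta_0 > 0$ small enough that Theorem \ref{thm:continuation}, Theorem \ref{thm:Ebound}, and Lemma \ref{lem:DShade} all apply with uniform constants as the reference background ranges over a fixed compact neighborhood of $\bme_0$ in $\NN$. Let $C_1$ be the constant from Theorem \ref{thm:Ebound}, and choose $\delta > 0$ so that $\sqrt{C_1 \delta} < \delta_0/4$. With this choice, wherever the shadow metric $\bme(T)$ is defined on $[T_0, T_+)$, Theorem \ref{thm:Ebound} yields
$$
||\rme(T) - \bme(T)||_{H^s} < \delta_0/4, \qquad ||\rKtr(T)||_{H^{s-1}} < \delta_0/4.
$$

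Second, bound the drift of the shadow metric. By Remark \ref{rem:shade-der}, $\partial_T \bme = D\Shade|_{\rme} \cdot \partial_T \rme$, while the evolution equation (\ref{eq:evolution-tg}) together with Lemma \ref{lem:quad-NX-est} (which bounds $\rLapse/n - 1$ and $\rShift$ by suitable powers of $\sqrt{E_s}$) furnish $||\partial_T \rme||_{H^{s-1}} \leq C \sqrt{E_s}$. Combining with the smoothing property of $\Shade$ from Remark \ref{rem:shade}, which guarantees continuity of $D\Shade$ between any two Sobolev spaces on a bounded region, I obtain
$$
||\partial_T \bme||_{H^s} \leq C_2 \sqrt{\delta},
$$
uniformly on any interval where $\bme$ stays in the fixed compact neighborhood. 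Integrating in time, $||\bme(T) - \bme(T^*)||_{H^s} \leq C_2 \sqrt{\delta}(T - T^*)$ for all $T \geq T^*$ in such an interval.

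Third, re-anchor and conclude. Set $\tau = \delta_0/(4 C_2 \sqrt{\delta}) > 0$ and $T^* = \max(T_0, T_+ - \tau/2)$. Then for every $T \in [T^*, T_+)$,
$$
||\rme(T) - \bme(T^*)||_{H^s} \leq ||\rme(T) - \bme(T)||_{H^s} + ||\bme(T) - \bme(T^*)||_{H^s} < \delta_0/4 + \delta_0/4,
$$
so $(\rme(T), \rKtr(T)) \in \BB_{s,\delta_0/2}(\bme(T^*), 0)$ throughout $[T^*, T_+)$. By uniqueness of the Cauchy problem, the maximal existence interval from $T^*$ with shadow reference $\bme(T^*)$ coincides with $[T^*, T_+)$; Theorem \ref{thm:continuation}, applied with $\bme(T^*)$ in place of $\bme_0$, then forces either $T_+ = \infty$ or the solution to leave $\BB_{s,\delta_0}(\bme(T^*), 0)$ at a finite time in $[T^*, T_+)$. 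The first alternative contradicts the standing hypothesis $T_+ < \infty$, and the second contradicts the containment just established.

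The main obstacle is the drift estimate $||\partial_T \bme||_{H^s} \leq C_2\sqrt{E_s}$: one must trace $\partial_T \bme$ through Remark \ref{rem:shade-der}, exploit the smoothing of $\Shade$, and control $\partial_T \rme$ using the rescaled evolution equations together with the bounds of Lemma \ref{lem:quad-NX-est} on $\rLapse$ and $\rShift$. Once this is in place, the re-anchoring of Theorem \ref{thm:continuation} at a time just before $T_+$ closes the argument cleanly.
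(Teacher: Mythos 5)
Your re-anchoring strategy founders on a confinement issue that the argument never resolves. Every constant you invoke --- $C_1$ from Theorem \ref{thm:Ebound}, the bounds on $\rLapse/n-1$ and $\rShift$ from Lemma \ref{lem:quad-NX-est}, the operator bound on $D\Shade$ from Lemma \ref{lem:DShade}, and above all the radius $\delta_0$ in Theorem \ref{thm:continuation} applied ``with $\bme(T^*)$ in place of $\bme_0$'' (together with integrability of $\NN$ near $\bme(T^*)$) --- is established only for data lying in a small ball about the fixed background $\bme_0$, or, after your uniformization step, for shadow metrics lying in a fixed compact neighborhood $K\subset\NN$ of $\bme_0$. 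But nothing in your hypotheses keeps $\bme(T)$ inside $K$ on $[T_0,T_+)$: the assumption $E_s<\delta$ only gives the drift-\emph{rate} bound $\|\partial_T\bme\|\lesssim\sqrt{\delta}$ (and even that only while the constants remain valid), whereas $T_+-T_0$ is finite but completely unbounded in terms of $\delta_0,\delta$. So the cumulative drift over $[T_0,T^*]$ can be arbitrarily large, $\bme(T^*)$ may lie outside $K$ (or outside the region where the shadow map, the slice and the local existence theory are controlled), and the re-anchored application of Theorem \ref{thm:continuation} at $\bme(T^*)$ is not justified. Choosing $T^*$ close to $T_+$ only handles the drift on the short final interval, not the drift that determines where $\bme(T^*)$ sits. (A secondary, more minor point: applying Theorem \ref{thm:Ebound} along the flow requires $(\rme(T),\rKtr(T))$ to lie a priori in the small ball about the moving $(\bme(T),0)$ in which that theorem holds, so even your first step needs a continuity/bootstrap argument that you do not supply.)

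This is precisely the subtlety the paper does not attempt to absorb into Corollary \ref{cor:continuation}. There the corollary is read off directly from Theorem \ref{thm:continuation} together with Theorem \ref{thm:Ebound}: the dichotomy is deliberately phrased relative to the fixed ball $\BB_{s,\delta_0}(\bme_0,0)$, and the corollary's role is only to reduce global existence to keeping $E_s$ small. The confinement of the shadow metric (and hence of $\rme$ relative to $\bme_0$) is obtained only in Section \ref{sec:global}, where the differential inequality of Theorem \ref{thm:energest} yields exponential decay of $E_s$, so that the total drift, controlled by $\int E_s^{1/2}\,dT$, is finite and small; it is this decay, not mere smallness of the energy, that excludes the scenario your argument cannot rule out. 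If you wish to keep the contradiction/re-anchoring scheme, you must either build smallness of the total drift into a bootstrap (effectively importing the decay estimate) or restrict attention to the interval on which $\bme(T)$ provably stays near $\bme_0$ --- at which point the argument collapses back to the paper's direct combination of Theorems \ref{thm:continuation} and \ref{thm:Ebound}.
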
 
This result reduces the problem of proving global existence for the system 
(\ref{eq:Einrescaled}, \ref{eq:Conrescaled}, \ref{eq:defining-resc}) with
the shadow metric condition (\ref{eq:shadowdef}) imposed to the problem of
proving that the energy $E_s$ stays small, if it is small initially. 
%As we
%shall see, this follows from the nonlinear energy inequality stated in 
%Theorem \ref{thm:energest}. 

\subsection{Time derivative of the energy} 
We now consider the time derivative of the energy. In order to see the
pattern, we do the calculation for the first order energy separately. 
In the situation considered in Theorem \ref{thm:Ebound} and Corollary
\ref{cor:continuation}, we have 
\begin{lemma} 
Suppose that $(\bme, \rme, \rKtr)$ satisfy the shadow metric condition.
There is a $\delta > 0$
such that for $(\rme, \rKtr) \in
\BB_{s,\delta}(\bme,0)$, $s > n/2 + 1$, 
%with $\bme$ satisfying the shadow condition
%(\ref{eq:shadowdef}), 
we have 
$$
\partial_T \EE{1} =  - (n-1) \int_M |v|^2 \mu_{\rme}  + U_1
$$
and 
$$
\partial_T \GG{1} \leq \int_M \la ( -(n-1) v, u \ra + |v|^2 ) \mu_{\rme} 
- n^2 \int_M \la \nLapse  \LL_{\rme,\bme} u,u \ra \mu_{\rme} + V_1 
$$
where 
$$
|U_1| + |V_1| \leq C ( ||\rme-\bme||_{H^s}^3 + ||\rKtr||_{H^{s-1}}^3 ) 
$$
\end{lemma}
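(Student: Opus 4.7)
The plan is to differentiate $\EE{1}$ and $\GG{1}$ under the integral, substitute the scale-invariant evolution equations~\eqref{eq:dTuv} for $\partial_T u$ and $\partial_T v$, and organize the result into the stated principal contributions plus a cubic remainder. The algebraic engine is the self-adjointness of $\LL_{\rme,\bme}$ with respect to $\mu_{\rme}$ (noted at the beginning of Section~\ref{sec:energy}), which for $\EE{1}$ supplies the clean cancellation
\[
-\,n^2 \!\int_M \nLapse\,\la v,\LL_{\rme,\bme} u\ra\,\mu_{\rme}
+ n^2 \!\int_M \la \LL_{\rme,\bme} u, \nLapse v\ra\,\mu_{\rme} = 0,
\]
leaving $-(n-1)\int_M |v|^2\,\mu_{\rme}$ as the principal term. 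For $\GG{1}$, the same substitution produces the principal terms $-(n-1)\int \la v,u\ra\,\mu_{\rme}$ and $-n^2\int\nLapse\la\LL u,u\ra\,\mu_{\rme}$ from $\partial_T v$, and $\int \nLapse|v|^2\,\mu_{\rme}-\int\la v,h^{\TTpara}\ra\,\mu_{\rme}$ from $\partial_T u$. One then replaces $\int\nLapse|v|^2$ by $\int|v|^2$ at cubic cost, using \eqref{eq:hLapse-est}.

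The error terms in $U_1$ and $V_1$ come in four families, each handled by combining Lemma~\ref{lem:quad-NX-est}, Lemma~\ref{lem:dTuv} and Sobolev multiplication ($s > n/2+1$). First, terms involving $\FF_u,\FF_v$ are cubic by the quadratic estimate~\eqref{eq:FF-est}. Second, the transport terms $\int\la v,\rShift^i\bnabla_i v\ra$, $n^2\int\la\LL u,\rShift^i\bnabla_i u\ra$, etc., are cubic after integration by parts and the bound $\|\rShift\|_{H^{s+1}} \leq C(\|\rme-\bme\|_{H^s}+\|\rKtr\|_{H^{s-1}})$. Third, the coupling $n^2\int_M\la\LL_{\rme,\bme} u,h^{\TTpara}\ra\,\mu_{\rme}$ in $\partial_T \EE{1}$ is rewritten by self-adjointness as $n^2\int\la u,\LL_{\rme,\bme} h^{\TTpara}\ra\,\mu_{\rme}$; since $\LL_{\bme,\bme} h^{\TTpara}=0$ and $\|\LL_{\rme,\bme}-\LL_{\bme,\bme}\|=O(\|\rme-\bme\|_{H^s})$, and $\|h^{\TTpara}\|$ is controlled by $\|\partial_T\rme\|\leq C(\|u\|_{H^s}+\|v\|_{H^{s-1}})$ via the smoothing property of $\Shade$ (Remark~\ref{rem:shade}), the term is cubic. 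Fourth, the corrections from $\partial_T\mu_{\rme}$, the $\partial_T \bme$-dependence of the fiber metric $\la\cdot,\cdot\ra$, and $\partial_T\LL_{\rme,\bme}$ are all estimated by $\|\partial_T\rme\|_{H^s}\cdot(\|u\|_{H^s}^2+\|v\|_{H^{s-1}}^2)$, hence cubic.

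The remaining and delicate term is $-\int_M\la v,h^{\TTpara}\ra\,\mu_{\rme}$ appearing in $\partial_T\GG{1}$. Since $h^{\TTpara}$ and $v$ are both only first-order small, this quantity is a priori merely quadratic, so it cannot be bounded in absolute value by a cubic. To handle it, I exploit the shadow metric condition. By the argument in the proof of Lemma~\ref{lem:projperp}, $\Ppara u = u^{\TTpara}$ is quadratic in $u^{\TTperp}, v^{\TT}$; differentiating in $T$, both $\partial_T(\Ppara u)$ and $(\partial_T \Ppara)u$ are quadratic in the small quantities. Projecting the evolution equation $\partial_T u = \nLapse v - h^{\TTpara} - \rShift^i\bnabla_i u + \FF_u$ onto $\ker\LL_{\bme,\bme}$ and comparing yields the key identity
\[
v^{\TTpara} - h^{\TTpara} = O\bigl(\|u\|_{H^s}^2+\|\rKtr\|_{H^{s-1}}^2\bigr).
\]
Using that $h^{\TTpara}$ is $L^2(\mu_{\bme})$-orthogonal to $v^{\TTperp}$ and that $v - v^{\TT}$ is quadratic (from the momentum constraint), and accounting for the difference between $\mu_{\rme}$ and $\mu_{\bme}$ at cubic cost, one obtains
\[
\int_M\la v,h^{\TTpara}\ra\,\mu_{\rme} = \int_M |h^{\TTpara}|^2\,\mu_{\bme} + (\text{cubic}).
\]
Hence $-\int_M\la v,h^{\TTpara}\ra\,\mu_{\rme} \leq (\text{cubic})$, which is precisely the slack built into the \emph{inequality} in the statement: the non-positive leading quadratic $-\int|h^{\TTpara}|^2$ is dropped, yielding a cubic $V_1$.

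The main obstacle is precisely this last sign analysis for $\int\la v,h^{\TTpara}\ra$: everything else is routine bookkeeping via Lemmas~\ref{lem:quad-NX-est} and~\ref{lem:dTuv} and Sobolev multiplication, but isolating the sign of this quadratic term requires the identity $v^{\TTpara} \approx h^{\TTpara}$ derived from the $\Ppara$-projection of the $u$-equation under the shadow metric gauge, and this is the place where the gauge condition plays an essential role in the energy estimate.
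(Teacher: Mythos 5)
Your proposal is correct and follows essentially the same route as the paper: differentiate $\EE{1}$, $\GG{1}$, substitute (\ref{eq:dTuv}), use self-adjointness of $\LL_{\rme,\bme}$ together with $\LL_{\bme,\bme}h^{\TTpara}=0$ and Lemma \ref{lem:quad-NX-est} to relegate everything but the stated principal terms to cubic remainders, and treat the borderline quadratic term $-\int\la v,h^{\TTpara}\ra\mu_{\rme}$ via the shadow gauge, which yields $h^{\TTpara}=\Ppara v+\text{(second order)}$ and hence a sign-definite $-\int|h^{\TTpara}|^2$ that may be dropped in the inequality. The only cosmetic difference is that you obtain this key identity by projecting the $u$-equation onto $\ker\LL$ and invoking Lemma \ref{lem:projperp}, whereas the paper time-differentiates the orthogonality condition (\ref{eq:shadowdef}) directly; the two derivations are equivalent uses of the same gauge condition.
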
 
\begin{proof} 
We have 
\begin{align*} 
\partial_T \EE{1} &= 
\int_M \la v, -(n-1) v \ra \mu_{\rme} - n^2 \int_M \la
\LL_{\rme,\bme} u, h^{\TTpara} \ra \mu_{\rme} \\
&\quad - \int_M \la v , X^i \bnabla_i v \ra \mu_{\rme} - n^2 \int_M \la
\LL_{\rme,\bme} u , X^i \bnabla_i u \ra \mu_{\rme} + R_1
\end{align*} 
where $R_1$ is third order. In particular, using the estimates for
$\rShift$ 
we have 
$$
|R_1| \leq C ( ||u||_{H^s}^3 + ||v||_{H^{s-1}}^3) \, .
$$
Further, due to the self-adjointness of $\LL_{\rme,\bme}$ and the 
fact that $\LL_{\bme,\bme} h^{\TTpara} = 0$, we have that 
$$
\left |  \int_M \la
\LL_{\rme,\bme} u, h^{\TTpara} \ra \mu_{\rme} \right | \leq  C ( ||u||_{H^s}^3 +
||v||_{H^{s-1}}^3)
$$
The terms
%In view of the fact that $\bnabla$ is metric with respect to $\bme$,
%we have 
\begin{align*} 
\int_M & \la v ,  X^i \bnabla_i v \ra \mu_{\rme} \\
\int_M & \la \LL_{\rme,\bme} u , X^i \bnabla_i u \ra \mu_{\rme}
\end{align*} 
also clearly 
satisfy a third order estimate.
Next we consider the correction term. We have, proceeding as above after a
direct calculation  
\begin{align*} 
\partial_T \GG{1} &= \int_M ( \la -(n-1) v , u \ra + |v|^2 ) \mu_{\rme} 
 - n^2 \int_M \la \nLapse \LL_{\rme,\bme} u , u \ra \mu_{\rme}  \\
&\quad + \int_M \la v, (\nLapse -1) v - h^{\TTpara} \ra \mu_{\rme} \\
&\quad - \int_M ( \la X^i \bnabla_i v , u\ra + \la v , X^i\bnabla_i u
\ra ) \mu_{\rme} + S_1
\end{align*} 
where $S_1$ is third order.

Time differentiating equation (\ref{eq:shadowdef}) gives, in view of
%(\ref{eq:TNN}),  
(\ref{eq:dTbme-form}),
\begin{align*}
0 &= 
%- 
\int_M (\rme_{ij,T} - \bme_{ij,T} ) \bme^{ik} \bme^{jl}
(h_{kl}^{(\alpha) \, \TTpara} + \Lie_{Y^{(\alpha)\, \parallel}} \bme_{kl})
\mu_\bme + \text{ second order terms} \\ 
&= 
%- 
\int_M (\rme_{ij,T} - h^{\TTpara}_{ij} - \Lie_{Y^\parallel} \bme_{ij} ) \bme^{ik} \bme^{jl}
(h_{kl}^{(\alpha) \, \TTpara} + \Lie_{Y^{(\alpha)\, \parallel}} \bme_{kl})
\mu_\bme + \text{ second order terms} 
\end{align*} 
Here we have made use of the fact that $\dot q^{(\alpha)}$ is of first order. 
By (\ref{eq:hLapse-est}), $\rLapse/n - 1$ is 
second order. 
This gives after using (\ref{eq:evolution-tg}) and simplifying,
\begin{align*} 
0 &= 
%- 
%\int_M (2\rLapse \rKtr_{ij} - h^{\TTpara}_{ij} )  
%\bme^{ik} \bme^{jl}
%(h_{kl}^{(\alpha) \, \TTpara} + \Lie_{Y^{(\alpha)\, \parallel}} \bme_{ij})
%\mu_\bme + \text{ second order terms} \\
%&=  
%- 
\int_M (2n \rKtr_{ij} - h^{\TTpara}_{ij} )  
\bme^{ik} \bme^{jl}
(h_{kl}^{(\alpha) \, \TTpara} + \Lie_{Y^{(\alpha)\, \parallel}} \bme_{kl})
\mu_\bme + \text{ second order terms} 
\end{align*}
We next note that modulo second order terms, 
$\rKtr$ is transverse traceless with respect to $\bme$. This gives 
\begin{align*} 
0 &= 
%- 
\int_M ( 2n \rKtr_{ij} - h^{\TTpara}_{ij} ) h^{(\alpha)\, \TTpara\, ij}
\mu_\bme
+  \text{ second order terms} \\
&= 
%- 
\int_M ( v_{ij} - h^{\TTpara}_{ij} ) h^{(\alpha)\, \TTpara\, ij}
\mu_\bme
+  \text{ second order terms} 
\end{align*} 
We have now proved that
\begin{equation}\label{eq:Pparav}
h^{\TTpara} = \Ppara v + \text{ second order terms} 
\end{equation} 
and hence 
$$
\int_M \la v, h^{\TTpara} \ra \mu_{\rme} = \int_M |v^{\TTpara}|^2 \mu_{\rme} 
+ \text{ third order terms} 
$$
Taking signs into account, we see that 
$$
\int_M \la v, (\nLapse -1) v - h^{\TTpara} \ra \mu_{\rme} 
= - \int_M |h^{\TTpara}|_\bme^2 \mu_\bme + \text{ third order terms}
$$
can be bounded from above by a third order term. The terms involving
$\rShift^i \bnabla_i$ can be handled as above. This completes the proof of the
Lemma. 
\end{proof} 
Since $s > n/2 + 1$ is assumed the standard product estimates,
cf. eg. \cite[section 2]{AML},  allow us to
use the fact that we are in a small data situation and handle 
the higher order terms in the energy in the same way. 
Let $\JJtot{s} = \sum_{1 \leq j \leq s} \JJ{j}$ where 
\begin{align*} 
\JJ{j} &= (\Cenerg - (n-1) + \alpha_+) \int_M \la \LL_{\rme,\bme}^{j-1} v,v \ra
\mu_{\rme} \\ 
&\quad - (\Cenerg - \alpha_+ )n^2 \int_M \la \LL_{\rme,\bme}^j u,u\ra
\mu_{\rme} \\
&\quad 
-\Cenerg ((n-1)-2\alpha_+)  \int_M \la \LL_{\rme,\bme}^{j-1} u, v \ra
\mu_{\rme} 
\end{align*} 
is defined in analogy with (\ref{eq:Jeq}). 
%$$
%J = (\Cenerg - (n-1) + \alpha_+) \dot X^2 
%- (\Cenerg - \alpha_+ )n^2 \lambda X^2 
%-\Cenerg ((n-1)-2\alpha_+)  X \dot X .
%$$
An analysis along the lines of Lemma \ref{lem:D2E}, using the estimate for
the term $J$ from the proof of Lemma \ref{lem:Edefinite}, shows that the
term $\JJtot{s}$ is nonpositive modulo a third order term. This gives 
\begin{lemma} 
Suppose that $(\bme, \rme, \rKtr)$ satisfy the shadow metric condition.
There is a $\delta > 0$
such that for $(\rme, \rKtr) \in
\BB_{s,\delta}(\bme,0)$, $s > n/2 + 1$, 
%with $\bme$ satisfying the shadow condition
%(\ref{eq:shadowdef}), 
we have 
$$
\JJtot{s} \leq C ( ||\rme - \bme||_{H^s}^3 + ||\rKtr||_{H^{s-1}}^3) 
$$
\end{lemma}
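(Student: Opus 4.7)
The strategy is to compare the quadratic form $\JJtot{s}$ to the model expression $J$ analyzed in Lemma \ref{lem:Edefinite} via the spectral decomposition of $\LL_{\bme,\bme}$. First I would replace $\LL_{\rme,\bme}$ by $\LL_{\bme,\bme}$ and $\mu_\rme$ by $\mu_\bme$ in each summand $\JJ{j}$; the replacement contributes errors of cubic order because the difference $\LL_{\rme,\bme} - \LL_{\bme,\bme}$ is built from $\rme^{mn}-\bme^{mn}$ and connection differences, and standard Moser-type product estimates at regularity $s > n/2 + 1$ give the cubic bound once this linear-in-$u$ difference is paired with the quadratic expressions $\la v, \LL^{k}v\ra$ or $\la u, \LL^{k} u\ra$. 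An analogous argument handles iterated powers for $j \geq 2$.

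After this reduction, I would expand $u$ and $v$ in an orthonormal basis of eigentensors of the self-adjoint operator $\LL_{\bme,\bme}$. By Lemma \ref{lem:nonneg} the spectrum is nonnegative, and by the choice (\ref{eq:lambda0def}) every nonzero eigenvalue is bounded below by $\lambda_0$, uniformly for $\bme$ in a small neighborhood of $\bme_0$ in $\NN$ (using continuity of the spectrum and the integrability hypothesis on $\NN$). On each eigenspace with eigenvalue $\mu \geq \lambda_0$, the corresponding block of $\JJ{j}$ factors as $\mu^{j-1}$ times precisely the model quadratic form appearing in (\ref{eq:Jeq}), which is negative definite by Lemma \ref{lem:Edefinite}. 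Hence the $\mu \geq \lambda_0$ contribution to $\JJtot{s}$ is nonpositive.

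The remaining contribution comes from $\ker \LL_{\bme,\bme}$. For $j \geq 2$ the factors $\LL_{\bme,\bme}^{j-1}$ and $\LL_{\bme,\bme}^{j}$ annihilate this subspace, so only $\JJ{1}$ contributes. The surviving $|v^{\parallel}|^2$ term has coefficient $\Cenerg - (n-1) + \alpha_+$, which vanishes in the universal case ($\Cenerg = \alpha_+ = (n-1)/2$) and, as computed in the proof of Lemma \ref{lem:Edefinite}, equals $\frac{n-1}{2}(Y - 1 - \sqrt{1-Y}) < 0$ in the anomalous case; in either case it is nonpositive. The cross term $\int \la u^\parallel, v^\parallel \ra \mu_\bme$ is cubic because Lemma \ref{lem:projperp} bounds $\|\Ppara_\bme (\rme - \bme)\|_{H^s}$ by $\|\Pperp_\bme(\rme-\bme)\|_{H^s}^2 + \|\rKtr\|_{H^{s-1}}^2$, and Young's inequality then converts the product of three norms into a sum of cubes of the shape stated in the lemma.

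The main obstacle is to make the spectral analysis uniform as $\bme$ varies in $\NN$: one must ensure that $\lambda_0$ really is a lower bound on the nonzero spectrum of $\LL_{\bme,\bme}$ throughout a neighborhood of $\bme_0$, and that the spectral projections $\Ppara_\bme, \Pperp_\bme$ depend smoothly enough on $\bme$ for the reduction in Step 1 to produce clean cubic errors. This exploits the integrability of $\NN$ together with the smoothing property of the shadow map noted in remark \ref{rem:shade}. A secondary technical point is the bookkeeping for cubic errors when iterated powers of $\LL_{\rme,\bme}$ are replaced by powers of $\LL_{\bme,\bme}$ in the higher-order summands $\JJ{j}$, which amounts to a telescoping argument on $\LL_{\rme,\bme}^k - \LL_{\bme,\bme}^k$ using the boundedness of the iterates on $H^s$ in the small-data regime.
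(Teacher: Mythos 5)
Your proposal is correct and takes essentially the same route as the paper, whose own proof is a one-line appeal to the spectral analysis of Lemma \ref{lem:D2E} combined with the negativity of the form $J$ from Lemma \ref{lem:Edefinite}: freezing the operator at $\bme$ with cubic errors, decomposing spectrally, using that each nonzero-eigenvalue block is $\mu^{j-1}$ times the model form (\ref{eq:Jeq}), and controlling the kernel block via the shadow metric condition. Your explicit handling of the kernel contribution through Lemma \ref{lem:projperp} (noting that the $\parallel$ cross term is not sign-definite by itself and is only cubic because $\Ppara_{\bme}(\rme-\bme)$ is quadratically small) supplies precisely the detail the paper leaves implicit.
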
 

Putting these results together and using Theorem \ref{thm:Ebound} gives  
\begin{thm} \label{thm:energest} Suppose the assumptions of Corollary
  \ref{cor:continuation} hold. Then, after possibly decreasing $\delta$,
  there is a constant $C$ such that 
\begin{equation}\label{eq:Eineq}
\partial_T \EEtot{s} \leq - 2\alpha_+ \EEtot{s} + 2C \EEtot{s}{}^{3/2}
\end{equation} 
holds if $E_s < \delta$. 
\end{thm}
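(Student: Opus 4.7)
The plan is to differentiate each summand $\EEtot{(m)}$ of $\EEtot{s}$ along the flow, apply the evolution equations from Lemma \ref{lem:dTuv} together with the self-adjointness of $\LL_{\rme,\bme}$, and then reassemble the result so that the leading-order contribution precisely mirrors the damped-oscillator identity underlying Lemma \ref{lem:Edefinite}, with every remaining expression absorbed into a cubic remainder. The argument is an index-$m$ generalization of the computation carried out for $\partial_T \EE{1}$ and $\partial_T \GG{1}$ in the preceding lemma.

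First I would reproduce that computation for each $1 \leq m \leq s$ by inserting the appropriate power of $\LL_{\rme,\bme}$ into the integrals. Using integration by parts, valid since $\LL_{\rme,\bme}$ is self-adjoint with respect to the $\mu_\rme$-inner product, the leading contributions are
\[
\partial_T \EE{m} = -(n-1)\int_M \la v, \LL_{\rme,\bme}^{m-1} v\ra \mu_\rme + U_m,
\]
\[
\partial_T \GG{m} = -(n-1)\int_M \la v, \LL_{\rme,\bme}^{m-1}u\ra\mu_\rme + \int_M \la v, \LL_{\rme,\bme}^{m-1} v\ra\mu_\rme - n^2\int_M \la \nLapse\LL_{\rme,\bme}^m u,u\ra \mu_\rme + V_m,
\]
with $|U_m| + |V_m| \leq C(||\rme-\bme||_{H^s}^3 + ||\rKtr||_{H^{s-1}}^3)$. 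The cubic remainders collect contributions from four sources: (i) the nonlinearities $\FF_u, \FF_v$ of Lemma \ref{lem:dTuv}, which are already quadratic; (ii) the transport terms $\int \la \,\cdot\,,\rShift^i\bnabla_i\,\cdot\,\ra\mu_\rme$, cubic by the shift estimate (\ref{eq:rS-est}); (iii) derivatives $\partial_T \LL_{\rme,\bme}^j$ and $\partial_T \mu_\rme$, which produce $\partial_T\rme$- and $\partial_T\bme$-factors controlled via (\ref{eq:evolution-tg}) and Lemma \ref{lem:quad-NX-est}; and (iv) the projection term $\int \la \LL_{\rme,\bme}^{m-1}u, h^{\TTpara}\ra\mu_\rme$, which by the identity (\ref{eq:Pparav}) arising from time-differentiation of the shadow condition, together with the fact that $\LL_{\bme,\bme}$ annihilates its own kernel, reduces to a cubic error.

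Forming the combination $\partial_T \EEtot{(m)} = \partial_T \EE{m} + \Cenerg\,\partial_T \GG{m}$ and grouping terms exactly as in the derivation of (\ref{eq:Jeq}) gives
\[
\partial_T \EEtot{(m)} = -2\alpha_+ \EEtot{(m)} + \JJ{m} + W_m,
\]
with $|W_m| \leq C(||\rme-\bme||_{H^s}^3 + ||\rKtr||_{H^{s-1}}^3)$; the factor $\nLapse$ present in $\partial_T \GG{m}$ is replaced by $1$ at the cost of a further cubic term, thanks to (\ref{eq:hLapse-est}). Summing over $m$ and invoking the preceding lemma, which bounds $\JJtot{s}$ above by a cubic expression, yields
\[
\partial_T \EEtot{s} \leq -2\alpha_+ \EEtot{s} + C'\bigl(||\rme-\bme||_{H^s}^3 + ||\rKtr||_{H^{s-1}}^3\bigr).
\]
A final application of Theorem \ref{thm:Ebound} bounds the cubic remainder by $2C \EEtot{s}^{3/2}$, giving (\ref{eq:Eineq}) after a mild decrease of $\delta$.

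The main obstacle lies in handling class (iii) rigorously: when $\partial_T$ lands on $\LL_{\rme,\bme}^{m-1}$ for $m \ge 2$, one obtains terms in which derivatives of the time-dependent background $\bme$ and of $\rme$ appear inside the integrand, and each such factor must be identified as a genuine quadratic quantity before the Sobolev product estimates, valid for $s > n/2+1$, can close the estimate. This is precisely where the uniform regularity of metrics in $\NN$ (see section \ref{sec:stabein}) and the smoothing property of the shadow map $\Shade$ from Remark \ref{rem:shade} are essential, and where the integrability hypothesis on the deformation space enters through Lemma \ref{lem:quad-NX-est}.
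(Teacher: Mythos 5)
Your proposal follows the paper's own route: differentiate $\EE{m}$ and $\GG{m}$ using Lemma \ref{lem:dTuv} and the self-adjointness of $\LL_{\rme,\bme}$, regroup the combination $\EE{m} + \Cenerg\,\GG{m}$ into the damped-oscillator form with the quadratic expression $\JJ{m}$, bound $\JJtot{s}$ above by a cubic term, and convert the cubic norm remainder to $\EEtot{s}{}^{3/2}$ via Theorem \ref{thm:Ebound}. One point needs correcting, however: in your item (iv) you claim the $h^{\TTpara}$-pairing terms reduce to cubic errors, and accordingly you write $\partial_T \GG{m}$ as an equality with $|V_m|$ cubic. For $m=1$ the term $-\int_M \la v, h^{\TTpara}\ra\,\mu_\rme$ produced by $\partial_T u$ is genuinely quadratic: by (\ref{eq:Pparav}) it equals $-\int_M |v^{\TTpara}|^2\,\mu_\rme$ up to third-order terms, and $\Ppara v$ is in general of the same order as $v$, so this is not a cubic remainder. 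It can only be discarded because it is nonpositive, which is precisely why the paper states the first-order estimate for $\partial_T \GG{1}$ as an inequality rather than an equality; the analogous terms for $m \geq 2$ are indeed cubic since $\LL_{\bme,\bme}$ annihilates $h^{\TTpara}$. With that sign observation inserted your argument closes and coincides with the paper's proof; your handling of the $u$-paired term $\int_M \la \LL_{\rme,\bme}^{m}u, h^{\TTpara}\ra\,\mu_\rme$, of the transport terms, and of the terms where $\partial_T$ falls on $\LL_{\rme,\bme}^{m-1}$ or $\mu_\rme$ (your class (iii), controlled via the smoothness of metrics in $\NN$, the smoothing property of $\Shade$, Lemma \ref{lem:quad-NX-est} and the product estimates for $s > n/2+1$) is the same as the paper's.
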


\section{Future complete spacetimes} \label{sec:global} 
In this section, we derive some consequences of the results 
we have proved for the rescaled
Einstein equations. 
Let $Y = E_s^{1/2}$ and write $\dot Y =
\partial_T Y$. Then (\ref{eq:Eineq}) takes the form 
$$
\dot Y \leq -\alpha_+ Y + C Y^2
$$
The model equation $\dot y = -\alpha_+ y + C y^2$ with $y(0) = y_0 > 0$  
has the solution 
$$
y = \frac{\alpha_+}{C + e^{\alpha_+(T-T_0)}[\alpha_+/y_0 - C]}
$$
which remains
bounded for $T \geq T_0$ if $y_0^{-1} > \frac{C}{\alpha_+}$. 
Since $\partial_T \rme$ and hence also $\partial_T \bme$ 
is bounded in terms of 
$E_s^{1/2}$,
we see that we can ensure that by starting sufficiently close to data of the
form $(\bme_0,0) \in \NN$ the conditions of Corollary \ref{cor:continuation}
remain satisfied for all $T \geq T_0$. 

We state the conclusion as 
\begin{thm}\label{thm:globalexist} 
Suppose $\bme_0$ has integrable deformation space. 
Then, there is a $\delta_1 > 0$ 
such that for any $(\rme_0, \rKtr_0) \in
\BB_{s,\delta_1}(\bme_0,0)$, with $(\bme_0, \rme_0, \rKtr_0)$ satisfying the
shadow metric condition,  
the Cauchy problem for the 
system (\ref{eq:Einrescaled},
\ref{eq:Conrescaled}, \ref{eq:defining-resc}), with
the shadow metric condition imposed, with initial data
$(\bme_0,\rme_0,\rKtr_0)$, is globally well-posed to the future.  
\end{thm}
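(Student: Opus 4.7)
The plan is to combine the continuation principle of Corollary \ref{cor:continuation} with the differential energy inequality of Theorem \ref{thm:energest} in the standard bootstrap fashion: for sufficiently small initial data the energy $\EEtot{s}$ is trapped below the threshold $\delta$ that appears in the continuation principle, which excludes the finite-time blowup alternative and forces $T_+ = \infty$.

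First I would set $Y(T) = \EEtot{s}(T)^{1/2}$, so that inequality (\ref{eq:Eineq}) becomes a scalar Riccati-type inequality $\dot Y \leq -\alpha_+ Y + CY^2$. The corresponding ODE $\dot y = -\alpha_+ y + Cy^2$ with $y(T_0) = y_0$ has the explicit solution
\[
y(T) = \frac{\alpha_+}{C + e^{\alpha_+(T-T_0)}[\alpha_+/y_0 - C]},
\]
which is globally defined on $[T_0,\infty)$ and in fact decays exponentially with rate $\alpha_+$ provided $y_0 < \alpha_+/C$. A standard comparison argument then shows that $Y(T) \leq y(T)$ as long as $Y$ remains in the regime where (\ref{eq:Eineq}) is valid. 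Hence it suffices to choose $\delta_1$ so small that the initial energy satisfies, say, $\EEtot{s}(T_0)^{1/2} \leq \alpha_+/(2C)$, guaranteeing that $Y(T)$ stays below $Y(T_0)$ for all times at which the estimate applies.

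Next I would verify that the regime of validity of Theorem \ref{thm:energest} is actually preserved by this bound. By Theorem \ref{thm:Ebound}, a small energy bound translates to a small bound on $||\rme-\bme||_{H^s}^2 + ||\rKtr||_{H^{s-1}}^2$, which keeps $(\rme,\rKtr)$ inside a small shadow-gauge ball around the current shadow metric. Because $\partial_T \bme = D\Shade\big|_\rme \partial_T\rme$ (cf.\ remark \ref{rem:shade-der}), and $\partial_T\rme$ is controlled in turn by $E_s^{1/2}$ via the evolution equation (\ref{eq:evolution-tg}) together with the estimates of Lemma \ref{lem:quad-NX-est}, the curve $T\mapsto \bme(T)$ moves slowly along $\NN$ and can be kept in an arbitrarily small neighborhood of $\bme_0$ by further shrinking $\delta_1$. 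In particular, the conclusions of Lemma \ref{lem:DShade} and the hypotheses of Corollary \ref{cor:continuation} persist throughout the evolution, and the triple $(\bme(T),\rme(T),\rKtr(T))$ remains a bona fide shadow-gauge solution.

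With these two ingredients in hand, the finite-time alternative $E_s(T) \geq \delta$ in Corollary \ref{cor:continuation} cannot occur, so $T_+ = \infty$. The main subtlety, which is the only nontrivial point I would need to watch, is the \emph{uniformity} of the constants $\alpha_+$ and $C$ in (\ref{eq:Eineq}) along the evolution: both depend on the moving background $\bme(T)$, and in particular on the spectral gap of $\LL_{\bme(T),\bme(T)}$. This is precisely the motivation for the choice $\lambda_0 = \lamminmod - \eps$ in (\ref{eq:lambda0def}); because $\bme(T)$ stays near $\bme_0$ and the spectrum depends continuously on the background metric, this $\lambda_0$ provides a uniform lower bound on the relevant eigenvalues, and the argument above goes through with constants independent of $T$.
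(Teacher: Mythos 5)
Your proposal is correct and follows essentially the same route as the paper: the paper likewise sets $Y = E_s^{1/2}$, compares with the explicit solution of the model Riccati equation $\dot y = -\alpha_+ y + C y^2$, notes that $\partial_T \rme$ and hence $\partial_T \bme$ are controlled by $E_s^{1/2}$ so the shadow metric stays near $\bme_0$, and concludes via the continuation principle of Corollary \ref{cor:continuation}. Your extra remarks on the uniformity of $\alpha_+$ and $C$ through the choice $\lambda_0 = \lamminmod - \eps$ simply make explicit what the paper arranges in section \ref{sec:energy}.
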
 
 
Next we consider the properties of the spacetimes corresponding to the
solutions constructed in Theorem \ref{thm:globalexist}. 
In order to do this, 
we rephrase the result in terms of the physical Cauchy data. 
Thus, let $(M,\pme_0,\pK_0)$ be CMC vacuum Cauchy data for the
Einstein equations with mean curvature $\tau_0 < 0$, and let the spacetime 
$(\aM, \ame)$ be the maximal Cauchy development
of 
$(M, \pme_0, \pK_0)$. 
Further, suppose that $(M,\pme_0, \pK_0)$ is such that
for the corresponding rescaled data 
$(\rme_0,\rKtr_0)$ 
at scale invariant time
$T_0$, there exists $\bme_0 \in \Ein_{-(n-1)/n^2}$ with integrable
deformation space $\NN$ and such that the triple $(\bme_0, \rme_0, \rKtr_0)$
is initial data for the rescaled Einstein equations, satisfying the shadow
metric condition, with $(\rme_0, \rKtr_0) \in B_{s,\delta_1}(\bme_0, 0)$ for
$\delta_1$ as in Theorem \ref{thm:globalexist}. The following Corollary to
Theorem \ref{thm:globalexist} follows by an argument along the lines of
\cite[\S 6.1]{AMF}. 
\begin{cor} 
Let $(\bme_0, \rme_0, \rKtr_0)$ be as in Theorem \ref{thm:globalexist}, let
$(\pme_0, \pK_0)$ be the corresponding physical Cauchy data, and
let $(\aM, \ame)$ be the maximal Cauchy development of $(\pme_0,
\pK_0)$. Then 
\begin{enumerate} 
\item 
The spacetime $(\aM, \ame)$ is future complete. 
\item 
The spacetime $(\aM, \ame)$ is globally foliated to the future of $(M,
  \pme,\pK)$ by CMC Cauchy surfaces, with mean curvature taking all values in
  $[\tau_0, 0)$. 
\end{enumerate} 
\end{cor}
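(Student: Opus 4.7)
The plan is to transfer the rescaled global existence result of Theorem \ref{thm:globalexist} back to physical variables and read off both assertions. Via $T = -\ln(\tau/\tau_0)$, the rescaled interval $[T_0,\infty)$ corresponds exactly to the physical mean curvature interval $\tau \in [\tau_0, 0)$. Undoing the rescaling of section \ref{sec:scalinvvar} produces a vacuum, globally hyperbolic CMC development of $(\pme_0, \pK_0)$ whose $\tau$-level sets are Cauchy surfaces realizing every value in $[\tau_0, 0)$; by uniqueness of the maximal Cauchy development this embeds isometrically into $(\aM, \ame)$ as an open globally hyperbolic subregion containing $M$.

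For assertion (1), I would first extract uniform $C^0$ bounds. The energy decay from Theorem \ref{thm:energest} together with Theorem \ref{thm:Ebound} gives uniform smallness of $\|\rme - \bme\|_{H^s}$ and $\|\rKtr\|_{H^{s-1}}$ on $[T_0, \infty)$. Lemma \ref{lem:quad-NX-est} then bounds $\rLapse/n - 1$ and $\rShift$ in $H^{s+1}$, and Sobolev embedding (valid since $s > n/2 + 1$) converts these to $C^1$ bounds. In particular $\rLapse$ is bounded above and uniformly bounded away from zero, so the physical lapse satisfies $\pLapse = \rLapse/\tau^2 \geq c\,\tau^{-2}$ for some $c > 0$. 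Writing the spacetime metric in ADM form with $\tau$ as coordinate, the scale-invariant ratio $\rme_{ij}\rShift^i\rShift^j/\rLapse^2$ is uniformly small, hence proper time along any future-directed causal geodesic is bounded below by a positive constant multiple of $\int \tau^{-2}\, d\tau$. This integral diverges as $\tau \to 0^-$, yielding future causal completeness.

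For assertion (2), the rescaled solution exists for all $T \in [T_0,\infty)$, so its image in physical time is a CMC foliation by Cauchy surfaces with mean curvature $\tau \in [\tau_0, 0)$ of a globally hyperbolic vacuum spacetime embedded in $(\aM,\ame)$. To show that the image exhausts the future of $M$ in $(\aM, \ame)$ I would argue as in \cite[\S 6.1]{AMF}: any point in the future of $M$ not covered by the foliation would be reached by a past-directed causal curve exiting the foliated region through its future ``boundary'' at $\tau \to 0^-$; but the completeness estimate from the previous paragraph shows that any such curve would have infinite proper length, contradicting the assumption that it joins two points of the spacetime. Hence the foliation covers the entire future of $M$.

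The main obstacle is the uniform-in-$T$ lapse and shift control in the presence of the time-dependent shadow metric $\bme(T)$. The lapse bound follows readily from (\ref{eq:Lapresc}) together with smallness of $\rKtr$; the shift estimate, however, must absorb the extra term $\rme^{mn}\partial_T\bGamma_{mn}^i$ in (\ref{eq:shiftresc}), and this is precisely the role of Lemma \ref{lem:quad-NX-est} combined with the exponential smallness of $\EEtot{s}$ from Theorem \ref{thm:energest}. Once these uniform-in-$T$ bounds are in place, the translation from rescaled to physical variables and the standard argument that a future-complete CMC slab embedded in the maximal development must exhaust the future are routine.
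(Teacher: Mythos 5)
Your reduction of the corollary to Theorem \ref{thm:globalexist} and the translation between rescaled and physical variables (including $\pLapse=\rLapse\tau^{-2}\approx n\tau^{-2}$ and the divergence of $\int_{\tau_0}^{0}\tau^{-2}\,d\tau$) is the right starting point, and it matches the route the paper takes by invoking \cite[\S 6.1]{AMF}. However, the step on which both of your assertions rest is not correct as stated: smallness of the shift and boundedness of the lapse do \emph{not} give a pointwise lower bound of the form $d\sigma\geq c\,\pLapse\,d\tau$ along causal geodesics. Writing the metric in ADM form, one has $d\sigma^2=\pLapse^2 d\tau^2-\pme_{ij}(\dot x^i+\pShift^i)(\dot x^j+\pShift^j)\,d\tau^2$, and a timelike geodesic may be arbitrarily boosted relative to the CMC foliation, so the bracket can be arbitrarily close to zero even though $|\pShift|^2_{\pme}/\pLapse^2$ is small; controlling the boost is exactly the content of the geodesic estimate in \cite[\S 6.1]{AMF}, where one studies the component of the tangent vector along the unit normal (equivalently $-\la\dot\gamma,\partial_t\ra$) via the geodesic equation and uses the decay of $\rLapse/n-1$, $\rnabla\rLapse$, $\rKtr$ and $\rShift$ furnished by (\ref{eq:energy-decay}) and Lemma \ref{lem:quad-NX-est} to show this quantity stays bounded, whence the affine parameter diverges. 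Your proposal replaces this argument by an inequality that fails, so assertion (1) is not established. Moreover, ``future complete'' here means causal geodesic completeness: null geodesics have zero proper time, so any argument phrased in terms of proper time says nothing about them; you must work with the affine parameter.

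The same flaw propagates into your argument for assertion (2): a causal curve reaching a putative future boundary point of the foliated region need not be timelike, and even for timelike curves the claim that it ``would have infinite proper length'' relies on the unproven lower bound above. The exhaustion of the future of $M$ by the CMC slices requires either the corrected geodesic/affine-parameter estimate or a separate argument (as in \cite[\S 6.1]{AMF}) exploiting global hyperbolicity of the maximal development together with the fact that $\tau\to 0$, i.e.\ physical time $t\to\infty$, on the foliated region. So the overall architecture of your proof is the intended one, but the key analytic lemma --- the estimate along causal geodesics controlling the tangent vector's normal component --- is missing, and without it both parts of the corollary remain unproved.
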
 
It follows from the
energy estimate that
\begin{equation}\label{eq:energy-decay}
||\rme - \bme  ||_{H^s} + ||\rKtr||_{H^{s-1}} \leq C
e^{-\alpha_+ T}
\end{equation}
as $T \to \infty$, for
some constant $C$. 
Recall that $\bme$ is the shadow metric of $\rme$. For the
solution curves $T \mapsto (\bme,\rme,\rKtr)$ to the rescaled Einstein
equations in CMCSH gauge considered in Theorem
\ref{thm:globalexist}, it holds by construction that $\bme$ stays in a
neighborhood of $\bme_0$ in $\NN$. In fact, it holds that $\bme(T)$ tends to
a limit in $\NN$ as $T \to \infty$. To see this we note the following.  
In view of (\ref{eq:dTbme-form}), we may estimate
$\partial_T \bme$ in terms of $h^{\TTpara}$ and $Y^\parallel$. Equation
(\ref{eq:PYi}) gives an estimate for $Y^\parallel$ in terms of
$h^{\TTpara}$. Further, equation (\ref{eq:Pparav}) allows us to estimate 
$h^{\TTpara}$ in terms of $v = 2n\rKtr$ up to terms which are of second order
in small quantities. Thus, the inequality (\ref{eq:energy-decay}) gives the
following corollary to Theorem \ref{thm:globalexist}. 
\begin{cor}
Let $(\bme_0, \rme_0, \rKtr_0)$ be as in Theorem \ref{thm:globalexist}, and
let $T \mapsto (\bme,\rme,\rKtr)$ be the maximal solution to the Cauchy problem 
for the 
system (\ref{eq:Einrescaled},
\ref{eq:Conrescaled}, \ref{eq:defining-resc}), with
the shadow metric condition imposed, with initial data
$(\bme_0,\rme_0,\rKtr_0)$. Then, there is $\bme_* \in \NN$ 
such that $(\bme, \rme,\rKtr) \to (\bme_*, \bme_*, 0)$ as $T \to
\infty$. 
\end{cor}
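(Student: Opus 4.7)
The plan is to combine the exponential decay of the energy with an independent control on $\partial_T \bme$ to show that $\bme(T)$ itself is Cauchy as $T\to\infty$, and then read off convergence of $\rme$ and $\rKtr$ from the triangle inequality.

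First, from Theorem \ref{thm:globalexist} and the energy inequality (\ref{eq:Eineq}), one obtains by a Gronwall-type argument the decay
$$
||\rme - \bme||_{H^s} + ||\rKtr||_{H^{s-1}} \leq C e^{-\alpha_+ T},
$$
which the excerpt records as (\ref{eq:energy-decay}). This immediately gives $\rKtr(T) \to 0$ in $H^{s-1}$ and shows that if $\bme(T)$ converges to some limit $\bme_*$ in a norm at least as strong as $H^s$, then $\rme(T)\to \bme_*$ in $H^s$ as well.

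The main step is therefore to establish convergence of $\bme(T)$ in $\NN$. I would use the decomposition (\ref{eq:dTbme-form}), namely $\partial_T \bme = h^{\TTpara} + \Lie_{Y^\parallel} \bme$. By (\ref{eq:PYi}) the vector field $Y^\parallel$ is controlled by $h^{\TTpara}$ via the isomorphism $P$, so it suffices to bound $h^{\TTpara}$. For this I would invoke (\ref{eq:Pparav}), which says $h^{\TTpara} = \Ppara v + $ (second order in $\rme-\bme,\rKtr$), combined with (\ref{eq:rSY-est}) to handle the remaining shift-related quadratic terms. Putting these together yields an estimate of the form
$$
||\partial_T \bme||_{H^{s}} \leq C\bigl( ||\rKtr||_{H^{s-1}} + ||\rme-\bme||_{H^s}^2 + ||\rKtr||_{H^{s-1}}^2\bigr) \leq C' e^{-\alpha_+ T}.
$$
Since $\alpha_+ > 0$, the right-hand side is integrable on $[T_0,\infty)$, so for any $T_1 < T_2$,
$$
||\bme(T_2) - \bme(T_1)||_{H^{s}} \leq \int_{T_1}^{T_2} ||\partial_T \bme||_{H^s}\, dT \leq \frac{C'}{\alpha_+} e^{-\alpha_+ T_1}.
$$
Thus $T \mapsto \bme(T)$ is Cauchy in $H^s$, hence converges to some $\bme_* \in H^s$. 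Because $\NN$ is a smooth finite-dimensional submanifold (by integrability) and $\bme(T)\in \NN$ for all $T$ with $\bme(T)$ remaining in a compact coordinate patch around $\bme_0$ by the smallness of the solution, the limit $\bme_*$ lies in $\NN$.

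Finally, I conclude by the triangle inequality: $||\rme(T) - \bme_*||_{H^s} \leq ||\rme(T) - \bme(T)||_{H^s} + ||\bme(T) - \bme_*||_{H^s} \to 0$, and similarly $\rKtr(T) \to 0$ in $H^{s-1}$, giving $(\bme,\rme,\rKtr) \to (\bme_*,\bme_*,0)$. The only real subtlety I anticipate is the bookkeeping needed to derive a clean bound for $\partial_T \bme$ of the asserted exponentially decaying form — in particular, verifying that the second-order remainder in (\ref{eq:Pparav}) and the quadratic terms controlling $Y^\parallel$ through (\ref{eq:rSY-est}) are all controlled by $E_s^{1/2}$, so that each contribution decays at least as fast as $e^{-\alpha_+ T}$.
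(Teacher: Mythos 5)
Your proposal is correct and follows essentially the same route as the paper: exponential decay of the energy via (\ref{eq:Eineq}), then control of $\partial_T \bme$ through the decomposition (\ref{eq:dTbme-form}), with $Y^\parallel$ estimated via (\ref{eq:PYi}) and $h^{\TTpara}$ via (\ref{eq:Pparav}), giving integrability of $\|\partial_T\bme\|$ and hence convergence of $\bme(T)$ in $\NN$, after which the triangle inequality yields the stated limit. The extra bookkeeping you flag (the second-order remainders and the use of (\ref{eq:rSY-est})) is exactly what the paper's brief argument implicitly relies on, so there is no gap.
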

This shows that there is a limiting Einstein metric for the rescaled Einstein
flow, and hence motivates the title of the paper. The above result is
completely analogous to the results of 
\cite{andersson:etal:2+1grav,moncrief:teichmuller}, which
imply that in the case of the 2+1 dimensional vacuum Einstein equations, the
rescaled geometry for a CMC foliation converges to a point in 
Teichm\"uller space. Similarly, the conclusion in the work of Moncrief and
Choquet-Bruhat \cite{ChB:moncrief:U(1)}, 
%see also \cite{ChB:U(1):global}, 
is
that the conformal geometry of the CMC foliation of the base of the $\text{U}(1)$
bundle converges to a point in Teichm\"uller space. In the 
higher dimensional situation considered in the present paper, the Einstein moduli space
plays the same role as  Teichm\"uller space. 
 
\appendix

\section{The shadow gauge} \label{sec:shadow:gauge} 
In this section, we let $\NN$ be the deformation space with respect to
$\bme_0$ and assume that $\NN$ is integrable, of dimension $m$. 
We shall consider some aspects of the shadow gauge condition. We
make use of some standard facts from differential topology of infinite
dimensional manifolds. 
%\mnote{LA: A general reference is Lang's Differential manfolds book} 
All spaces we shall deal with can be viewed either as smooth Hilbert manifolds,
modelled on Sobolev spaces $H^s$, or as Frechet manifolds, modelled
on $C^\infty$ viewed as a scale of Sobolev spaces. Further, all maps are
smooth with Fredholm Frechet derivatives. 
%Therefore, the standard results
%apply to the situation under consideration. To keep the exposition somewhat
%readable, we suppress technical details and leave those to the reader. 

Let $V_{\rme,\bme}^i$ be the tension field, defined in (\ref{eq:Vdef}).  Let
$\calX$ be the space of vector fields on $M$. We can view $V$ as a map 
from the cartesian product of the space of metrics $\MM$ with the shadow manifold
$\NN$ to the space of vector fields, 
$$
V : \MM \times \NN \to \calX .
$$
The space of pairs $(\rme,\bme) \in \MM \times \NN$ such that $\rme$ is in
harmonic gauge with respect to $\bme$ is precisely the zero set of this
map. We now shift our attention to the constraint set $\CC$. 
The space $\CC$ is a smooth
submanifold of the space $T^{\tr}\MM$, consisting of 
pairs $(\rme,\rKtr)$ satisfying the constraint
equations (\ref{eq:Conrescaled}). 
By a slight
abuse of notation, we can view $V$ as a map $\CC \times \NN \to \calX$. 
Let $\Shade: \MM \to \NN$ be the shadow map, cf. remark
\ref{rem:shade}. This is a smooth map defined on a neighborhood of
$\NN$. Similarly to above, we can view $\Shade$ as defining a map $\CC \to
\NN$, defined locally near $(\bme_0,0)$. 

We shall now define a map $\hatV$ on a neigborhood of $\NN \times \{0\}
\subset \CC$, in terms of $V, \Shade$. We define 
$$
\hatV: \CC \to \calX, \quad \hatV: (\rme,\rKtr) \mapsto V_{\rme,\Shade(\rme)} .
$$
We now calculate $D\hatV \big{|}_{(\bme_0,0)}$. We have the decomposition 
$$
T_{(\bme_0,0)} \CC = (u^\TT + \Lie_Y \bme_0, v^\TT) .
$$
\begin{lemma} \label{lem:DhatV} 
$$
D \hatV \big{|}_{(\bme_0,0)} . (u^{\TT} + \Lie_Y \bme_0, v^\TT) =
P_{\bme_0,\bme_0} Y ,
$$
where the operator $P_{\bme_0,\bme_0}$ 
is as in (\ref{eq:Pdef}), defined at
  $(\rme,\bme) = (\bme_0, \bme_0)$. 
\end{lemma}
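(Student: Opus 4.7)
The plan is to apply the chain rule to $\hatV(\rme,\rKtr) = V_{\rme,\Shade(\rme)}$ and then use two ingredients already established in the paper: the formula (\ref{eq:DV}) for $D_1 V$, and the identity (\ref{eq:DGammauTT}) stating that $\bme^{mn} D\Gamma^i_{mn}\big{|}_\bme . u^{\TT} = 0$ for any tensor $u^{\TT}$ which is TT with respect to $\bme$. Observe first that $\hatV$ does not depend on $\rKtr$, so the $v^\TT$ component plays no role; the variation reduces to a computation in the $\rme$ variable alone. Since $\Shade(\bme_0) = \bme_0$, the chain rule gives
$$
D\hatV \big{|}_{(\bme_0,0)} . (h, v^\TT)
= D_1 V \big{|}_{(\bme_0,\bme_0)} . h
+ D_2 V \big{|}_{(\bme_0,\bme_0)} . \bigl(D\Shade \big{|}_{\bme_0} . h\bigr),
$$
where $h = u^{\TT} + \Lie_Y \bme_0$ and $D_i$ denotes partial differentiation in the $i$-th slot of $V(\rme,\bme)$.

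Next I would identify $D\Shade \big{|}_{\bme_0}$. Linearizing the shadow condition (\ref{eq:shadowdef-first}) at $\rme=\bme=\bme_0$ (where the term $\rme-\bme$ vanishes), the defining relation becomes precisely the statement that $D\Shade \big{|}_{\bme_0} . h$ is the $L^2_{\bme_0}$-orthogonal projection of $h$ onto $T_{\bme_0} \NN$. By the discussion in section \ref{sec:slicedeform}, equation (\ref{eq:PYi}) at $\bme=\bme_0$ forces $Y^\parallel = 0$, so $T_{\bme_0}\NN = \ker \LL_{\bme_0}$, a subspace of TT tensors with respect to $\bme_0$. Decomposing $u^{\TT} = u^{\TTpara} + u^{\TTperp}$ and noting that both $\Lie_Y \bme_0$ and $u^{\TTperp}$ are $L^2$-orthogonal to $\ker \LL_{\bme_0}$, we conclude that $D\Shade \big{|}_{\bme_0} . h = u^{\TTpara}$.

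Now I would evaluate the two terms. For the second term, the dependence of $V^i = g^{mn}(\Gamma[g]^i_{mn} - \bGamma^i_{mn})$ on $\bme$ is entirely through $-\bGamma^i_{mn}$, so at $(\bme_0,\bme_0)$,
$$
D_2 V \big{|}_{(\bme_0,\bme_0)} . u^{\TTpara}
= -\bme_0^{mn} D\bGamma^i_{mn}\big{|}_{\bme_0} . u^{\TTpara} = 0,
$$
by (\ref{eq:DGammauTT}) since $u^{\TTpara}$ is TT with respect to $\bme_0$. For the first term, apply (\ref{eq:DV}) at $g = \bme_0 \in \Slice_{\bme_0}$: for the decomposition $h = u^{\TT} + \Lie_Y \bme_0$ the function $f$ is zero, and the term $h^{\TT\,mn}(\Gamma[\bme_0]^i_{mn} - \bGamma^i_{mn})$ vanishes identically since both Christoffel symbols are those of $\bme_0$. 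Only the first term $PY^i = P_{\bme_0,\bme_0} Y^i$ survives, yielding the claimed formula.

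There is no serious obstacle here: all the key pieces are already in place. The only conceptual subtlety is recognizing that the shadow map, differentiated at a point of $\NN$, lands inside the TT cone, which is exactly what makes (\ref{eq:DGammauTT}) applicable and kills the contribution through the moving background metric, so that the operator $P_{\bme_0,\bme_0}$ appears in pure form.
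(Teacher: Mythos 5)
Your proposal is correct and takes essentially the same route as the paper's proof: chain rule applied to $\hatV(\rme,\rKtr)=V_{\rme,\Shade(\rme)}$, the identification $D\Shade\big{|}_{\bme_0}.h = u^{\TTpara}$ (which the paper cites from (\ref{eq:dTbme-form}) and (\ref{eq:PYi}) and you derive slightly more explicitly from the linearized shadow condition and the vanishing of $Y^\parallel$ at $\bme_0$), the first term reducing to $P_{\bme_0,\bme_0}Y$ via (\ref{eq:DV}), and the second term annihilated by (\ref{eq:DGammauTT}). No gaps; your extra detail on why $D\Shade\big{|}_{\bme_0}$ is the $L^2$ projection onto $\ker\LL$ is a faithful elaboration of what the paper asserts.
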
 
\begin{proof} 
We consider the tension field $V$ as a map $(\rme,\bme) \to V_{\rme,\bme}$. 
Let $h = u^{\TT} + \Lie_Y \bme_0$, where $u^{\TT}$ is transverse traceless
with respect to $\bme_0$.  It follows from 
(\ref{eq:DV}) that $D_\rme V^i \big{|}_{\bme_0,\bme_0} . h =
P_{\bme_0,\bme_0} Y^i$. Using
this together with 
the fact that $D\Shade \big{|}_{\bme_0} . h = u^{\TTpara}$, 
as follows from equations (\ref{eq:dTbme-form}) and (\ref{eq:PYi}), 
%cf. section
%\ref{sec:slicedeform}, 
we have  
\begin{align*} 
D\hatV^i \big{|}_{(\bme_0,0)}. (h, v^\TT) &=
D_g V^i \big{|}_{\bme_0,\bme_0} . h + 
D_\bme V^i \big{|}_{\bme_0,\bme_0} . D \Shade \big{|}_{\bme_0} . h \\
&= P_{\bme_0,\bme_0} Y^i + \bme_0^{mn} D\Gamma^i_{mn} \big{|}_{\bme_0}
u^{\TTpara} \\
&= P_{\bme_0,\bme_0} Y^i 
\end{align*} 
where in the last step we used 
(\ref{eq:DGammauTT}). 
\end{proof} 
As discussed in section \ref{sec:slice}, the operator $P_{\bme_0,\bme_0}$ is 
an isomorphism. This implies, by an application of the implicit function
theorem, the following corollary. Let $\Slice_{\CC,\NN}^\perp \subset \CC$ be the set of
solutions to $\hatV = 0$, and let $\Slice_{\CC,\bme}^\perp$ be the subset of
$\Slice_{\CC,\NN}^\perp$ consisting of $(\rme, \rKtr) \in \CC$ 
satisfying the shadow gauge
condition with respect to $\bme$. 
\begin{cor} 
There is a neighborhood $\UU$ of $(\bme_0,0) \in \CC$ such that 
%$\Shade: \UU \to \NN$ 
$\hatV: \UU \to \calX$ 
is a submersion and 
$$
\UU \cap \Slice_{\CC,\NN}^\perp
$$
is a submanifold of $\UU$. 
\end{cor}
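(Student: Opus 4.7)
The plan is to deduce the corollary directly from Lemma~\ref{lem:DhatV} together with the fact, recalled in section~\ref{sec:HC}, that the operator $P_{\bme_0,\bme_0}$ is an isomorphism between the appropriate Sobolev spaces.

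First, I would establish surjectivity of $D\hatV\big{|}_{(\bme_0,0)}$. By Lemma~\ref{lem:DhatV}, for any tangent vector $(u^{\TT} + \Lie_Y \bme_0, v^{\TT}) \in T_{(\bme_0,0)}\CC$, we have
\[
D\hatV\big{|}_{(\bme_0,0)} . (u^{\TT} + \Lie_Y \bme_0, v^{\TT}) = P_{\bme_0,\bme_0} Y.
\]
Given an arbitrary target vector field $W \in \calX$, the equation $P_{\bme_0,\bme_0} Y = W$ has a unique solution $Y$, since $P_{\bme_0,\bme_0}$ is an isomorphism at the appropriate Sobolev scale. The tangent vector $(\Lie_Y \bme_0, 0)$ then maps to $W$, so $D\hatV\big{|}_{(\bme_0,0)}$ is surjective. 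At the same time, this identifies the kernel as the transverse-traceless subspace $\{(u^{\TT}, v^{\TT})\}$, which is complemented in $T_{(\bme_0,0)}\CC$ by $\{(\Lie_Y \bme_0, 0) : Y \in \calX\}$, on which $D\hatV\big{|}_{(\bme_0,0)}$ restricts to the isomorphism $P_{\bme_0,\bme_0}$.

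Next, I would extend to a neighborhood. Since $D\hatV$ depends continuously on the base point and admits a bounded right inverse at $(\bme_0, 0)$ (namely the map $W \mapsto (\Lie_Y \bme_0, 0)$ with $Y = P_{\bme_0,\bme_0}^{-1} W$), it continues to admit a bounded right inverse, and hence to be a submersion with split kernel, in a neighborhood $\UU$ of $(\bme_0, 0)$ in $\CC$. This gives the first assertion of the corollary.

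Finally, I would invoke the implicit function theorem for Banach manifolds, working at Sobolev regularity $H^s$ with $s > n/2 + 1$, to conclude that the level set $\hatV^{-1}(0) \cap \UU$ is a smooth submanifold of $\UU$. Since $\hatV(\rme, \rKtr) = 0$ precisely says that $\rme$ is in harmonic gauge with respect to its shadow metric $\Shade(\rme) \in \NN$, i.e.\ that $(\rme, \rKtr)$ satisfies the shadow gauge condition, we obtain $\hatV^{-1}(0) \cap \UU = \UU \cap \Slice_{\CC,\NN}^\perp$, as required. There is no real obstacle here: the only technical point is the standard one of checking that $D\hatV$ is a Banach-space submersion (with closed split range and kernel), which reduces to the isomorphism property of $P_{\bme_0,\bme_0}$ already built into Lemma~\ref{lem:DhatV}.
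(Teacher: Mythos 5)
Your argument is correct and is essentially the paper's own: the paper likewise deduces the corollary from Lemma \ref{lem:DhatV} together with the isomorphism property of $P_{\bme_0,\bme_0}$ (giving surjectivity of $D\hatV$ with kernel the $\TT$ subspace, split by $\{(\Lie_Y\bme_0,0)\}$) and an application of the implicit function theorem to $\hatV^{-1}(0)=\Slice_{\CC,\NN}^\perp$. Your additional remarks on the explicit right inverse and the persistence of the submersion property on a neighborhood simply spell out steps the paper leaves implicit.
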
 
%Locally near $(\bme_0, 0)$,
%$\Slice_{\CC,\NN}^\perp$ fibers over $\NN$, with fiber at $\bme$ given by
%$\Slice_{\CC,\bme}^\perp$. 
Next we consider the action of the diffeomorphism group $\DD$.
% For the
%action of $\DD$ on the space of metrics $\MM$, we have that $\MM$ is locally dif
%$\MM \cong \Slice_\rme \times \DD$. 
In the applications to $\CC$,
it is sufficient to work in a neighborhood of $(\bme_0,0)$. 
%We consider the
%action of $\DD$ on $\Slice_{\CC,\NN}^\perp$. 
\begin{lemma} \label{lem:TSlice} 
$$
T_{(\bme_0,0)} \Slice_{\CC,\NN}^\perp = T_{(\bme_0,0)} \Slice_{\CC,\bme_0} .
$$
\end{lemma}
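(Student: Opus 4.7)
The plan is to compute both tangent spaces explicitly, show each one consists exactly of pairs $(u^{\TT}, v^{\TT})$ of tensors which are TT with respect to $\bme_0$, and conclude equality. The central tool is that the operator $P_{\bme_0,\bme_0}$ from (\ref{eq:Pdef}) is an isomorphism.

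First I would recall from section \ref{sec:slice} that a general element of $T_{(\bme_0,0)}\CC$ admits the decomposition $(h,k) = (u^{\TT} + \Lie_Y\bme_0, v^{\TT})$, where $u^{\TT}, v^{\TT}$ are TT with respect to $\bme_0$ and $Y$ is a vector field on $M$. This is the starting point for both computations.

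Next I would compute $T_{(\bme_0,0)}\Slice_{\CC,\NN}^\perp$ as the kernel of $D\hatV\big{|}_{(\bme_0,0)}$. By Lemma \ref{lem:DhatV} we have
$$D\hatV\big{|}_{(\bme_0,0)}.(u^{\TT} + \Lie_Y\bme_0, v^{\TT}) = P_{\bme_0,\bme_0} Y,$$
and since $P_{\bme_0,\bme_0}$ is an isomorphism (as discussed in section \ref{sec:HC}), the vanishing of the right-hand side forces $Y = 0$. Therefore $T_{(\bme_0,0)}\Slice_{\CC,\NN}^\perp$ is precisely the space of TT pairs $(u^{\TT}, v^{\TT})$ with respect to $\bme_0$.

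Then I would carry out the analogous computation for $\Slice_{\CC,\bme_0}$, which is the zero set of the map $(\rme,\rKtr) \mapsto V_{\rme,\bme_0}$ on $\CC$. Using formula (\ref{eq:DV}) at $g = \bme_0$ with $f = 0$ (there is no trace part in the decomposition of $h$) and noting that the bracket $\Gamma[\bme_0] - \bGamma = 0$ vanishes at the base point, we get
$$D_\rme V_{\cdot,\bme_0}\big{|}_{\bme_0}.(u^{\TT} + \Lie_Y\bme_0) = P_{\bme_0,\bme_0} Y,$$
so again $Y = 0$ and $T_{(\bme_0,0)}\Slice_{\CC,\bme_0}$ is exactly the space of TT pairs. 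Comparing the two characterisations gives the equality claimed in the lemma. The only subtle point is verifying that the bracketed diffeomorphism term in (\ref{eq:DV}) drops out at $\rme = \bme = \bme_0$ and that no $f$-dependent contribution survives in the $\CC$-tangent directions, but both are immediate from the shape of the tangent space to $\CC$.
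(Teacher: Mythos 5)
Your proposal is correct and follows essentially the paper's own route: both tangent spaces are identified with the space of pairs $(u^{\TT},v^{\TT})$ of tensors transverse-traceless with respect to $\bme_0$, using the decomposition of $T_{(\bme_0,0)}\CC$, Lemma \ref{lem:DhatV}, and the fact that $P_{\bme_0,\bme_0}$ is an isomorphism. The only difference is cosmetic: where the paper simply quotes the characterization of $T_{(\bme_0,0)}\Slice_{\CC,\bme_0}$ from section \ref{sec:slice}, you re-derive it from (\ref{eq:DV}) at $\rme=\bme=\bme_0$, which is a harmless expansion of the same argument.
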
 
\begin{proof} 
Recall from section \ref{sec:slice} that 
$T_{(\bme_0,0)} \Slice_{\CC,\bme_0}$ is the space  
$\{ (u^\TT, v^\TT)\}$ where $u^\TT, v^\TT$ are 
$\TT$ tensors with respect to $\bme_0$. 
This shows, in view of the proof of lemma \ref{lem:DhatV},   
that $\ker D \hatV \big{|}_{(\bme_0,0)} = T_{(\bme_0,0)} \Slice_{\CC,\bme_0}$, and hence
$T_{(\bme_0,0)} \Slice_{\CC,\NN}^\perp = T_{(\bme_0,0)} \Slice_{\CC,\bme_0}$
as claimed. 
\end{proof} 
%Due to lemma \ref{lem:TSlice}, it follows that we can analyze the action of
%$\DD$ by relying on the standard splitting theorem for the action of $\DD$ on
%$\MM$. We have 
Taking lemma \ref{lem:TSlice} into account, we can now complete the analysis
of the shadow gauge condition. 
\begin{prop} \label{prop:A4} 
There is a neighborhoor $\UU$ of $(\bme_0, 0) \in \CC$ such that
  for $(\rme,\rKtr) \in \UU$, there is a unique $\phi \in \DD$, so that 
$$
(\phi^* \rme, \phi^* \rKtr) \in \Slice_{\CC,\NN}^\perp .
$$
In particular, for $(\rme,\rKtr) \in \UU$, there are unique $\phi \in \DD$,
$\bme \in \NN$, such that $(\phi^* \rme, \phi^* \rKtr) \in
\Slice_{\CC,\bme}^\perp$, i.e. such that $(\phi^* \rme, \phi^* \rKtr, \bme)$
satisfy the shadow gauge condition. 
\end{prop}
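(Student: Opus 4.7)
The plan is to apply the inverse function theorem to the composition of the diffeomorphism action with $\hatV$. For fixed $(\rme, \rKtr) \in \CC$ close to $(\bme_0, 0)$, define
\begin{equation*}
G_{(\rme,\rKtr)} : \DD \to \calX, \qquad \phi \mapsto \hatV(\phi^*\rme, \phi^*\rKtr) = V_{\phi^*\rme,\, \Shade(\phi^*\rme)}.
\end{equation*}
Then $(\phi^*\rme, \phi^*\rKtr) \in \Slice_{\CC,\NN}^\perp$ if and only if $G_{(\rme,\rKtr)}(\phi) = 0$. At $(\rme,\rKtr) = (\bme_0, 0)$ we have $G_{(\bme_0, 0)}(\id) = V_{\bme_0, \bme_0} = 0$, so the parameterized inverse function theorem applied to the joint map $(\phi, (\rme, \rKtr)) \mapsto G_{(\rme, \rKtr)}(\phi)$ will yield, for each $(\rme, \rKtr)$ sufficiently near $(\bme_0, 0)$, a unique $\phi \in \DD$ near $\id$ with $G_{(\rme, \rKtr)}(\phi) = 0$.

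The key hypothesis to verify is that the partial Fr\'echet derivative $D_\phi G|_{(\id, (\bme_0, 0))}$ is a linear isomorphism. By the chain rule,
\begin{equation*}
D_\phi G|_{(\id, (\bme_0, 0))}(\xi) = D\hatV|_{(\bme_0, 0)}(\Lie_\xi \bme_0, 0)
\end{equation*}
for $\xi \in T_{\id}\DD = \calX$. Applying Lemma \ref{lem:DhatV} with $u^\TT = 0$, $v^\TT = 0$, and $Y = \xi$, this equals $P_{\bme_0, \bme_0} \xi$. By the discussion around equation \eqref{eq:Prmebme} in section \ref{sec:HC}, and since $\bme_0 \in \Ein_\alpha$ with $\alpha < 0$, the operator $P_{\bme_0, \bme_0} : H^{s+1} \to H^{s-1}$ is a topological isomorphism. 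Hence the inverse function theorem produces a neighborhood $\UU$ of $(\bme_0, 0)$ in $\CC$ and a smooth map $(\rme, \rKtr) \mapsto \phi(\rme, \rKtr) \in \DD$, with $\phi(\bme_0, 0) = \id$, such that $(\phi^*\rme, \phi^*\rKtr) \in \Slice_{\CC, \NN}^\perp$, and with $\phi$ uniquely determined in a neighborhood of $\id$.

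For the second assertion, set $\bme = \Shade(\phi^*\rme) \in \NN$. The condition $\hatV(\phi^*\rme, \phi^*\rKtr) = 0$ is by construction $V_{\phi^*\rme,\, \bme} = 0$, i.e.\ $\phi^*\rme$ is in CMCSH gauge with respect to $\bme$, while the $L^2$-perpendicularity of $\phi^*\rme - \bme$ to $T_\bme \NN$ is exactly the defining property of $\Shade$; Lemma \ref{lem:shadow-exist-met} then guarantees that $\bme$ is the unique shadow metric of $\phi^*\rme$. The uniqueness of the pair $(\phi, \bme)$ follows: $\phi$ is unique near $\id$ by the inverse function theorem, and $\bme$ is uniquely determined by $\phi^*\rme$. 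The main (mild) obstacle is the computation of $D_\phi G$ via Lemma \ref{lem:DhatV}, which crucially depends on $D\Shade|_{\bme_0}$ annihilating the pure-gauge direction $\Lie_\xi \bme_0$; this in turn relies on the $L^2$-orthogonality of Lie-derivative directions to $T_{\bme_0}\NN \subset \ker \LL$.
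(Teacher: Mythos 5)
Your proposal is correct and rests on the same two pillars as the paper's own argument (Lemma \ref{lem:DhatV} and the isomorphism property of $P_{\bme_0,\bme_0}$ from section \ref{sec:HC}), but it is organized differently. The paper considers the product map $L:\DD\times\Slice_{\CC,\NN}^\perp\to\CC$, $L(\phi,\rme,\rKtr)=(\phi^*\rme,\phi^*\rKtr)$, and shows $DL$ at $(\Id,(\bme_0,0))$ is an isomorphism by combining the splitting $T_{(\bme_0,0)}\CC=\{(u^{\TT}+\Lie_X\bme_0,v^{\TT})\}$ with the identification $T_{(\bme_0,0)}\Slice_{\CC,\NN}^\perp=\{(u^{\TT},v^{\TT})\}$ (Lemma \ref{lem:TSlice}, which itself uses $\ker D\hatV$); this requires first knowing that $\Slice_{\CC,\NN}^\perp$ is a submanifold (the submersion corollary). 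You instead fix $(\rme,\rKtr)$ as a parameter and solve $\hatV(\phi^*\rme,\phi^*\rKtr)=0$ for $\phi$ by the implicit function theorem, which needs only the partial derivative in the $\DD$-direction, $\xi\mapsto D\hatV|_{(\bme_0,0)}(\Lie_\xi\bme_0,0)=P_{\bme_0,\bme_0}\xi$, to be invertible; you thereby bypass Lemma \ref{lem:TSlice} and the submanifold structure of $\Slice_{\CC,\NN}^\perp$ altogether, and the identification of the shadow metric as $\Shade(\phi^*\rme)$ together with Lemma \ref{lem:shadow-exist-met} handles the second assertion cleanly. Two caveats, both shared with the paper rather than specific to your argument: the smoothness of $(\phi,\rme,\rKtr)\mapsto(\phi^*\rme,\phi^*\rKtr)$ in Sobolev categories involves the usual loss-of-derivatives issue, which the paper dispatches by its blanket appeal to the Hilbert/Fr\'echet manifold framework at the start of appendix \ref{sec:shadow:gauge}, and your statement that $P_{\bme_0,\bme_0}:H^{s+1}\to H^{s-1}$ is an isomorphism should be read in that same framework; and the uniqueness you obtain for $\phi$ is only local (near $\Id$), which is exactly the uniqueness the paper's inverse-function-theorem argument delivers as well.
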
 
\begin{proof} 
We show that the map 
$
L: \DD \times \Slice_{\CC,\NN}^\perp \to \CC ,
$
defined by 
$$
L(\phi,\rme,\rKtr) = (\phi^* \rme, \phi^* \rKtr)
$$
is a diffeomorphism locally at $(\Id, (\bme_0, 0))$, where $\Id$ denotes
the identity in $\DD$. 
Write a general element in $T_{\Id} \DD$ as $X$, and a general element
of $T_{(\bme_0,0)} \Slice_{\CC,\NN}^\perp$ as $(u^\TT,v^\TT)$. We have 
$$
D L\big{|}_{(\Id,(\bme_0,0))} = ( u^\TT + \Lie_X \bme_0, v^\TT) .
$$
Recalling that $T_{(\bme_0,0)} \CC$ is spanned by tensors of the form 
$(u^\TT + \Lie_X \bme_0$, $v^\TT)$, we see that 
the Frechet derivative 
$DL$ is an isomorphism 
$$
DL\big{|}_{(\Id,(\bme_0,0))} : T_{\Id} \DD \times 
T_{(\bme_0,0)}\Slice_{\CC,\NN}^\perp  \to T_{(\bme_0,0)}\CC,
$$
and the proposition follows. 
\end{proof} 
Proposition \ref{prop:A4} establishes the validity of Lemma 
\ref{lem:shadow-exist-met}. 

%\bibliographystyle{amsplain}
%\bibliography{glob}

\providecommand{\bysame}{\leavevmode\hbox to3em{\hrulefill}\thinspace}
\providecommand{\MR}{\relax\ifhmode\unskip\space\fi MR }
% \MRhref is called by the amsart/book/proc definition of \MR.
\providecommand{\MRhref}[2]{%
  \href{http://www.ams.org/mathscinet-getitem?mr=#1}{#2}
}
\providecommand{\href}[2]{#2}

\end{document}